\newtheorem{thm}{Theorem}
\theoremstyle{definition}
\newtheorem{definition}{Definition}[section]
\def\thickhline{%
  \noalign{\ifnum0=`}\fi\hrule \@height \thickarrayrulewidth \futurelet
   \reserved@a\@xthickhline}
\def\@xthickhline{\ifx\reserved@a\thickhline
               \vskip\doublerulesep
               \vskip-\thickarrayrulewidth
             \fi
      \ifnum0=`{\fi}}
\newlength{\thickarrayrulewidth}
\lstdefinestyle{sparql}{%
    morekeywords={SELECT,OPTIONAL,FROM,DISTINCT,a,WHERE,FILTER,GROUP,ORDER,LIMIT,BY,IN,AS},
    emph={r,pub,aairObject,verb,person,bday,s,p,o}
}
\lstdefinestyle{turtle}{%
    morekeywords={a, @prefix},
    morecomment=[s][\textrm]{<}{>},
    morecomment=[s][\textit]{"}{"},
}
\lstdefinestyle{xml}{%
    morekeywords={a, @prefix},
    morecomment=[s][\textrm]{<}{>},
    morecomment=[s][\textit]{"}{"},
    keywordstyle=\color{cyan},
}
\lstdefinelanguage{XML}
{
  stringstyle=\color{black},
  identifierstyle=\color{darkblue},
  keywordstyle=\color{cyan},
}
\begin{document}

\title{Compact Representations for Efficient Storage \\of Semantic Sensor Data}

\author{\authorname{Farah Karim\sup{1}\sup{2}, Maria-Esther Vidal\sup{1} and S{\"o}ren Auer\sup{1}}
\affiliation{\sup{1}Leibniz University of Hannover,
              Welfengarten 1B,  30167 Hannover, Germany}
\affiliation{\sup{2}Mirpur University of Science and Technology (MUST), Mirpur-10250 (AJK), Pakistan}
\email{\{karim, vidal, auer\}@l3s.de}
}

\keywords{Sensor Data, Data Factorization, and Query Execution.}

\abstract{Nowadays, there is a rapid increase in the number of sensor data generated by a wide variety of sensors and devices.
Data semantics facilitate information exchange, adaptability, and interoperability among several sensors and devices.
Sensor data and their meaning can be described using ontologies, e.g., the Semantic Sensor Network (SSN) Ontology.
Notwithstanding, semantically enriched, the size of semantic sensor data is substantially larger than raw sensor data. Moreover, some measurement values can be observed by sensors several times, and a huge number of repeated facts about sensor data can be produced.
We propose a \textit{compact} or \textit{factorized} representation of semantic sensor data, where repeated measurement values are described only once. 
Furthermore, these compact representations are able to enhance the storage and processing of semantic sensor data.
To scale up to large datasets, factorization based,  tabular representations are exploited to store and manage factorized semantic sensor data using Big Data technologies. We empirically study the effectiveness of a semantic sensor's proposed compact representations and their impact on query processing.
Additionally, we evaluate the effects of storing the proposed representations on diverse RDF implementations. 
Results suggest that the proposed compact representations empower the storage and query processing of sensor data over diverse RDF implementations, and up to two orders of magnitude can reduce query execution time.
}

\onecolumn \maketitle \normalsize \vfill

\section{\uppercase{Introduction}}
Internet of Things (IoT), cyber-physical systems, and sensor data applications are of paramount importance in our increasingly data-centric society and receive growing attention from the research community. RDF representations of IoT data are being generated~\cite{gaur2015smart,jabbar2017semantic} to add semantics to the data and turn the data into meaningful actions. 
% for providing the IoT applications with new capabilities, facilitate knowledge sharing and exchange, and richer experiences. 
The Semantic Sensor Network (SSN) Ontology~\cite{compton2012ssn} is a W3C standard to describe the sensor data, refer as semantic sensor data. The SSN Ontology consists of several classes and corresponding properties to describe the meaning of sensor data in terms of sensor capabilities, observations, and measured values in an RDF graph.
However, RDF representations generate an enormous amount of data; thus, efficient representations of sensor data are required. Furthermore, several sensor observations with the same measurement values generate RDF data redundancy, e.g., {\tt 13$^\circ$F} temperature observed by several sensors over the different timestamps. These data redundancies negatively impact the size of the semantic sensor data, hence the storage and processing of this data. Therefore, efficient representations of semantic sensor data are required to store and process large amounts of sensor data using different RDF implementations.
Rule-based \cite{joshi2013logical,meier2008towards,pichler2010redundancy} and binary \cite{alvarez2011compressed,bok2019provenance,FernandezMGPA13,pan2014ssp} compression techniques for RDF data effectively reduce the size of the data.
%, but these approaches require data decompression or customized engines to process the compressed data. 
%Furthermore, in order to scale-up to large RDF datasets, existing approaches 
%\cite{du2012hadooprdf,khadilkar2012jena,mami2016towards,nie2012efficient,papailiou2013h,punnoose2012rya,schatzle2012cascading,schatzle2013pigsparql,schatzle2014sempala} exploit 
Distributed and parallel processing frameworks for Big Data are exploited in several approaches~\cite{du2012hadooprdf,khadilkar2012jena,mami2016towards,nie2012efficient,papailiou2013h,punnoose2012rya,schatzle2012cascading,schatzle2013pigsparql}. %However, these approaches can be improved by using the RDF factorized representations. 
Moreover, column-oriented stores \cite{idreos2012monetdb,macnicol2004sybase,stonebraker2005c,zukowski2006super} apply column-wise compression techniques, and improve query performance by projecting the required columns. 
%These column stores are implemented based on the fully decomposed storage model~\cite{copeland1985decomposition} requiring more storage space. 
In the context of query processing, efficient SQL query processing techniques based on the factorization of the data are proposed in \cite{BakibayevKOZ13}. 
Despite these storage and processing techniques, the tremendously growing data requires efficient representations to facilitate the storage and processing.
%We present RDF factorization techniques tailored for the semantic sensor data to facilitate efficient storage and query processing. 
\\
\noindent
\textbf{Our Research Goal:}
We tackle the problem of efficiently representing semantic sensor data described using the SSN ontology. 
Our research goal is to generate compact representations where redundancies are removed. The proposed compact representations enhance the performance of the query engines by scaling up to large sensor data. We aim at determining the impact of the compact representations on semantic sensor data, and the effect of these representations on query processing.
\\
\noindent
\textbf{Approach:}
In this work, we propose the \textit{Compacting Semantic Sensor Data (CSSD)} approach for efficient storage and processing of semantic sensor data.
The \textit{CSSD} approach is based on factorizing the data and storing only a \textit{compact} or \textit{factorized} representation of semantic sensor data, where repeated values are represented only once. 
In addition, universal~\cite{ullman1984principles} and Class Template (CT) based tabular representations leveraging the columnar-oriented \emph{Parquet} storage format are utilized to scale up to even larger RDF datasets.

\begin{figure*}[t!]
\centering
\subfloat[Original {\bf RDF Graph}]{\includegraphics[width=.25\linewidth]{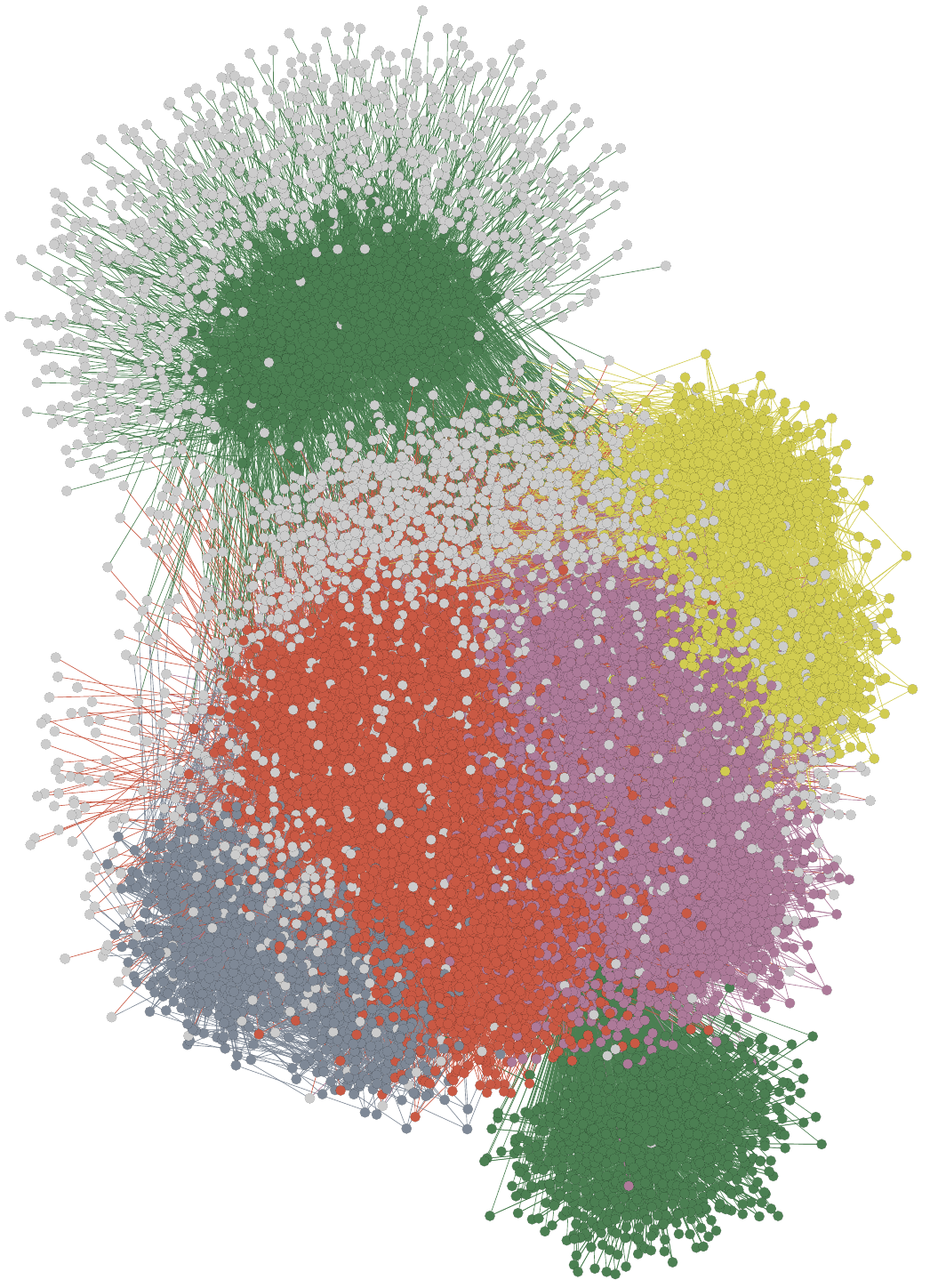}\label{fig:OriginalRDFGraph}} 
\subfloat[{\bf Statistics} of Original RDF Graph]{
	\begin{tabular}[b]{|c|c|c|}
		\hline
       \multirow{1}{*}{\textbf{S\#}} & \multicolumn{1}{c|}{\multirow{1}{*}{\textbf{Parameter}}} & \multirow{1}{*}{\textbf{Value}}  \\ \hline
        \multirow{1}{*}{$1$}&\multirow{1}{*}{Connected Components}&\multicolumn{1}{r|}{\multirow{1}{*}{$1.0$}} \\     \hline
	    \multirow{1}{*}{$2$}&\multirow{1}{*}{Network Centralization} &\multicolumn{1}{r|}{\multirow{1}{*}{$0.3$}} \\   
	       \hline
	    \multirow{1}{*}{\textbf{3}}&\multirow{1}{*}{\textbf{Avg. \# of Neighbors}} &\multicolumn{1}{r|}{\multirow{1}{*}{\textbf{6.4}}} \\     \hline
	    \multirow{1}{*}{$4$}&\multirow{1}{*}{Network Density} &\multicolumn{1}{r|}{\multirow{1}{*}{$0.0$}} \\   
	     \hline
	    \multirow{1}{*}{5}&\multirow{1}{*}{Multi-edge Node Pairs} &\multicolumn{1}{r|}{\multirow{1}{*}{5,149.0}} \\    
    \hline
	    \multirow{1}{*}{$6$}&\multirow{1}{*}{Network Heterogeneity} &\multicolumn{1}{r|}{\multirow{1}{*}{$11.1$}}\\   
	        \hline
	\end{tabular}
\label{fig:statsOriginal}} 
\\
\vspace{5pt}
\subfloat[{\bf NT per Value} vs Total]{\includegraphics[width=0.6\linewidth]{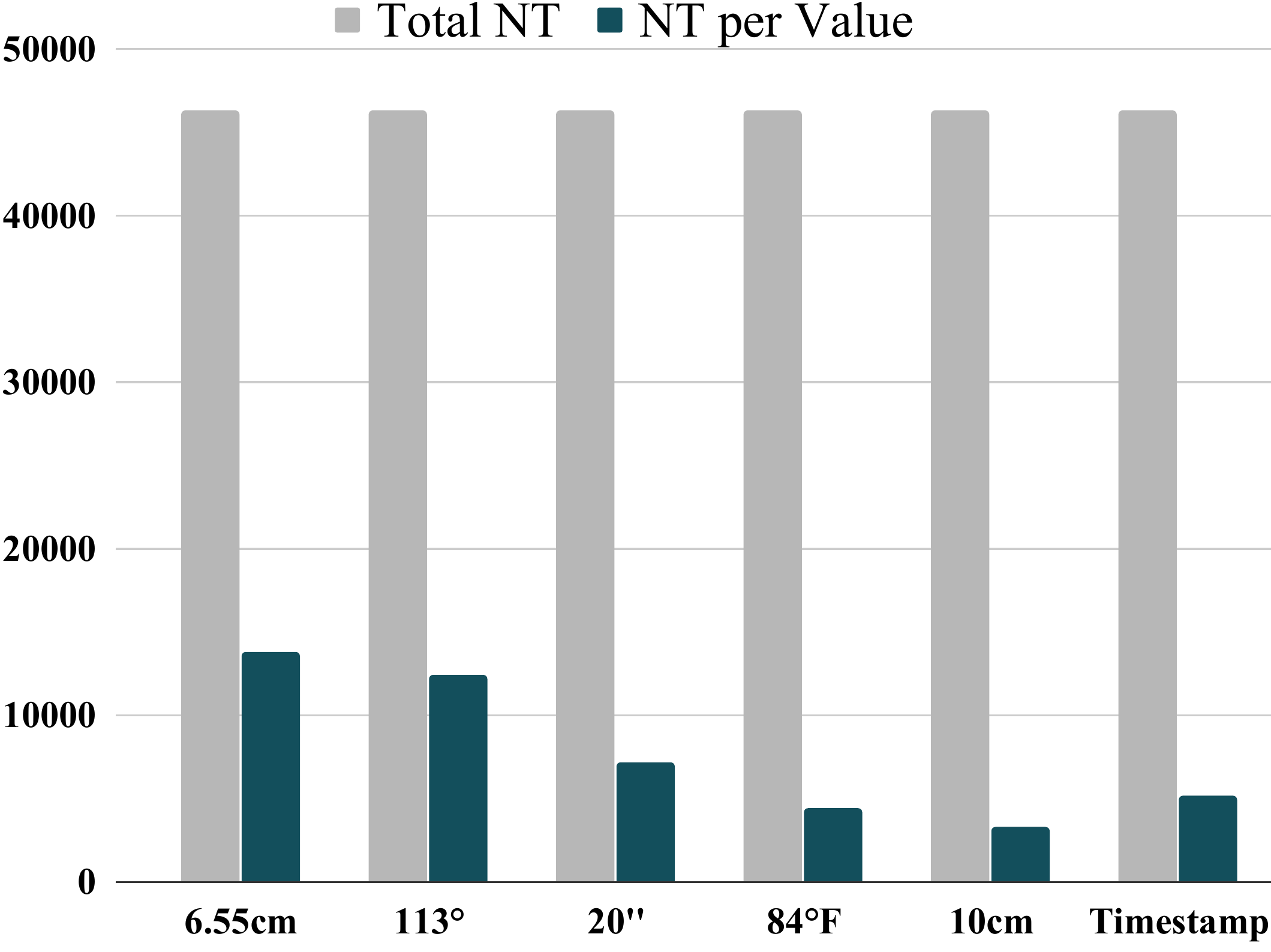}\label{fig:NTTotal}} 
\caption{{\bf Motivating Example.} (a) An RDF Graph, with the same color of nodes and edges, represents the RDF triples related to same values; (b) Statistics of the RDF graph; (c) Number of RDF triples(NT), associated with same value, to total RDF triples. % in the RDF graph in Figure~\ref{fig:OriginalRDFGraph}. 
The RDF graph and statistics are generated by {\tt Cytoscape tool}. (\url{http://www.cytoscape.org/}).}
\label{fig:motivatingExample}
\end{figure*}
The effectiveness of the proposed factorization techniques are empirically studied, as well as the impact of factorizing semantic sensor data on query processing using \emph{LinkedSensorData} benchmark~\cite{patni2010linked}. 
%The effects of storing factorized RDF data over diverse RDF implementations using state-of-the-art RDF and Big Data engines are evaluated. We study the effectiveness of the proposed compact representations over the \emph{LinkedSensorData} benchmark~\cite{patni2010linked}. 
%The \emph{LinkedSensorData} contains almost 2 billion RDF triples to describe more than 34 million weather observations collected from around 20,000 weather stations during blizzard and storm seasons in the United States since 2002. 
%The experiments are conducted over gradually increasing three \emph{LinkedSensorData} RDF graphs. 
The observed results demonstrate that the proposed factorization techniques are able to effectively reduce the size of semantic sensor data while all the encoded information is preserved, and improves query performance. 
%Moreover, the proposed techniques are able to reduce the query execution time by up to two orders of magnitude.
This article extends our previous work \cite{karim2017large}, where we introduce the factorization techniques for semantic sensor data to scale-up to large datasets. Here, we present techniques for efficient storage of semantic sensor data and conduct an extended analysis and evaluations of the \textit{CSSD} approach. In essence, we make the following contributions to the problem of storing semantic sensor data:
\begin{itemize}
	\item The \textit{CSSD} approach using factorization techniques;% for a compact representation of semantic sensor data described using the SSN ontology;
	\item Tabular representations of semantic sensor data to scale-up to large datasets;
	\item SPARQL query rewriting techniques against factorized sensor data; and
	\item An empirical evaluation of the proposed compact representations demonstrating the effectiveness and efficiency of the \textit{CSSD} approach.
\end{itemize}  

The article is structured as follows:
We motivate the research problem in \autoref{sec:example}, and review existing work in \autoref{sec:related}. A formal description of our approach is discussed  in \autoref{sec:approach}, and tabular representations in \autoref{sec:tabular}.
We present the experimental study in \autoref{sec:eval} with an outlook on future work in \autoref{sec:conclusion}.

\section{Motivating Example}
\label{sec:example}
The \textit{MesoWest LinkedObservation}\footnote{\url{http://wiki.knoesis.org/index.php/LinkedSensorData}} datasets encompass sensor data containing weather observations during hurricane and blizzard seasons in the United States. 
Observations incorporate measurements of several weather phenomena, e.g., wind direction, snowfall, wind speed, rainfall, humidity, and temperature.
These weather observations from sensors are semantically described using the Semantic Sensor Network (SSN) ontology. These \textit{LinkedObservations} enclose almost two billion RDF triples semantically describing sensor data collected during major active storms in the United States since 2002.
The RDF sensor data describing the weather observations during the storm season in the year 2004 contains 108,644,568 RDF triples representing 11,648,607 observations about different weather phenomena, i.e., precipitation, rainfall, wind direction, temperature, and relative humidity. \autoref{fig:OriginalRDFGraph} illustrates the RDF graph of sensor data describing pressure, wind direction, rainfall, temperature, and visibility observations, as well as observation timestamps from the MesoWest dataset during the year 2004.
The RDF graph depicts 46,341 RDF triples semantically describing 5,149 sensor observations.
The RDF triples associated with the same measurement value are represented by the same color nodes and edges in the graph.
The RDF triples affiliated with timestamps are also represented with the same color nodes and edges.
The RDF graph statistics, shown in ~\autoref{fig:statsOriginal}, indicate the existence of remarkably redundant inter-connectivity among the RDF nodes.
The RDF graph and the statistics present that the RDF triples are replicated with the redundant measurement values.
Also, each sensor observation is related to seven neighbors in average, i.e., observations are semantically described using seven RDF triples in average.

\autoref{fig:NTTotal} depicts the number of RDF triples per distinct measurement value within the RDF dataset.
Rainfall measurement value, {\tt 6.55 cm}, is highly repeated and is related to 15,552 RDF triples, and {\tt $113^{\circ}$} wind direction is the second highly repeated value and is affiliated with 13,941 RDF triples.
Likewise, the number of RDF triples related with unique values can be noticed for other climate phenomena, e.g., temperature, pressure, and visibility, and corroborate the {\it natural intuition} that the number of {\it observations} is much higher than the the number of {\it distinct measurement values}. 
We exploit this natural intuition of semantic sensor data, and present a compact representation. RDF triples of repeated measurements values are {\it factorized} in these compact representations, and are added to the dataset only once without losing any information initially encoded in the sensor data.
Unlike other RDF data compression techniques, the semantics of observations are utilized to factorize the semantic sensor data. The factorized representations provide efficient storage over diverse RDF implementations, and queries can be directly executed over factorized RDF datasets. 
To scale up to large datasets, tabular representations, based on the factorization, can be used to exploit Big Data technologies for storage and management of large amount of semantic sensor data.

\section{Related Work}
\label{sec:related}
Semantic Web and Big Data communities have been working for better storage and processing of large datasets. RDF compression techniques~\cite{alvarez2011compressed,bok2019provenance,FernandezMGPA13,meier2008towards,pan2014ssp,pichler2010redundancy} are devised, as well as, Big Data tools are exploited in \cite{du2012hadooprdf,khadilkar2012jena,mami2016towards,nie2012efficient,papailiou2013h,punnoose2012rya,schatzle2013pigsparql} to efficiently process RDF data.  Furthermore, column-oriented stores \cite{idreos2012monetdb,macnicol2004sybase,stonebraker2005c,zukowski2006super} exploit fully decomposed storage model \cite{copeland1985decomposition} to scale-up to large datasets, and data factorization based query optimization techniques are proposed in \cite{BakibayevKOZ13}.
\subsection{RDF Data Compression}
%A large number of organizations and enterprises facilitate re-usability and integration of RDF data by numerous applications. Data integration by numerous applications leads to an accelerated growth in the amount of RDF data causing performance bottlenecks for RDF management systems. 
% A rule-based RDF data compression method, presented by Joshi et al. ~\cite{joshi2013logical}, generates decompression rules and removes the RDF triples that can be inferred using these rules. 
%RDF datasets are divided into two smaller disjoint datasets, i.e., an active and a dormant dataset. An active dataset consists of compressed triples to which decompression rules are applied for inferring the new triples. On the contrary, a dormant dataset comprises uncompressed triples to which no rules are applied.
A user specific approach to minimize RDF graphs by defining Datalog rules to remove the irrelevant RDF data is proposed by Meier~\cite{meier2008towards}. 
%The approach uses constraints to maintain data consistency before and after RDF graph minimization. Instead of generating new RDF triples, Datalog rules defined by the user are used to remove RDF data from RDF graphs that is not required for the application.  These datalog rules are used to reconstruct the deleted data. 
Similarly, Pichler et al.~\cite{pichler2010redundancy} study the complexity of RDF minimization in presence of constraints, rules, and queries. These approaches require data decompression to process and manage the compressed RDF datasets.
%An approach to identify and reduce semantic, syntactic and symbolic redundancies is proposed by Pan et al.~\cite{pan2014ssp}. 
A graph pattern based logical compression technique, proposed by Pan et al.~\cite{pan2014ssp}, replaces bigger graph patterns by smaller graph patterns and generates a sequence of bits for each graph pattern. 
%The generated bit sequence is composed of two components, i.e., the graph pattern itself shared by the instances and the sequence of instances of the graph pattern. 
Similarly, Fernandez et al.~\cite{FernandezMGPA13} compresses and describes RDF data in binary format in terms of header, dictionary, and triples.
The header contains compression relevant metadata, the dictionary contains identifiers of data values and triples represent the collection of data identifiers. 
A compressed RDF structure, k\textsuperscript{2}-triples, presented by \'Alvarez-Garc\'ia et al. \cite{alvarez2011compressed},
vertically partitions RDF triples, and utilizes k\textsuperscript{2}-trees~\cite{brisaboa2009k} to create indexes for each partition. Bok et al.~\cite{bok2019provenance} present RDF provenance compression technique by exploiting dictionary encoding. 
%Frequently repeated subgraphs are extracted from the numeric version of RDF triples and are stored as reference patterns.
These approaches provide effective solutions for RDF data compression. 
However, customized engines are required to execute queries over the compressed RDF data, and data management tasks demand decompression techniques to be performed over the compressed data.
Contrary, we propose factorization techniques that generate a compact representation by exploiting properties of semantic sensor data, where queries can be executed directly over the compact representations.
Since factorization and compression techniques are independent, and do not directly intervene with each other, both can be exploited in conjunction.

\subsection{Big Data Tools and RDF}
%With the tremendous growth of semantic data, the problem of storing and processing large-scale semantic data has become of paramount importance. Semantic Web researchers are exploiting big data frameworks to efficiently store and process the continuously growing RDF data~\cite{du2012hadooprdf,khadilkar2012jena,mami2016towards,nie2012efficient,papailiou2013h,punnoose2012rya,schatzle2012cascading,schatzle2013pigsparql}. These approaches offer RDF storage and indexing schemes for efficient RDF data processing over Big Data frameworks.
Relational representations of RDF data over big data storage technologies, i.e., Parquet and MongoDB, are presented by Mami et al.~\cite{mami2016towards}, where a table for each RDF class is created, representing class properties as attributes. %Additionally, a table for each class is created that contains a type column of each attribute. 
Du et al.~\cite{du2012hadooprdf} combine Hadoop framework and an RDF triple store, Sesame, to achieve scalable RDF data analysis. RDF data is partitioned in such a way that all the triples with the same predicate are allocated to the same partition. Jena-HBase~\cite{khadilkar2012jena} provides a variety of RDF data storage layouts for HBase and all operations over RDF graph are converted into the underlying layout operations. 
%The RDF storage layouts include the simple (three tables each with an index over subject, predicate, and object), vertical partitioned, indexed, vertical partitioned and indexed, hybrid, and hash layouts. 
%Sch{\"a}tzle et al.~\cite{schatzle2014sempala} utilize Parquet columnar storage format for RDF data storage over Hadoop. %, and SPARQL queries are compiled as SQL queries.
%to be executed using Impala. RDF data is stored in a single table containing all properties as the attributes of the table. 
Sch{\"a}tzle et al.~\cite{schatzle2013pigsparql} present PigSPARQL, a SPARQL query processing framework using Hadoop MapReduce over large RDF graphs. 
A scalable RDF data management system developed by Punnoose et al.~\cite{punnoose2012rya} presents storage methods and indexing, % and query processing techniques using conventional query languages, i.e., SPARQL. 
where RDF data is stored as a pair of a key and a corresponding value, and %The key consists of RowID, column and Timestamp. 
RDF triples are indexed using SPO, POS, and OSP. %The storage is compact because RDF triples are stored only on RowID and no data is stored in column, timestamp, and value fields. 
Papailiou et al.~\cite{papailiou2013h} present an RDF store to efficiently perform distributed Merge and Sort-Merge joins using multiple indexing over HBase, where indexes that are compressed using dictionary encoding. 
 Nie et al.~\cite{nie2012efficient} study the efficient RDF partitioning and indexing schemes to process RDF data in distributed way using MapReduce. %Using horizontal partitioning RDF data is partitioned in such a way that all the triples with the common hash value of subject are contained in the same file. Vertical partitioning combines all the triples with the same predicate in the same file. Clustered property partitioning organizes RDF triples with the same subjects and then partitions RDF triples into clusters using the same property sets. 
 %Sch{\"a}tzle et al.~\cite{schatzle2012cascading} present RDF storage schema for HBase to efficiently process joins using MapReduce. %RDF data is stored in two tables; one with subject as row key and the other with object as the row key. %HBase filter API is used to filter the data on server side to avoid unnecessary data transfer.  
 We propose factorization techniques for the RDF sensor data where the RDF triples related to the redundant values are factorized. The proposed tabular-based representations of factorized RDF graphs, i.e., factorized tables and CT based tables, scale up to large datasets by leveraging the column-oriented Parquet storage format. The tabular representation of the factorized RDF graphs remove data redundancies and improve the storage and query processing using Big Data tools.
\subsection{Relational Data Compression}   
%Performance of database systems is related to the efficiency of storing the data on primary storage and the improved IO bandwidth. For this reason, database community has explored several alternatives for storage implementations. 
Column-stores have gained attention for being able to efficiently store data and improve the IO bandwidth for large-scale data intensive applications. Early efforts include C-Store~\cite{stonebraker2005c}, SybaseIQ~\cite{macnicol2004sybase}, MonetDB~\cite{idreos2012monetdb}, %VectorWise~\cite{boncz2005monetdb},  
and lightweight data compression by Zukowski~\cite{zukowski2006super}. %Column-oriented systems completely vertically partition a relation into a collection of individual columns around each attribute. By storing each column separately, the only columns required to answer a particular query are fetched from the disk rather than the entire row. Similarly, IO and memory bandwidth is improved by transferring only the required column data. Furthermore, column-oriented storage applies compression techniques over the data in a column related to the same type more naturally. 
Various compression techniques are exploited in C-Store~\cite{stonebraker2005c} to support several column sort-orders without space explosion. %Multiple sort orders open opportunities for optimization. 
C-Store compresses each column using one of the four encoding schemes defined based on the order of values in the column. %, i.e., column values are ordered either by the values in the column or by the corresponding values in some other column. 
%The encoding schemes include the self-order or foreign-order of the column values containing few or many distinct values. %For each encoded representation B-tree data structures are used for indexing. 
Similarly, SybaseIQ~\cite{macnicol2004sybase} uses column-store to perform complex analytics efficiently on massive amounts of data, and optimizes workloads across multiple servers through multi-node, shared storage, and parallel database system. 
MonetDB~\cite{idreos2012monetdb} exploits column-store technology to efficiently perform analytics over the large collections of data. %VectorWise~\cite{boncz2005monetdb} provides a query engine for MonetDB that uses vectorized execution to achieve high CPU efficiency. % and to scale-up towards non main-memory datasets.
 Furthermore, light-weight compression is used to keep intermediate results in memory for reuse.
Zukowski~\cite{zukowski2006super} proposes lightweight data compression techniques over the column-stores in order to speedup the data-intensive query processing. %  The proposed techniques implement compression and decompression algorithms that lack the if-then-else constructs, and absence of dependencies between the values help to make them loop-pipelinable. Additionally, data distribution with outliers is handled for better compression ratio, and compression and decompression are used at the CPU and RAM storage levels. 
These compression techniques exploit column-oriented stores that use fully decomposed storage model by Copeland et al.~\cite{copeland1985decomposition}, where \emph{n-array} relations are decomposed into \emph{n} binary relations, i.e., a pair of attribute value and an identifier. %Every binary relation is stored as a pair consisting of an attribute value and a system generated identifier that corresponds to the relational tuple containing the attribute value. Similarly, 
Two copies of each binary relation are stored increasing the storage requirements. 
%Moreover, for each attribute the identifier is stored twice, increasing the storage requirements by a factor of two.
 Our approach generates a factorized RDF graph where data redundancies are reduced. %These factorized RDF graph representations replace repeated properties and corresponding objects with compact molecules. 
 Hence, factorized representations reduce the storage requirements for the decomposition storage model.

\subsection{Data Compression based Query Optimization}
Factorization techniques have been utilized for optimization of relational data and SQL query processing~\cite{BakibayevKOZ13,BakibayevOZ12}. 
Existing approaches proposed compact representations of relational data, obtained by applying logical axioms of relational algebra, e.g., distributivity of product over union, and commutativity of product and union. 
Bakibayev et al.~\cite{BakibayevOZ12} present an in-memory query engine to run select-project-join queries over factorized data. The query results are expressed using factorized representations in terms of singleton relations, product, and union. The compact representations are obtained by algebraic factorization using distributivity of product over union, and the commutativity of product and union. These factorized representations form a nested structure containing the attributes from the schema, and are referred as factorization tree. A set of operators for selection and projection are proposed that map the factorized representations and generate efficient query plans.
Similarly, Bakibayev et al.~\cite{BakibayevKOZ13} improve the performance of relational processing for aggregate and ordering queries using the distributivity of product over union to factorize relations as in the factorization of logic functions~\cite{brayton1987factoring}. Factorized representations reduce the number of computations required for the evaluation of aggregation functions, i.e., sum, count, avg, min, max, likewise evaluation of aggregation functions as sequences of partial aggregations over the factorized representation speedup the processing. To evaluate order-by-queries, factorized tables are restructured with a constant delay enumeration.
Therefore, queries can be executed in factorized relational data, and efficient execution plans can be found to speed up execution time. 
We build on these experimental results and propose factorization technique tailored for semantically described sensor data. 
The \textit{CSSD} approach exploits the semantics encoded in RDF sensor data to compactly represent RDF triples, reduce redundancy, and facilitate query execution. 

\begin{figure*}[ht!]
 \centering
    \subfloat[Semantic Sensor Network (SSN) Ontology]{ \includegraphics[width=0.6\linewidth]{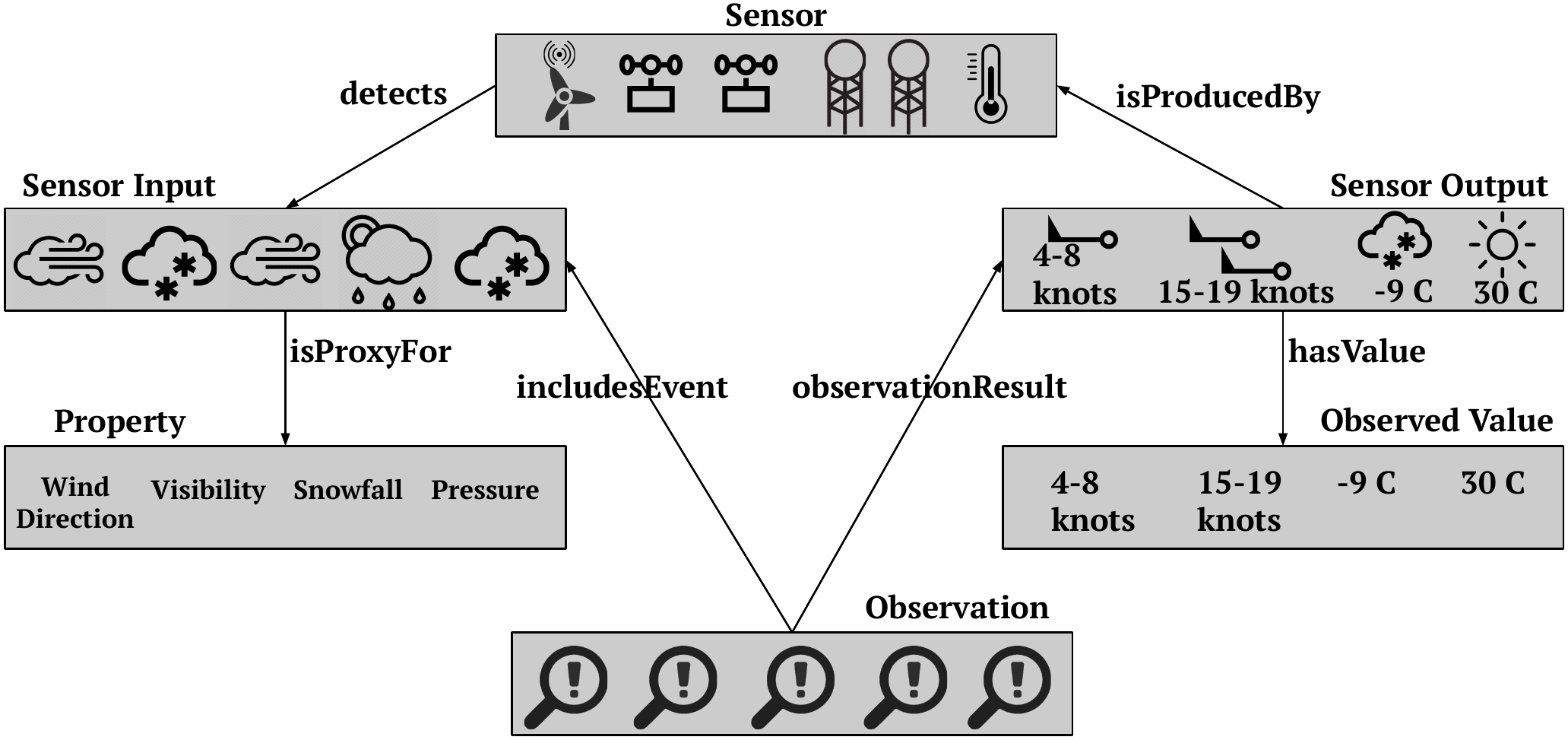}\label{fig:ssnOnt}}
  \hspace{5pt}\subfloat[RDF Molecules]{\includegraphics[width=0.35\linewidth]{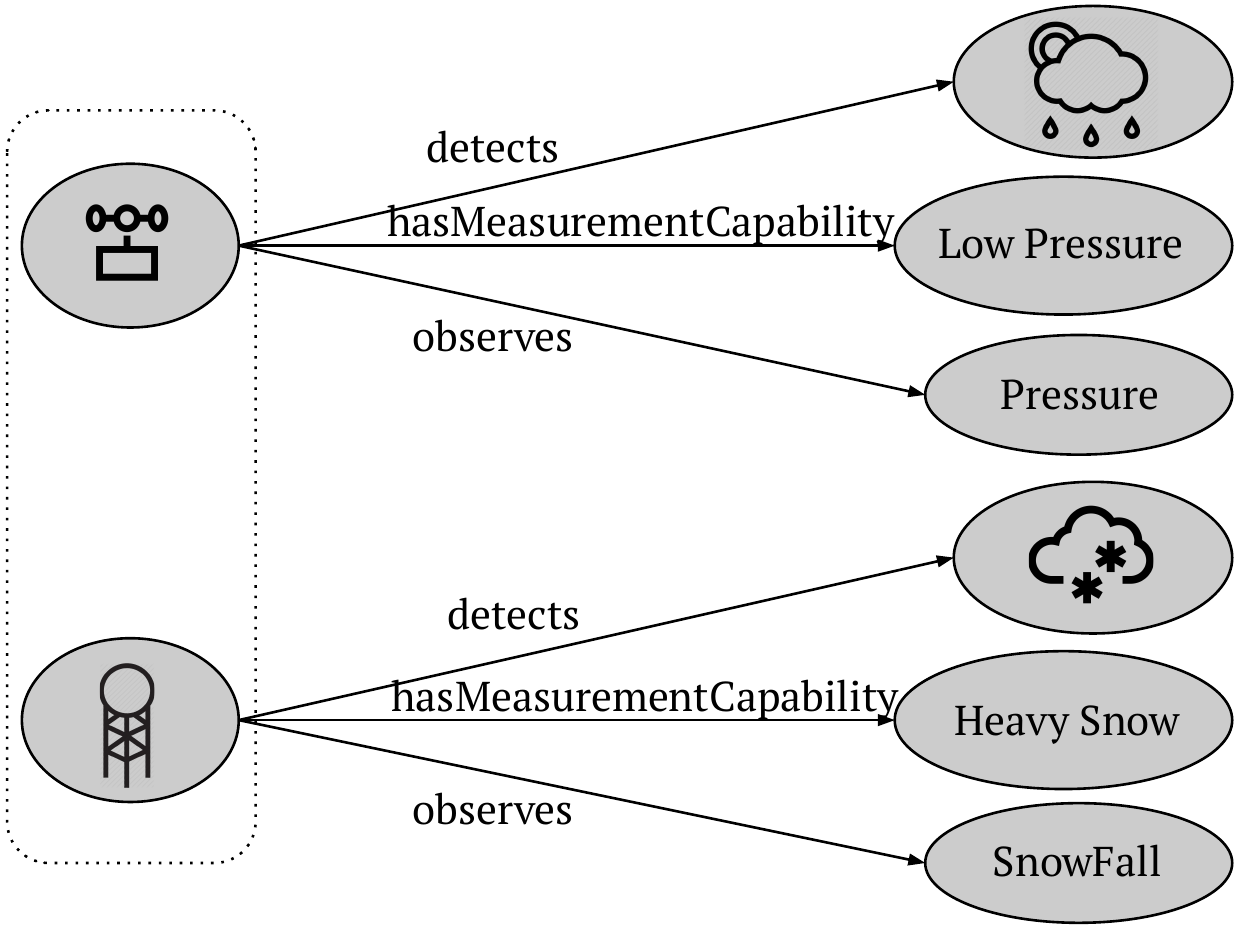}\label{fig:rdfMol}}
  \caption{{\bf Overview of the Semantic Sensor Network (SSN) Ontology}.  (a) The SSN Ontology is composed of 50 classes and 55 properties to describe sensor observations; a portion of the SSN classes and properties is presented. (b) An RDF graph with two subject molecules in the class Sensor; for clarity URIs are omitted.}
  \label{fig:ssnOntology}
\end{figure*}

\section{The Semantic Sensor Data Factorization Approach}
\label{sec:approach}
\subsection{Preliminaries}
The Semantic Sensor Network (SSN) Ontology~\cite{compton2012ssn}, developed by the W3C Semantic Sensor Network Incubator Group\footnote{https://www.w3.org/2005/Incubator/ssn/}, is an OWL ontology that consists of 50 RDF classes and 55 properties to semantically describe sensor data in terms of observations, observed properties, features of interest, and measurement units and observed values.
% allows for the description of sensor devices, their capabilities, observations, and other sensor-related concepts. %The SSN ontology is commonly exploited to tackle interoperability and data semantic enrichment problems in IoT applications~\cite{ali2015citybench,gao2014semantic,henson2009ontological,PhuocQQNH16} when integrating heterogeneous data sources.
%The SSN ontology consists of 50 RDF classes and 55 properties to semantically describe sensor data in terms of observations, observed properties, features of interest, and measurement units and observed values.
A portion of the SSN ontology, illustrated in ~\autoref{fig:ssnOnt}. %, comprises classes and properties to describe observations and measurements observed by sensors. 
Sensors generate observations by detecting certain properties of features of interest and produce the observed values as sensor output.
Given disjoint infinite sets \textbf{I},  \textbf{L}, \textbf{B} of IRIs, literals, and blank nodes, respectively, a tuple $(s\;p\;o) \in
(I \cup B)\;$ $\times \;I\; \times\; (I \cup B \cup L)$ is called an RDF triple. An RDF graph $G=(V_G,$ $E_G,L_G)$ comprises RDF triples, where  $V_G$  is a set of nodes in \textbf{I} $\cup$  \textbf{B} $\cup$ \textbf{L}, $E_G$ is a set of edges representing RDF triples, and $L_G$ is a set of edge labels in \textbf{I}~\cite{arenas2009foundations}. \autoref{fig:multiplicityExp} illustrates an RDF graph representing a portion of the RDF dataset from the weather observations.
Nodes correspond to resources representing sensor observations, timestamps, measurements, and literals. %Furthermore, literals are also illustrated as nodes in the RDF graph and properties typically originate from a variety of RDF vocabularies that involve the Semantic Sensor Network (SSN) Ontology. 
Edges in RDF graphs represent RDF triples and connect the nodes in RDF graphs using properties from the SSN ontology.
We ignore name of the properties, prefixes, and replace long URLs by short identifiers for clarity.
We refer to such an RDF graph described using the SSN ontology in this paper as an SSN RDF graph.
%\hl{Given disjoint infinite sets \textbf{I}, \textbf{B}, \textbf{L} of IRIs, blank nodes, and literals, respectively, an RDF graph is a pair $G=(V_G,E_G)$, where  $V_G$  is a set of nodes in \textbf{I} $\cup$  \textbf{B} $\cup$ \textbf{L}, and $E_G$ is a set of RDF triples} \cite{arenas2009foundations}. Figure\autoref{fig:multiplicityExp} \hl{presents an RDF graph that corresponds to a portion of the RDF dataset from the storm season in year 2004. Nodes correspond to resources representing observations, measurements, and timestamps. Further, literals are also represented as nodes in the RDF graph and properties typically stem from a variety of RDF vocabularies that include the Semantic Sensor Network (SSN) Ontology.} Edges in RDF graphs represent RDF triples and connect the nodes in RDF graphs using properties from the SSN ontology. We ignore name of the properties, prefixes, and replace long URLs by short identifiers for clarity. The RDF graph that includes properties from the SSN ontology and the instances correspond to the classes of the SSN ontology, we refer to such an RDF graph in this paper as an SSN RDF graph.
RDF graphs are usually composed of entity description sub-graphs, sometimes also referred to as Concise Bounded Descriptions (CBD)\footnote{https://www.w3.org/Submission/CBD/}. 
These subgraphs are named {\it RDF subject molecules} defined as follows: 
Given an RDF graph G, a subgraph $M$ of $G$ is an \textit{RDF molecule}~\cite{FernandezLC14} iff the RDF triples of $M=\{t_1,\dots, t_n\}$ share the same subject, i.e., $\forall$ $i,j \in \{1,..,n\}$ ($\mathit{subject}(t_i) = \mathit{subject}(t_j)$).
Figure~\ref{fig:rdfMol} presents an RDF graph with two RDF subject molecules. 
Each RDF molecule consists of three RDF triples connected to the same subject, which represents an instance of the sensor class.
%Each instance of the sensor class is described by the RDF triples in terms of input, observed property, and measurement capability.
%For simplicity, we omit the long URIs in the figure, and will refer to RDF subject molecules as {\it molecules} in the rest of the paper.
We will refer to RDF subject molecules as {\it molecules} in the rest of the paper.

\begin{figure*}[ht!]
\centering
     \subfloat[RDF Graph $G$]{
      \includegraphics[width=.5\linewidth]{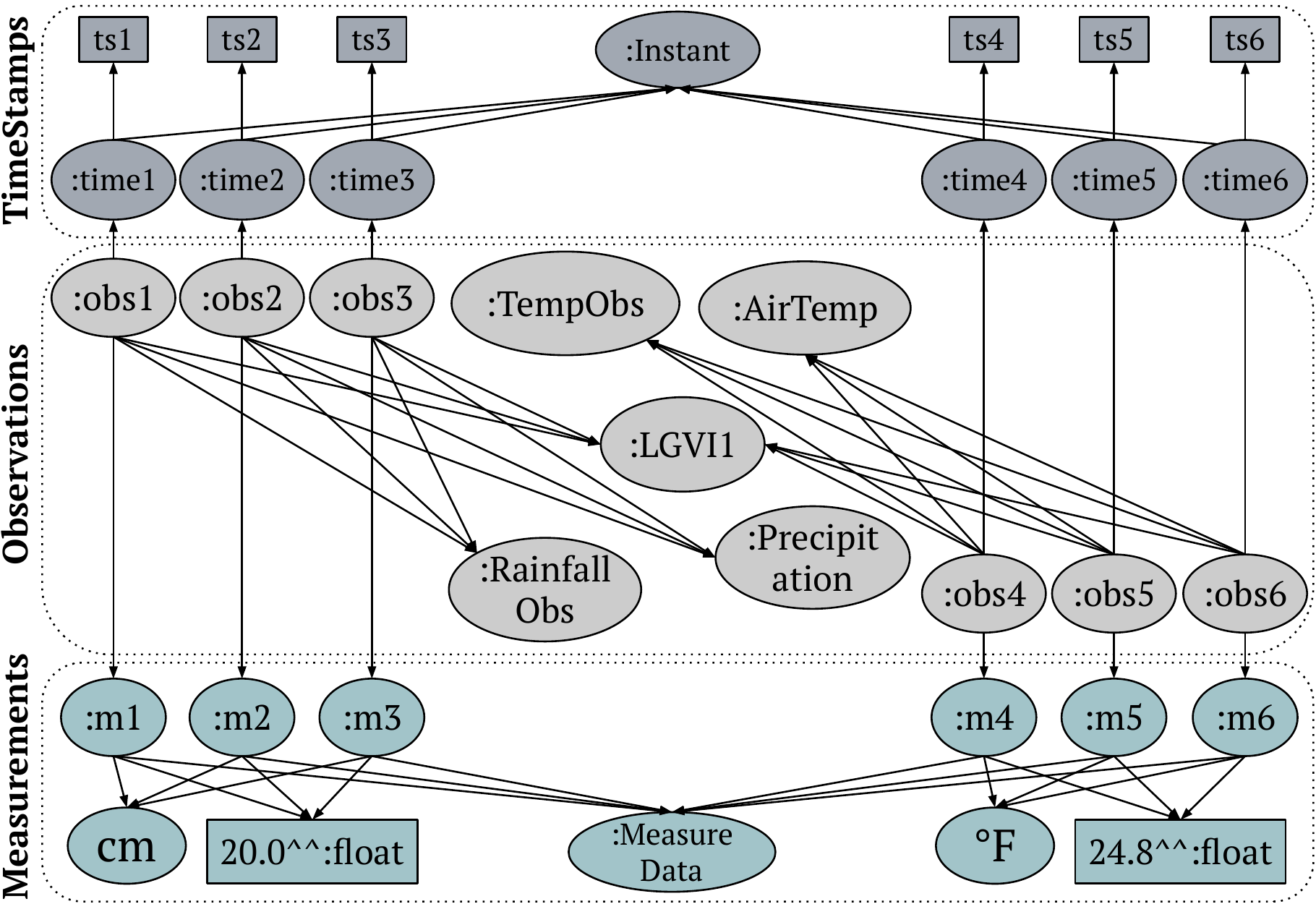}
      \label{fig:multiplicityExp}}
    \subfloat[Obs. Molecules]{
      \includegraphics[width=.2\linewidth]{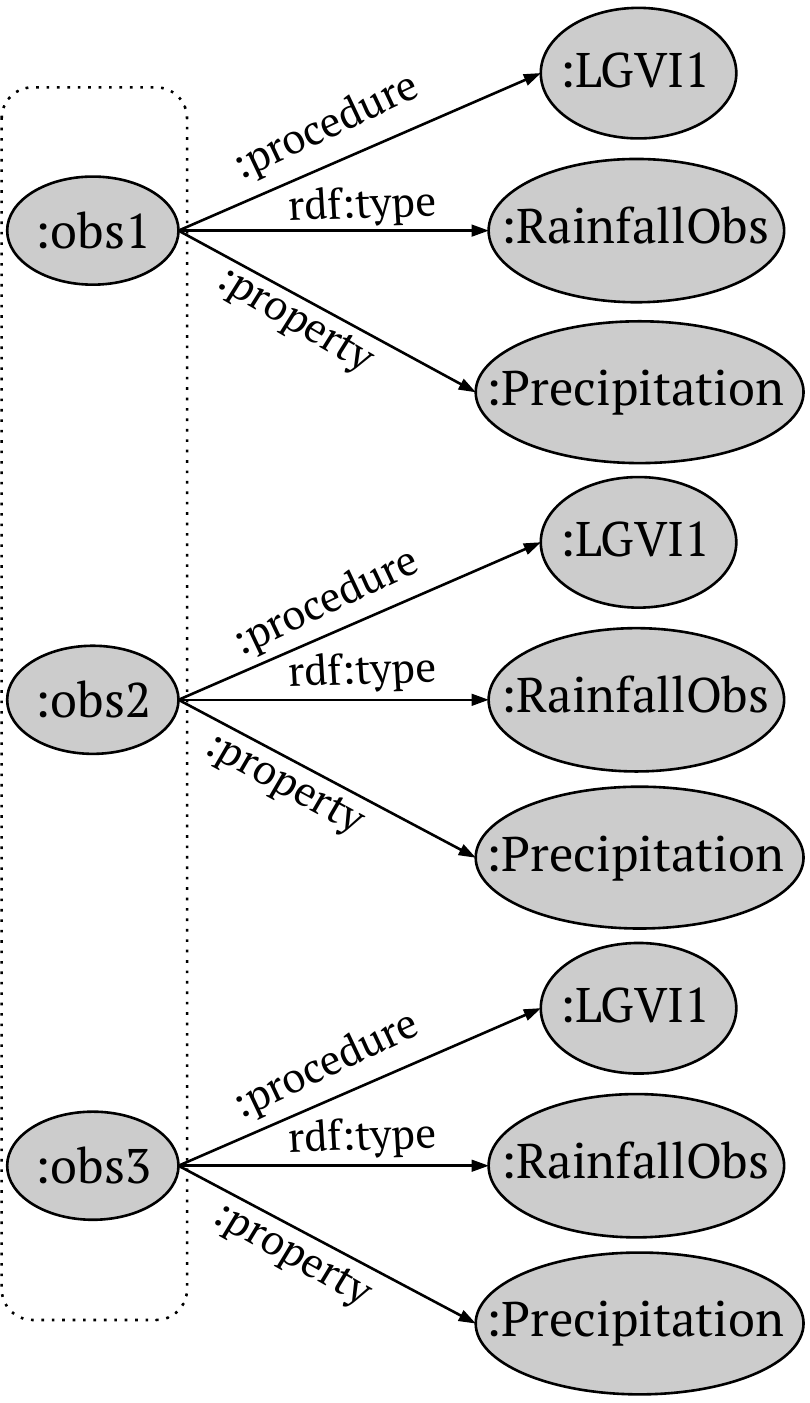}
      \label{fig:obsMolecule}}
    \subfloat[Meas. Molecules]{
      \includegraphics[width=.2
      \linewidth]{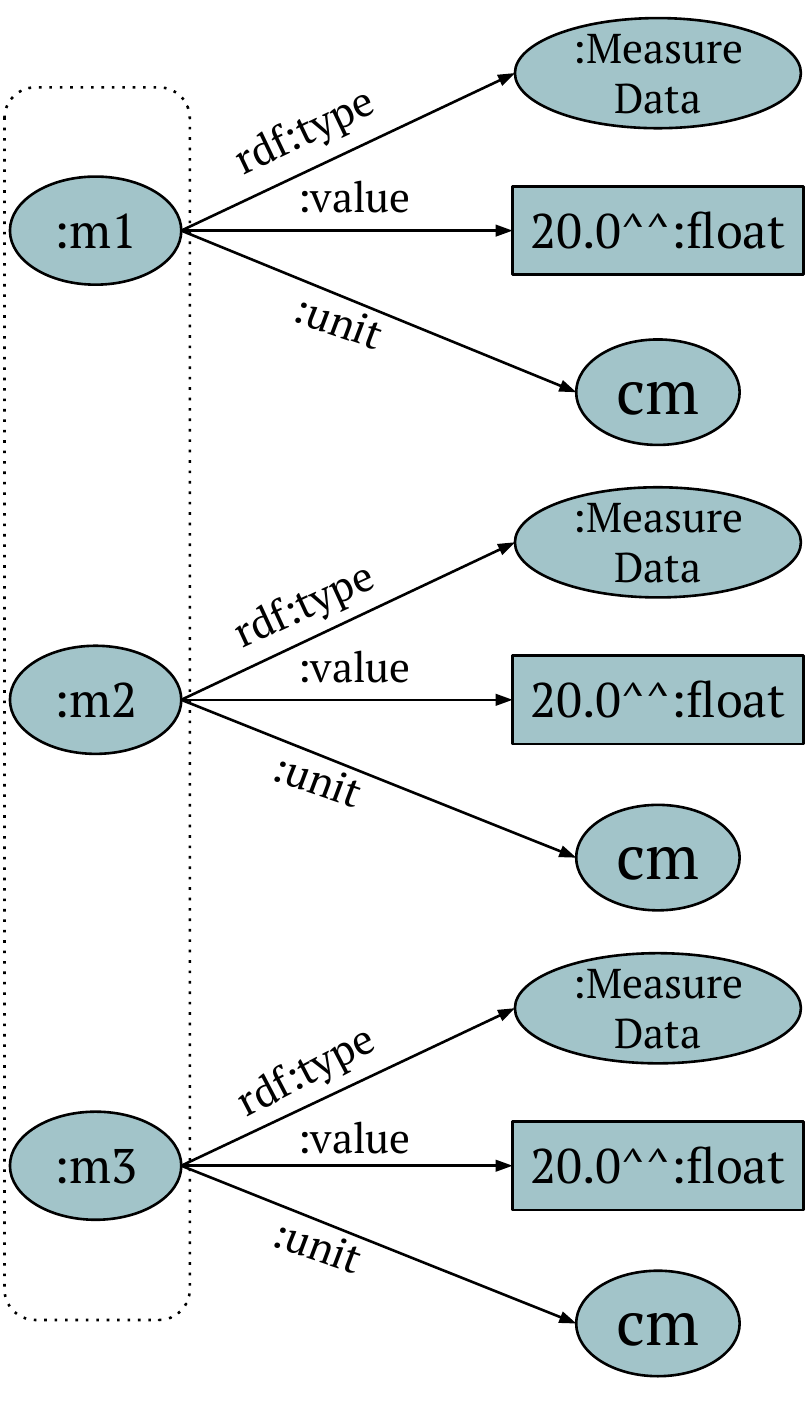}
      \label{fig:measMolecule}}
    \caption{{\bf Example of a Simplified RDF.} Several RDF triples are related to the same measurement values, for simplicity URIs are not presented. (a) RDF graph has $M_m(v,u|G)=3$  and  $M_o(s,p,pp,v,u|G)=3$ for the values $20.0cm$ and $24.8^{\circ}F$. (b) Three observation (Obs.) molecules; (c) Three measurement (Meas.) molecules.}
\end{figure*}

\subsection{Problem Statement} 
The concept of RDF molecule is utilized to devise observation and measurement molecules based on the SSN Ontology. Moreover, we present the concept of multiplicity.
Building on these definitions the problem tackled in this work is defined.  

\begin{definition}[Observation Molecule]
An observation molecule $OM$ is a set of RDF triples that share the same subject of type observation class, i.e., $OM$= $(obs \; \texttt{rdf:type} \; \textit{:Observation})$, $(obs \; \texttt{:procedure} \; proc)$, $(obs \; \texttt{:property} \; pp)$.
\end{definition}
 Figure~\ref{fig:obsMolecule} presents three observation molecules, each consists of three RDF triples describing an observation subject. Each observation subject is described in terms of observation type, observed property and the observation procedure.
 
\begin{definition}[Measurement Molecule]
A measurement molecule $MM$ is a set of RDF triples that share the same measurement subject, i.e., $MM$= $(m \; \texttt{rdf:type} \;\\ \textit{:MeasureData})$, $(m \; \texttt{:value} \; val)$, $(m \; \texttt{:unit} \; uom)$.
\end{definition}

Figure~\ref{fig:measMolecule} presents three measurement molecules, each consists of three RDF triples having the same measurement subject. Each measurement is described in terms of measured value and unit.
Class Templates are the abstract descriptions of the triples in RDF graphs and are defined as follows:

\begin{definition}[Class Template (CT)] Given a class $C$ in an RDF graph $G$, a Class Template is a $4-tuple=<C,SP,IntraL,InterL>$, where, $SP$ is a set of properties in $C$, $IntraL$ is a set of pairs ($p,C_j$) such that $p$ is an object property with domain $C$ and range $C_j$ in the same dataset, and $InterL$ is a set of pairs ($p,C_k$) such that $p$ is an object property with domain $C$ and range $C_k$ in different datasets.
A Class Template is a simplification of an RDF Molecule Template~\cite{endris2017mulder}.
\end{definition}

Figure~\ref{fig:rdfmtLinkExam} shows class templates (CT) extracted from Figure~\ref{fig:multiplicityExp} around the \texttt{:TempObs}, \texttt{:RainfallObs}, \texttt{:Instant}, and \texttt{:MeasureData} classes (Figure~\ref{fig:rdfmtExample}). Moreover, Figure~\ref{fig:rdfmtLinking} shows intra-link between \texttt{:MeasureData} and \texttt{:TempObs} using \texttt{:result}, and inter-link of \texttt{:TempObs} and \texttt{:RainfallObs} to \texttt{:Instant} using \texttt{:samplingTime}. 

\begin{figure*}[ht!]
 \centering
    \subfloat[Class Templates (CTs)]{ \includegraphics[width=0.45\linewidth]{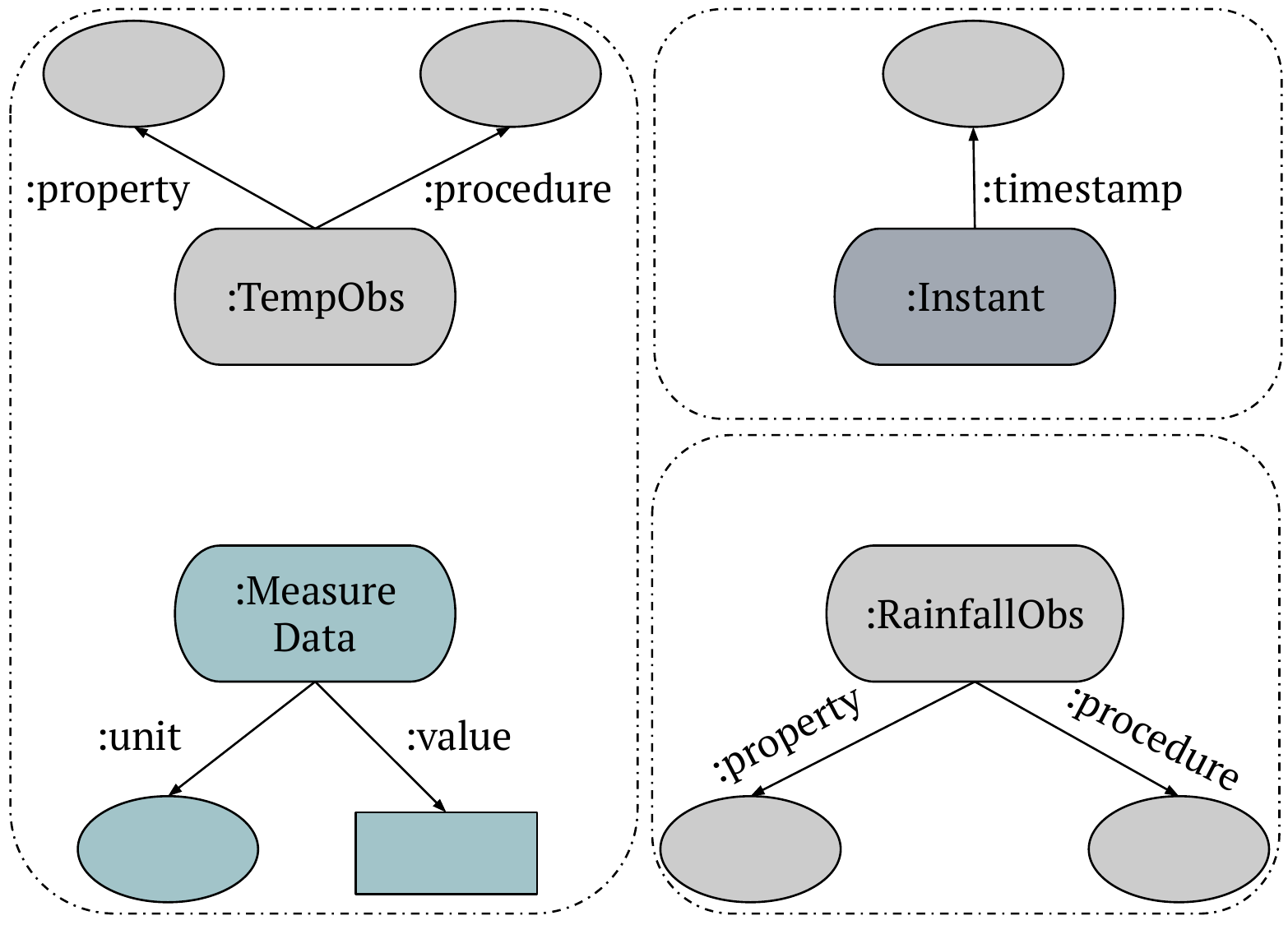}\label{fig:rdfmtExample}}
  \hspace{5pt}\subfloat[CT Intra- and Inter-dataset Linking]{\includegraphics[width=0.52\linewidth]{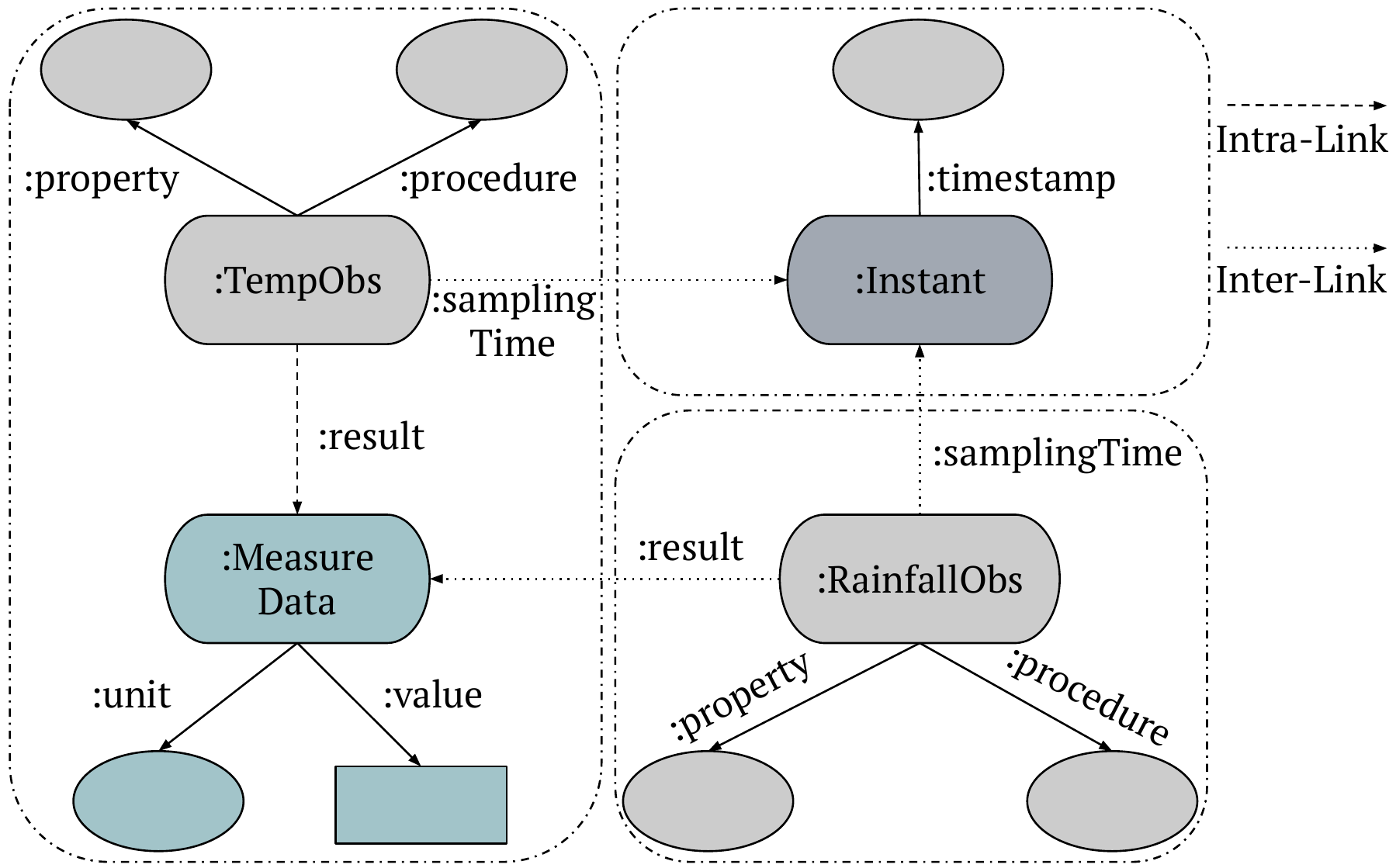}\label{fig:rdfmtLinking}}
  \caption{{\bf CTs and CT Linking}.  (a) Four Class Templates are extracted from the RDF graph in Figure~\ref{fig:multiplicityExp} around the classes \texttt{:TempObs}, \texttt{:RainfallObs}, \texttt{:MeasureData}, and \texttt{:Instant}; (b) inter- and intra-links between class templates. %\texttt{:TempObs} is intra-linked to \texttt{MeasureData} by property \texttt{:result}. \texttt{:TempObs} and \texttt{:RainfallObs} are  inter-linked to \texttt{:Instant} using \texttt{:samplingTime}, and \texttt{:RainfallObs} is inter-linked to \texttt{:MeasureData} using \texttt{:result}.
  }
  \label{fig:rdfmtLinkExam}
\end{figure*}

\begin{definition}[Measurement Multiplicity~\cite{karim2017large}] Given an RDF graph $G$ of sensor data using the SSN ontology.
Given a resource \texttt{uom} corresponding to a measurement unit, and a literal value \texttt{val}, the measurement multiplicity of \texttt{uom} and \texttt{val} in $G$, $M_m(val,uom|G)$, is defined as the number of measurements have same value $val$ and measurement unit $uom$ in $G$.
\[\arraycolsep=1.2pt
\begin{array}{ll}
M_m(val,uom|G)= |\{m|&(m \;\texttt{rdf:type} \;\texttt{:MeasureData}) \\& \in 
G, (m\; \texttt{:unit}\; uom) \in G,\\
&(m\; \texttt{:value}\; val) \in G \}|
\end{array}
\]
\end{definition}

In the RDF graph in \autoref{fig:multiplicityExp}, three measurements, i.e.,  {\tt \footnotesize :m1}, {\tt \footnotesize :m2}, and {\tt \footnotesize :m3}, are related to the unit {\tt \footnotesize cm} and value {\tt \footnotesize 20.0}. Therefore, the measurement multiplicity of {\tt \footnotesize cm} and {\tt \footnotesize 20.0} is 3. Similarly, the measurement multiplicity of {\tt \footnotesize $^{\circ}$F } and {\tt \footnotesize 24.8} is 3.

\begin{definition}[Observation Multiplicity~\cite{karim2017large}] Given an RDF graph $G$ of sensor data described using the SSN ontology.
Given resources \texttt{proc}, \texttt{ph}, \texttt{pp}, and \texttt{uom} corresponding to a procedure, an observed phenomenon, observed property, and measurement unit, and a literal value \texttt{val}, the multiplicity of an observation \texttt{obs} for \texttt{uom} and \texttt{val} in $G$, $M_o(proc,ph,pp,val,uom|G)$, is defined as the number of observations about the property \texttt{pp} of the observed phenomenon \texttt{ph}, sensed by \texttt{proc}, that have the same value \texttt{val} and unit of measurement \texttt{uom} in $G$.
\[\arraycolsep=3.8pt
\begin{array}{ll}
M_o(proc,ph,pp,val,uom|G)&=|\{obs |\\&(obs\; \texttt{rdf:type}\; ph) \\&\in G, (obs \;\texttt{:procedure}  \; \\&proc) \in G, (obs\; \texttt{:prop}\\&\texttt{erty}\; pp) \in G, (obs \; \\&\texttt{:result}  \; m) \in G, (m \\&\;\texttt{rdf:type} \;\texttt{:Measure}\\&\texttt{Data})\in G,(m\; \texttt{:unit}\; \\&uom) \in G, (m\; \texttt{:value}\; \\&val) \in G \}|
\end{array}
\]
\end{definition}

In \autoref{fig:multiplicityExp}, the procedure {\tt \footnotesize :LGVI1}, the phenomenon {\tt \footnotesize :RainfallObs}, the property {\tt \footnotesize :Precipitation}, the measurement unit {\tt \footnotesize cm}, and the value {\tt \footnotesize 20.0} are associated with {\tt \footnotesize :obs1}, {\tt \footnotesize :obs2}, and {\tt \footnotesize :obs3}, and the observation multiplicity is 3. Similarly, the observation multiplicity  for {\tt \footnotesize :LGVI1}, {\tt \footnotesize :TempObs}, {\tt \footnotesize :AirTemp}, {\tt \footnotesize $^{\circ}$F}, and {\tt \footnotesize 24.8} is 3.

\begin{definition}[Compact Observation Molecule] Given a surrogate observation $oM$, a compact observation molecule $COM$ is a set of RDF triples that share the same surrogate observation $oM$, i.e., $COM$= $(oM \; \texttt{rdf:type} \;\textit{:Observation})$,$(oM \;\texttt{:procedure} \; proc)$, $(oM \; \texttt{:property} \; pp)$, such that the observation multiplicity of the procedure \texttt{proc} and the observed property \texttt{pp} is one.
\end{definition}

Figure~\ref{fig:compMol} presents a compact observation molecule, with a surrogate observation \texttt{:obsM1} that corresponds to the observations \texttt{:obs1}, \texttt{:obs2}, and \texttt{:obs3} in Figure~\ref{fig:obsMolecule}, and is associated with the observation multiplicity value one. %The compact observation molecule represents all the properties and related objects as in the observation molecules in Figure~\ref{fig:obsMolecule}. However, the redundant edges, repeatedly connecting the similar type of observations \texttt{:obs1}, \texttt{:obs2}, and \texttt{:obs3}  to the object using the same properties, are transformed into the edges connecting these properties and corresponding objects to the surrogate observation \texttt{:obsM1}.

\begin{definition}[Compact Measurement Molecule]
Given a surrogate measurement $mM$, a compact measurement molecule $CMM$ is a set of RDF triples that share the same surrogate measurement $mM$, i.e., $CMM$= $(mM \; \texttt{rdf:type} \; \textit{:MeasureData})$, $(mM \; \texttt{:value} \; val)$, $(mM \; \texttt{:unit} \; uom)$, such that the multiplicity of the value \texttt{val} and the unit \texttt{uom} is one.
\end{definition}

A compact measurement molecule for the measurement molecules in Figure~\ref{fig:measMolecule} is presented in Figure~\ref{fig:compMol} using a surrogate measurement \texttt{:mM1}. \texttt{:mM1} corresponds to \texttt{:m1}, \texttt{:m2}, and \texttt{:m3} in Figure~\ref{fig:measMolecule} and is associated with the multiplicity value one. 

\begin{definition}[A Factorized RDF Graph] Given an RDF graph $G=(V_G,E_G,L_G)$ representing sensor data described using the SSN ontology, a factorized RDF graph $G'=(V_{G'},E_{G'},L_{G'})$ of $G$ is an RDF graph where the following hold:
\begin{itemize}
\item Entities in $G$ are preserved in $G'$, i.e., $V_G \subseteq V_{G'}$. 

\item For each entity $obs$ in $V_G$ that corresponds to an entity of \texttt{:Observation} class over the class properties \texttt{:procedure} and \texttt{:property} and objects $proc$ and $pp$, respectively, in $G$, there is an entity $oM$ in $V_{G'}$ that corresponds to the surrogate observation of the compact observation molecule over the properties \texttt{:procedure} and \texttt{:property} and objects $proc$ and $pp$, respectively, in $G'$. 
\item For each entity $m$ in $V_G$ that corresponds to an entity of class \texttt{:MeasureData} over the properties \texttt{:value} and \texttt{:unit} and objects $val$ and $uom$, respectively, in $G$, there is an entity $mM$ in $V_{G'}$ that corresponds to the surrogate measurement of the compact measurement molecule over the properties \texttt{:value} and \texttt{:unit} and objects $val$ and $uom$, respectively, in $G'$.
\item There is a partial mapping $\mu_N$: $V_G \rightarrow V_{G'}$:
\begin{itemize}
\item Observation entities in $G$ are mapped to the surrogate observations in $G'$, i.e., $\mu_N(obs)$=$oM$.
\item Measurement entities in $G$ are mapped to the surrogate measurements in $G'$, i.e., $\mu_N(m)$=$mM$.    
\item The mapping  $\mu_N$ is not defined for the rest of the entities that are not instances of the \texttt{:Observation} or \texttt{:MeasureData} class in $G$.
\end{itemize}

\item For each RDF triple $t$ in ($s \; p \; o$) in $E_G$:
\mbox{}
\begin{itemize}
\item If $\mu_N(s)$ is defined and \texttt{:Observation} is the type of $s$, then the RDF triples  $(s \; \texttt{:observationOf} \; \mu_N(s))$ and $(\mu_N(s)\; \texttt{rdf:type} \; \texttt{:Observation})$ is in $E_{G'}$.
\item If $\mu_N(s)$ is defined and \texttt{:MeasureData} is the type of $s$, then the RDF triple $(\mu_N(s)\; \texttt{rdf:type} \; \texttt{:MeasureData})$ belong to $E_{G'}$.
\item If $\mu_N(s)$ is defined and \texttt{:Observation} is the type of $s$,  and $p$ is not {\tt \footnotesize :result} and {\tt \footnotesize :samplingTime}, then $(\mu_N(s) \;p\;o)$ is in $E_{G'}$.
\item If $p$ is {\tt \footnotesize :samplingTime}, then $(s \;p\;o)$ is in $E_{G'}$.
\item If $\mu_N(s)$ and $\mu_N(o)$ are defined and $p$ is {\tt \footnotesize :result}, then $(\mu_N(s) \;p\;\mu_N(o))$ and $(s \;p\;o)$ are in $E_{G'}$.
\item If $\mu_N(s)$ is defined and type of $s$ is {\tt \footnotesize :MeasureData}, then $(\mu_N(s) \;p\;o)$ is in $E_{G'}$.
\item Otherwise, the RDF triple $t$ is preserved in $E_{G'}$.
\end{itemize}

\item Multiplicity of measurements is reduced, i.e., for all $val$, $uom$ such that $M_m(val,uom|G)\geq$ 1, then $M_m(val,uom|G')$=1, and
\item Multiplicity of observations is reduced, i.e., for all $proc$, $ph$, $pp$, $val$, $uom$, such that, $M_o(proc,ph,pp,val,uom|G)\geq$ 1, then $M_o(proc,ph,pp,val,uom|G')$=1.
\end{itemize}
\end{definition}

\begin{figure*}[tb]
\centering
    \subfloat[Compact Molecules]{
      \includegraphics[width=.25\linewidth]{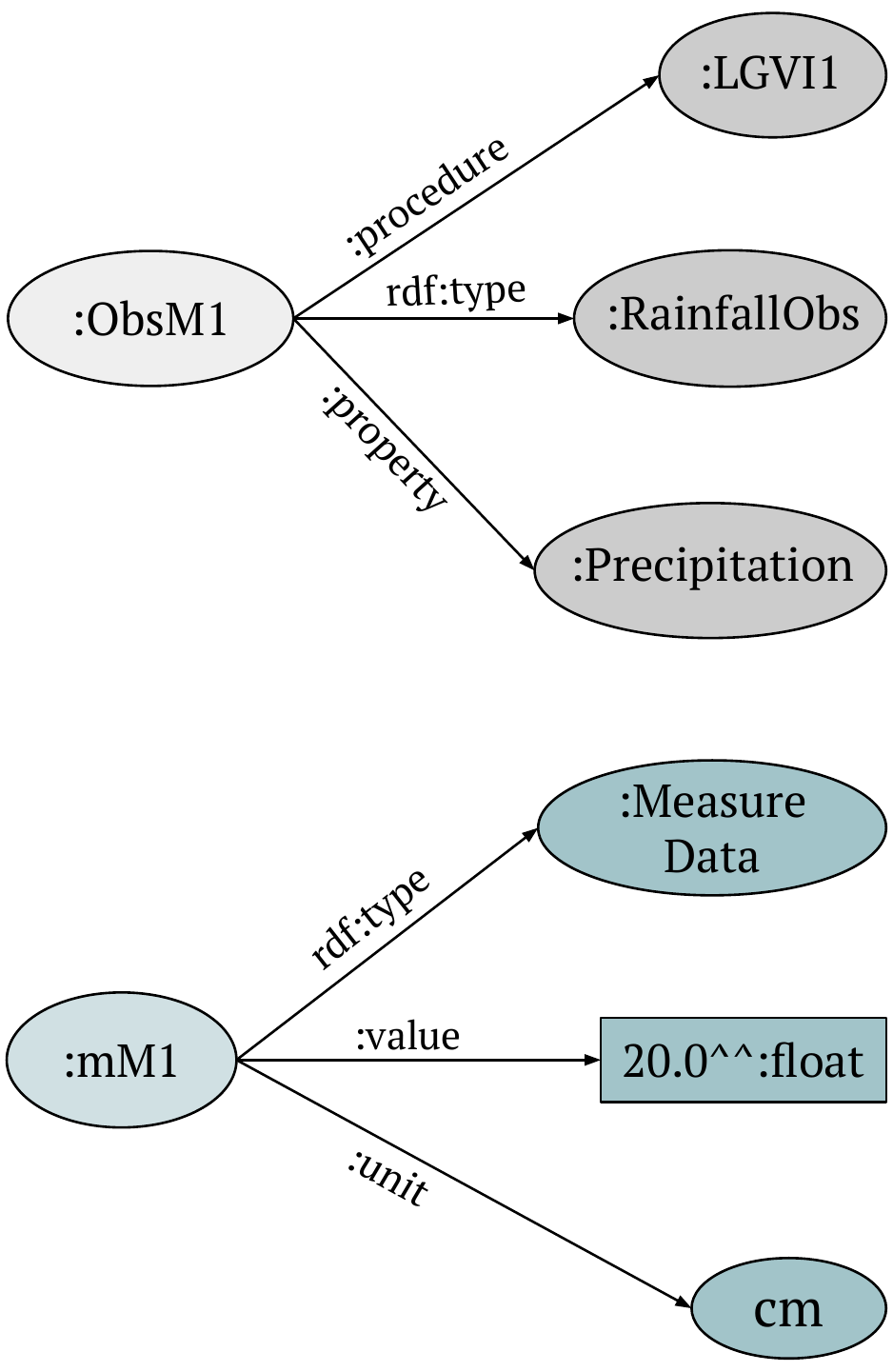}
      \label{fig:compMol}}
       \subfloat[Entity mappings $\mu_N$ from $G$ into $G'$]{
      \includegraphics[width=.4\linewidth]{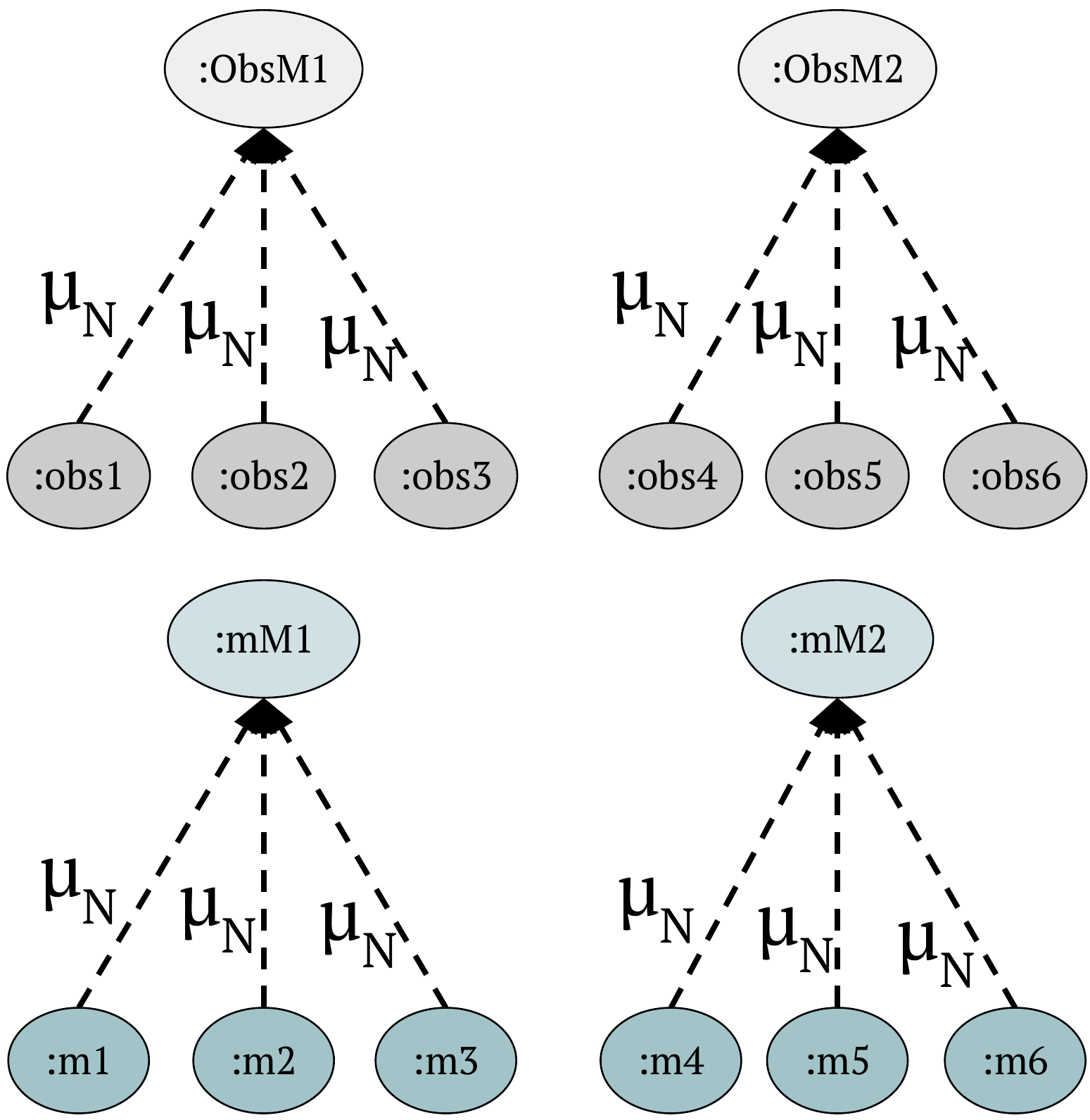}
      \label{fig:homomorphism}}
       \\
       \vspace{5pt}
    \subfloat[Factorized RDF graph $G'$]{
      \includegraphics[width=.6\linewidth]{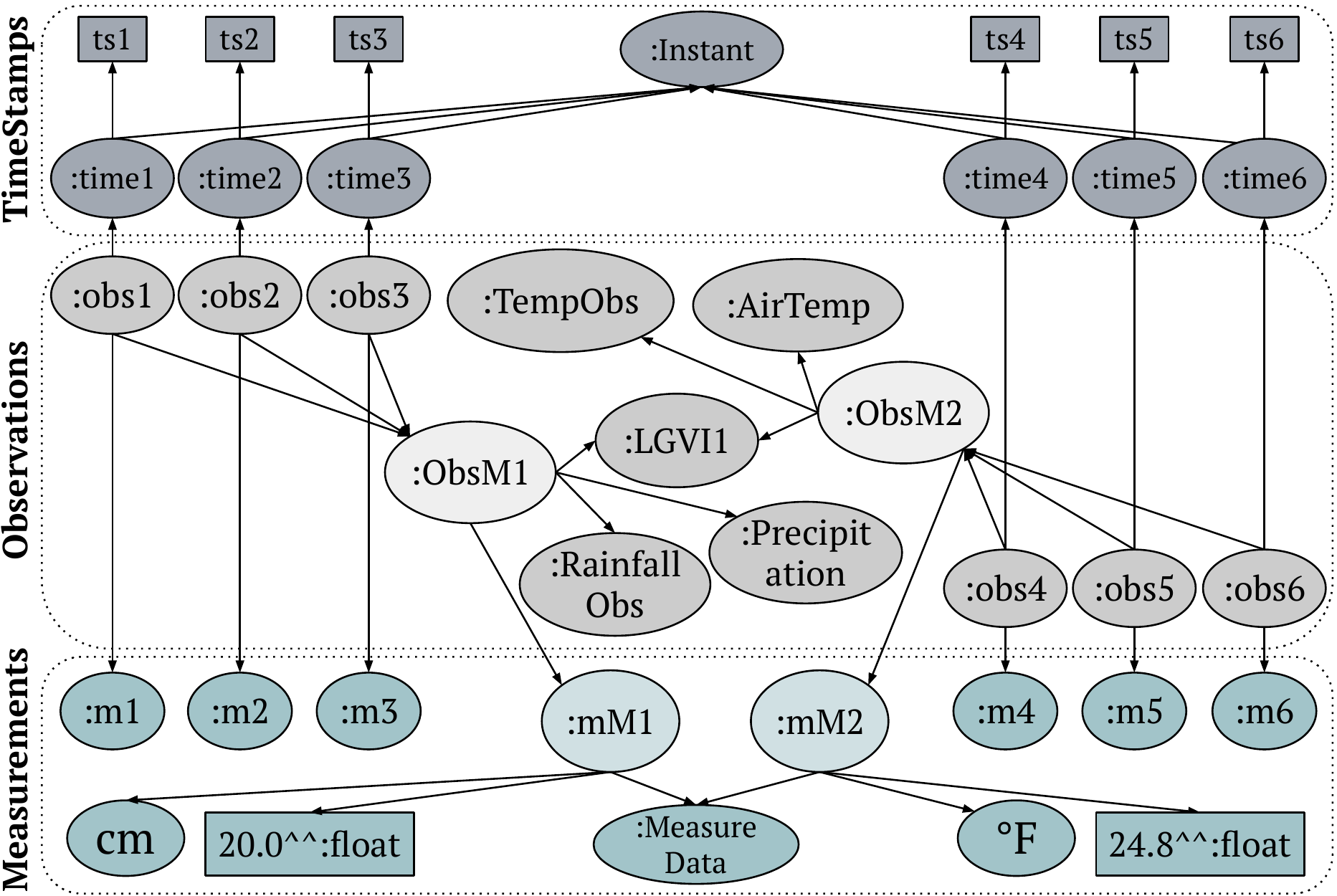}
      \label{fig:factorizedGraph}}
    \caption{{\bf Instance of the Semantic Sensor Data Factorization Problem.} Factorized RDF graph $G'$ of $G$ in Figure~\ref{fig:multiplicityExp}. (a) Compact observation and measurement molecules are presented. (b) Entity mappings $\mu_N$ from graph in \autoref{fig:multiplicityExp} %, that maps \texttt{:m1}, \texttt{:m2}, and \texttt{:m3} to \texttt{:mM1}, and \texttt{:m4}, \texttt{:m5}, and \texttt{:m6} to \texttt{:mM2}, and \texttt{:obs1},  \texttt{:obs2}, and \texttt{:obs3} to \texttt{:oM1}, and \texttt{:obs4},  \texttt{:obs5}, and \texttt{:obs6} to \texttt{:oM2}
    to surrogate entities; (c) Factorized RDF graph $G'$ with  multiplicities equal to one. 
    %Mappings between the observations and the surrogate observation are explicitly stated, while the mappings between measurements and the surrogate measurements are represented through a path in the RDF graph.
    }
\end{figure*}

%\begin{definition}[The SSDF Problem]
%Given an RDF graph $G=(V_G,E_G,L_G)$ representing sensor data with the SSN ontology, the problem of \textit{semantic sensor data factorization (SSDF)}  in $G$, corresponds to finding a \textit{factorized RDF graph} $G'=(V_{G'},E_{G'},L_{G'})$ of $G$.
%\end{definition}
Given an RDF graph $G=(V_G,E_G,L_G)$ representing sensor data with the SSN ontology, the problem of \textit{semantic sensor data factorization (SSDF)}  in $G$, corresponds to finding a \textit{factorized RDF graph} $G'=(V_{G'},E_{G'},L_{G'})$ of $G$.
\begin{table*}[htbp] 
			\centering 
			%\scriptsize
			\caption[Query Rewriting Rules]{{\bf Query Rewriting Rules}. The rewriting rules for observations and measurements with respect to the relevant properties are expressed in terms of triple patterns. The variables representing observations and measurements are replaced in SPARQL query clauses, i.e., SELECT, ORDER BY, GROUP BY, and FILTER.}
			%\resizebox{\columnwidth}{!}{%
			\resizebox{\textwidth}{!}{%
			\begin{tabular}{|c| | l | l |}
				\hline
				\textbf{Rule Name} &\multicolumn{1}{c}{\textbf{Head}} & \multicolumn{1}{|c|}{\textbf{Body}} \\  \hline
				fssn1 & $?obs \; \textit{rdf:type} \; \textit{:Observation}$ &$?obs\;\;\;\;\; \textit{rdf:type} \;\;\;\;\;\;\;\;\;\;\;\;\; \textit{:Observation}$\\
				& & $?Xobs \; \textit{:observationOf} \;\; ?obs $    \\ 
				& & Replace $?obs$ by $?Xobs$ in query clauses \\
				\hline
				fssn2 & $?obs \; \textit{:procedure} \; ?sensor$ & $?obs\;\;\;\;\; \textit{:procedure} \;\;\;\;\;\;\;\; ?sensor$ \\
				&& $?Xobs \; \textit{:observationOf} \;\; ?obs$    \\ 
				& & Replace $?obs$ by $?Xobs$ in query clauses \\
				\hline
				fssn3 & $?obs \; \textit{:property} \; ?property$ & $?obs\;\;\;\; \textit{:property} \;\;\;\;\;\;\;\;\;\;\; ?property$ \\
				&& $?Xobs \; \textit{:observationOf} \;\; ?obs $    \\ 
				& & Replace $?obs$ by $?Xobs$ in query clauses \\
				\hline
				fssn4 & $?m \; \textit{rdf:type} \; \textit{:MeasureData}$ & $?m\; \;\;\;\;\;\; \textit{rdf:type} \;\;\;\;\;\;\;\;\;\;\;\;\; \textit{:MeasureData}$ \\
				&& $?Xobs \;\; \textit{:observationOf} \;\; ?obs$    \\ 
				& & $?Xobs\; \; \textit{:result} \;\;\;\;\;\;\;\;\;\;\;\;\;\;\; ?Xm$ \\
				&& Replace $?m$ by $?Xm$ in query clauses \\
				\hline
				fssn5 & $?m \; \textit{:value} \; ?val$ & $?m\; \;\;\;\;\; \textit{:value} \;\;\;\;\;\;\;\;\;\;\;\;\;\;\;\; ?val$\\
				&& $?Xobs \; \textit{:observationOf} \;\; ?obs$    \\ 
				& & $?Xobs\;  \textit{:result} \;\;\;\;\;\;\;\;\;\;\;\;\;\;\; ?Xm$ \\
				&& Replace $?m$ by $?Xm$ in query clauses \\
				\hline
				fssn6 & $?m \; \textit{:unit} \; ?uom$ & $?m\; \;\;\;\;\; \textit{:unit} \;\;\;\;\;\;\;\;\;\;\;\;\;\;\;\;\; ?uom$ \\
				&& $?Xobs \; \textit{:observationOf} \;\; ?obs$    \\ 
				& & $?Xobs\;  \textit{:result} \;\;\;\;\;\;\;\;\;\;\;\;\;\;\; ?Xm$ \\
				&& Replace $?m$ by $?Xm$ in query clauses \\
				\hline
				fssn7 & $?obs \; \textit{:result} \; ?m$ & $?obs\; \;\;\; \textit{:result} \;\;\;\;\;\;\;\;\;\;\;\;\;\;\;\; ?m$ \\
				&& $?Xobs \; \textit{:observationOf} \;\; ?obs$    \\ 
				& & $?Xobs\;  \textit{:result} \;\;\;\;\;\;\;\;\;\;\;\;\;\;\; ?Xm$\\
				&&Replace $?obs$ by $?Xobs$ and $?m$ by $?Xm$ in query clauses\\
				\hline
			\end{tabular}}
			\label{tab:entailmentRules}    
		\end{table*}
Consider RDF graphs $G$ and $G'$ in ~\autoref{fig:multiplicityExp} and \autoref{fig:factorizedGraph}, respectively. 
Furthermore,  \autoref{fig:homomorphism} shows  mappings $\mu_N$ that assign measurements \texttt{:m1},  \texttt{:m2}, and \texttt{:m3} in $G$ to the surrogate measurement \texttt{:mM1} in $G'$, and \texttt{:m4},  \texttt{:m5}, and \texttt{:m6} to \texttt{:mM2}. Similarly, \texttt{:obs1}, \texttt{:obs2}, and \texttt{:obs3} are mapped to \texttt{:obsM1}, and \texttt{:obs4}, \texttt{:obs5}, and \texttt{:obs6} to \texttt{:obsM2}; $\mu_N$ is the identity for the rest of the nodes.
Measurement and observation multiplicities are one in $G'$, which is the \textit{factorized graph} of $G$. 

Once RDF graphs are factorized, query processing is performed against the factorized graphs. SPARQL queries over the original RDF graphs need to be re-written against the corresponding factorized RDF graphs in the way that equivalent answers are computed.
We have defined seven query rewriting rules, given in Table~\ref{tab:entailmentRules}. Each rule is given a name, i.e., \textit{fssn1}, \textit{fssn2}, \textit{fssn3}, \textit{fssn4}, \textit{fssn5}, \textit{fssn6}, and \textit{fssn7}, and has a head and a body. The head of a rule corresponds to the triple pattern in the query against original RDF graph, whereas the body of the rule represents the corresponding triple patterns against the factorized RDF graph. The head of the rule \textit{fssn1} contains a triple pattern that matches to all the original observations, whereas the body of the rule matches the corresponding surrogate observations. Moreover, the variable substitutions, i.e., $?obs$ by $?Xobs$, are maintained for the query clauses such as SELECT, FILTER, GROUP BY etc. 
The head of rule \textit{fssn2}, matches the procedure generating the original observations, while the body of the rule matches the procedure of the surrogate observations, and keeps the variable substitutions.
Similarly, the head of the rule \textit{fssn3} consists of a triple pattern that matches the observed property in the original RDF graph, and the body of the rule extracts the observed property of the surrogate observations. 

The rules \textit{fssn4}, \textit{fssn5}, and \textit{fssn6} are used to rewrite the triple patterns involving the measurement properties. 
The head of the rule \textit{fssn4} contains a triple pattern matching all the entities of measurements in original RDF graphs. The body of the rule \textit{fssn4} contains three triple patterns that match to the surrogate measurements, as well as, associate the original observations with the surrogate observations, using property \textit{:observationOf}, and relate observations to corresponding measurements in the original RDF graph using property \textit{:result}. Moreover, the body of the rule maintains the measurement variable substitutions. 
The head of the rule \textit{fssn5} find matches of the values of measurements in original RDF graphs, whereas the body of the rule find values of the surrogate measurements in factorized RDF graphs. Further, the triple patterns extract associations between the original and surrogate observations, as well as between the original observations and corresponding original measurements, and maintain the measurement variable substitutions.
The head of rule \textit{fssn6} contains the triple pattern matching the measurement units in original RDF graph, whereas the body matches the unit of the surrogate measurements. Also, body maintains associations between original and surrogate observations and original observations and corresponding measurements along with measurement variable substitutions. 
Finally, the head of the rule \textit{fssn7} maps the original observations and the measurements in original RDF graphs using property \textit{fssn7}. 
The body of the rule \textit{fssn7} find associations between surrogate observations and surrogate measurements in factorized RDF graphs using property \textit{:result}.
Likewise associations between the original and surrogate observations and the original observations and original measurements are maintained. Additionally, the variable substitutions for the observations and measurements are maintained.

Let $G$ and $G'$ be RDF graphs such that $G'$ is a factorized graph of $G$. Consider a SPARQL query $Q$ over $G$.
The problem of \textit{evaluating SPARQL queries against a factorized RDF graph} corresponds to the problem of transforming $Q$ into a SPARQL query $Q'$ over $G'$ such that the results of evaluating $Q$ over $G$ and the results of $Q'$ over $G'$ are the same, i.e.,  the condition $[[Q]]_{G}=[[Q']]_{G'}$ is satisfied.

 An instance of the problem of evaluating queries on factorized RDF graphs is shown in \autoref{fig:queryExecution}. A SPARQL query $Q$ over the RDF graph $G$ in \autoref{fig:multiplicityExp} is presented in Figure~\ref{fig:queryOriginal}. The SPARQL query $Q'$ in \autoref{fig:queryFactorized}, corresponds to a rewriting of $Q$, against $G'$ which represents factorization of $G$. The evaluations of $Q$ and $Q'$ 
produce the same answers. In this work, we present SPARQL query rewriting rules that allow for rewriting a query $Q$ into a query $Q'$.

\begin{figure*}[ht!]
\centering
 \vspace{0pt}\subfloat[SPARQL Query over Original RDF Graph]{
      \includegraphics[width=.7\linewidth]{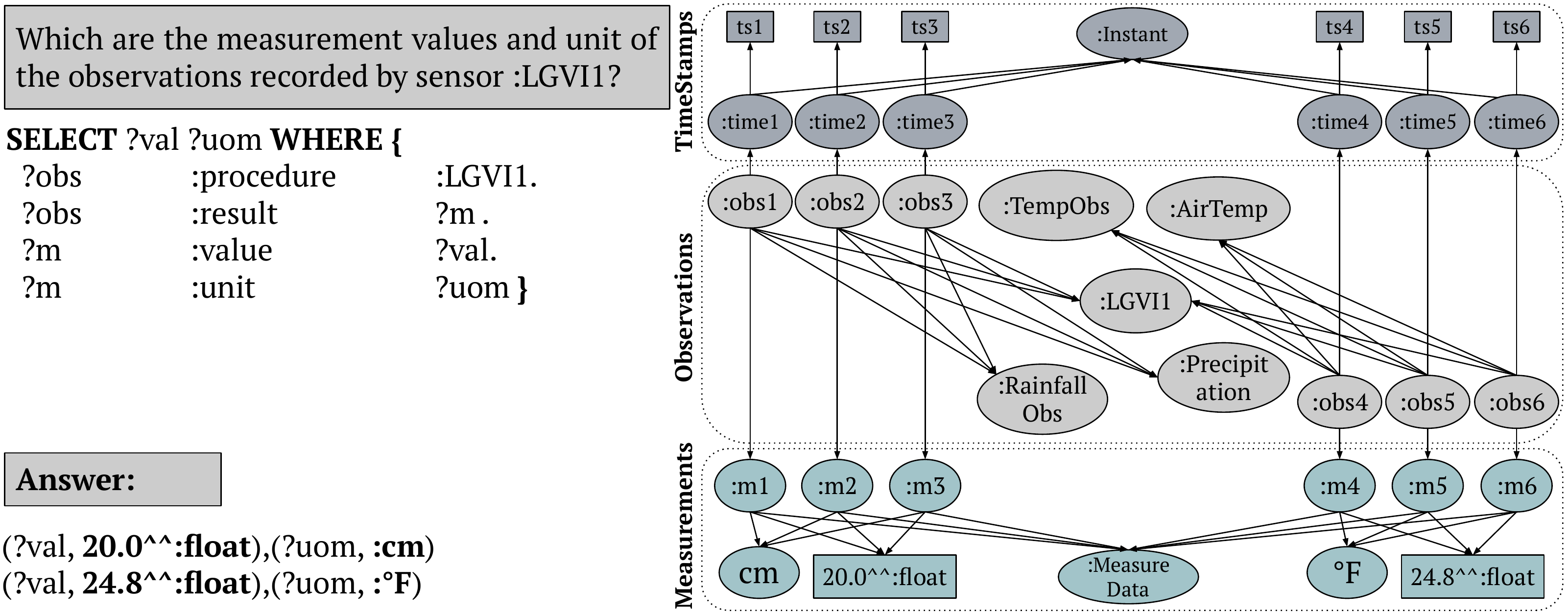}
      \label{fig:queryOriginal}}  
     \\
       \vspace{5pt}\subfloat[SPARQL Query over Factorized RDF Graph]{
      \includegraphics[width=.7\linewidth]{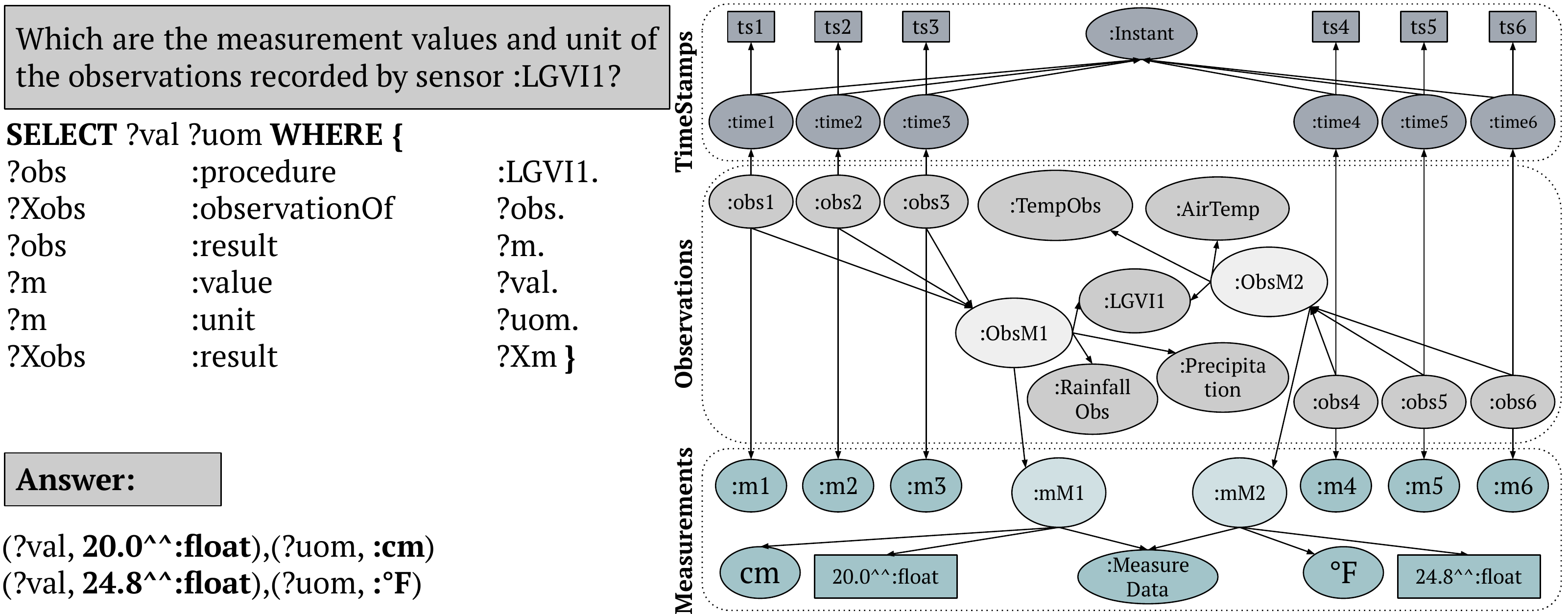}
      \label{fig:queryFactorized}} 
    \caption{{\bf Instance of the Query Evaluation Problem}. Evaluation of SPARQL queries over the original and factorized RDF graphs respects set semantics. (a) SPARQL query over original RDF graph selects the values and unit collected by \textit{:LGVI1}; (b) Rewritten SPARQL query over factorized RDF graph, in Figure~\ref{fig:factorizedGraph}.}
     \label{fig:queryExecution}
\end{figure*}

\subsection{A Factorization Approach}

We present a solution to the \textit{semantic sensor data factorization (SSDF)} problem.% factorizing RDF graphs describing semantic sensor data. 
A sketch of the proposed factorization approach is presented in Algorithm~\ref{algo:factorization}.
The algorithm receives an RDF graph $G(V_G,E_G,L_G)$, and a graph $G''(V_{G''},E_{G''},L_{G''})$ representing an already factorized graph and the entity mappings $\mu_{N''}$ in $G''$.  The algorithm incrementally generates a factorized RDF graph $G'(V_{G'},E_{G'},L_{G'})$ of $G$, and the entity mappings $\mu_N$ from the observations and measurements in $G$ to the surrogate observations and measurements in $G'$, respectively. The algorithm initializes a set of entity mappings $\mu_N$ and the sets of nodes $V_{G'}$, edges $E_{G'}$, and labels $L_{G'}$ of the factorized graph $G'$ (line 1).
If for all the observations with observed phenomenon $ph$, sensor procedure $proc$, observed property $pp$, and for the corresponding measurements with value $val$ and the related unit $uom$ in $G$, 
%the algorithm (lines 2-9) checks for the existence of the corresponding surrogate observation and measurement in $G'$. If 
the surrogate observation and measurement are already in $G''$, then the observations and measurements in $G$ are mapped in $\mu_N$ to the surrogate observation and measurement in $G''$, respectively(lines 2-9). Furthermore, observations in $G$ are linked using the property {\tt :observationOf} to the surrogate observations in $\mu_N$(line 6-8). 
If the surrogate observation and measurement are not in $G''$, the algorithm (lines 11) creates corresponding surrogate entities in $G'$, i.e., the subjects of compact measurement and observation molecules are created.
In lines 12-13, the algorithm maps all the measurements, related to $val$ and $uom$ in $G$, to the surrogate measurements in $\mu_N$.
For all the observations with observed phenomenon $ph$, sensor procedure $proc$, observed property $pp$, measurement value $val$ and unit of measurement $uom$ in $G$, adds in $\mu_N$ the mappings of all the observations in $G$ with the surrogate observations in $G'$ in lines 14-15.
%Once all the mappings of the measurements and observations in $G$ to the corresponding surrogate measurements and observations, respectively, in $G'$ are in $\mu_N$, the factorized graph $G'$ is created using $\mu_N$ (lines 10-20).
Once the mappings are in $\mu_N$, the nodes and edges representing the mapped observations and measurements in $G$ are processed. All nodes $s$ and $o$ related to the property {\tt :result} in $G$ are added to $G'$ along with their associations. Moreover, a new edge relating $s$ and $\mu_N(s)$ using the property {\tt :observationOf} is added to $G'$ (lines 18-20).
If $s$ and $o$ are linked using a property {\tt rdf:type} and $o$ is either {\tt :Observation} or {\tt :MeasureData}, then a new edge ($\mu_N(s)\; p\; o$) is added to $G'$ along with $\mu_N(s)$ and $o$ (lines 21-22).
If $s$ and $o$ are associated through a predicate $p$ in \{:procedure, :property, :value, :unit\}, then a new edge ($\mu_N(s)\; p\; o$) is added to $G'$ in lines 23-24.
Otherwise, the edge $(s\; p\; o)$ is added to the $G'$ in lines 25-26.

\begin{algorithm}[!htbp]
\begin{scriptsize}
\caption{The Incremental Factorization Algorithm}
\label{algo:factorization}
\KwIn{An RDF graph $G(V_G,E_G,L_G)$, Previously factorized RDF Graph $G''(V_{G''},E_{G''},L_{G''})$, and entity mappings $\mu_{N''}$} 
\KwOut{ Factorized RDF Graph $G'(V_{G'},E_{G'},L_{G'})$, and entity mappings $\mu_N$ } 
\DontPrintSemicolon
$\mu_N \longleftarrow \mu_{N''}, V_{G'} \longleftarrow V_{G''}, E_{G'} \longleftarrow E_{G''}, L_{G'} \longleftarrow L_{G''}$\;
\ForAll{$proc, ph, pp, val, uom \in V_G$ such that $SO=
\{obs|(obs\; \texttt{rdf:type}\; ph) \in G,$
$(obs \;\texttt{:procedure} \; proc) \in G,$ 
$(obs\; \texttt{:property}\; pp) \in G, 
(obs\; \texttt{:result}\; m) \in G, 
(m \;\texttt{rdf:type} \;\texttt{:MeasureData}) \in G,
(m\; \texttt{:unit}\; uom) \in G, 
(m\; \texttt{:value}\; val) \in G\}$, and $SM=\{m|(m\; \texttt{rdf:type}\; \texttt{:MeasureData}) \in G, (m\; \texttt{:unit}\; uom) \in G,
(m\; \texttt{:value}\; val) \in G \}$}{
\If{$\exists mM$, $oM$ such that $
(mM \;\texttt{rdf:type} \;\texttt{:MeasureData}) \in G'', $
$(mM\; \texttt{:unit}\; uom) \in G'', $
$(mM\; \texttt{:value}\; val) \in G'',
(oM\; \texttt{rdf:type}\; ph) \in G'',
(oM \;\texttt{:procedure} \; proc) \in G'',
(oM\; \texttt{:property}\; pp) \in G'',$ and 
$(oM\; \texttt{:result}\; mM) \in G''$
}{
\ForEach{$(s\; \texttt{rdf:type}\; o) \in E_G \land s, o \in V_G \land  \texttt{rdf:type} \in L_G$ such that $s \in SM \cup SO $}{
\If{$s \in SM$}{
$\mu_N\leftarrow\mu_N \cup \{(s,mM)\}$
}
\Else{$\mu_N \leftarrow \mu_N \cup \{(s,oM)\}, E_{G'} \leftarrow E_{G'} \cup (s\;\texttt{:observationOf}\; \mu_N(s))\}$\;
     $V_{G'} \leftarrow V_{G'} \cup \{s,o\}$,
     $L_{G'} \leftarrow L_{G'} \cup \{\texttt{:observationOf}\}$
}
}
}
\Else{
$mM \leftarrow SurrogateMeasurement(), oM \leftarrow SurrogateObservation()$\;
\ForEach{$m \in SM$}{
$\mu_N\leftarrow\mu_N \cup \{(m,mM)\}$ \\
}
\ForEach{$obs \in SO$}{
$\mu_N \leftarrow \mu_N \cup \{(obs,oM)\}$\;
}

}
\ForEach{$(s\; p\; o) \in E_G \land s, o \in V_G \land p \in L_G$ }{
\If{$s \in SM \cup SO $}{
\If{$p==\texttt{:result}$}{
    $E_{G'} \leftarrow E_{G'} \cup \{(s\;p \;o), (\mu_N(s)\;p\;\mu_N(o)), (s\;\texttt{:observationOf}\; \mu_N(s))\}$\;
     $V_{G'} \leftarrow V_{G'} \cup \{s,o,\mu_N(s),\mu_N(o)\}$,
     $L_{G'} \leftarrow L_{G'} \cup \{p,\texttt{:observationOf}\}$
  }
    \ElseIf{$p==\texttt{rdf:type}\; \&\&\; (o==\texttt{:Observation}  ||  o==\texttt{:MeasureData})$}{ 
     $E_{G'} \leftarrow E_{G'} \cup \{(\mu_N(s)\; p\; o)\}$, $V_{G'} \leftarrow V_{G'} \cup \{\mu_N(s),o\}$, $L_{G'} \leftarrow L_{G'} \cup \{p\}$
  }
    \ElseIf{$ p==\texttt{:procedure}  ||  p==\texttt{:property}  ||  p==\texttt{:value} ||  p==\texttt{:unit}$}{ 
      $E_{G'} \leftarrow E_{G'} \cup \{(\mu_N(s)\; p\; o)\}, V_{G'} \leftarrow V_{G'} \cup \{\mu_N(s),o\}$, $L_{G'} \leftarrow L_{G'} \cup \{p\}$
  }
  \Else{
    $E_{G'} \leftarrow E_{G'} \cup \{s\; p\; o)\}$, $V_{G'} \leftarrow V_{G'} \cup \{s,o\}$, $L_{G'} \leftarrow L_{G'} \cup \{p\}$
  }
}
 %\Else{
    %$E_{G''} \leftarrow E_{G''} \cup \{s\; p\; o)\}$, $V_{G''} \leftarrow V_{G''} \cup \{s,o\}$, $L_{G''} \leftarrow L_{G''} \cup \{p\}$}
}
}
\Return {$G'(V_{G'},E_{G'},L_{G'}),\mu_N$ }
\end{scriptsize}
\end{algorithm}

Figure~\ref{fig:transformationInstance} depicts a portion of the RDF in Figure~\ref{fig:multiplicityExp} and the corresponding transformation in the factorized RDF graph in ~\autoref{fig:factorizedGraph}.
The surrogate measurements and observations, and the new edges are highlighted in bold. The Algorithm~\ref{algo:factorization} creates the surrogate measurements and observations in  line 3 and 7; new edges are created in line 12, 15, and 17.
Additionally, assumptions about the characteristics of the associations between the nodes in the graph are presented. While some edges existing in the RDF graph in \autoref{fig:multiplicityExp} are not present in the factorized RDF graph, these associations can be obtained by traversing the graph through the surrogate observations and measurements. The implicit satisfaction of all the associations in the original RDF graph that are not included in the factorized graph is restricted under the following assumptions: 

\begin{figure*}[ht!]
\centering
 \subfloat[Query Rewriting]{
      \includegraphics[width=.7\linewidth]{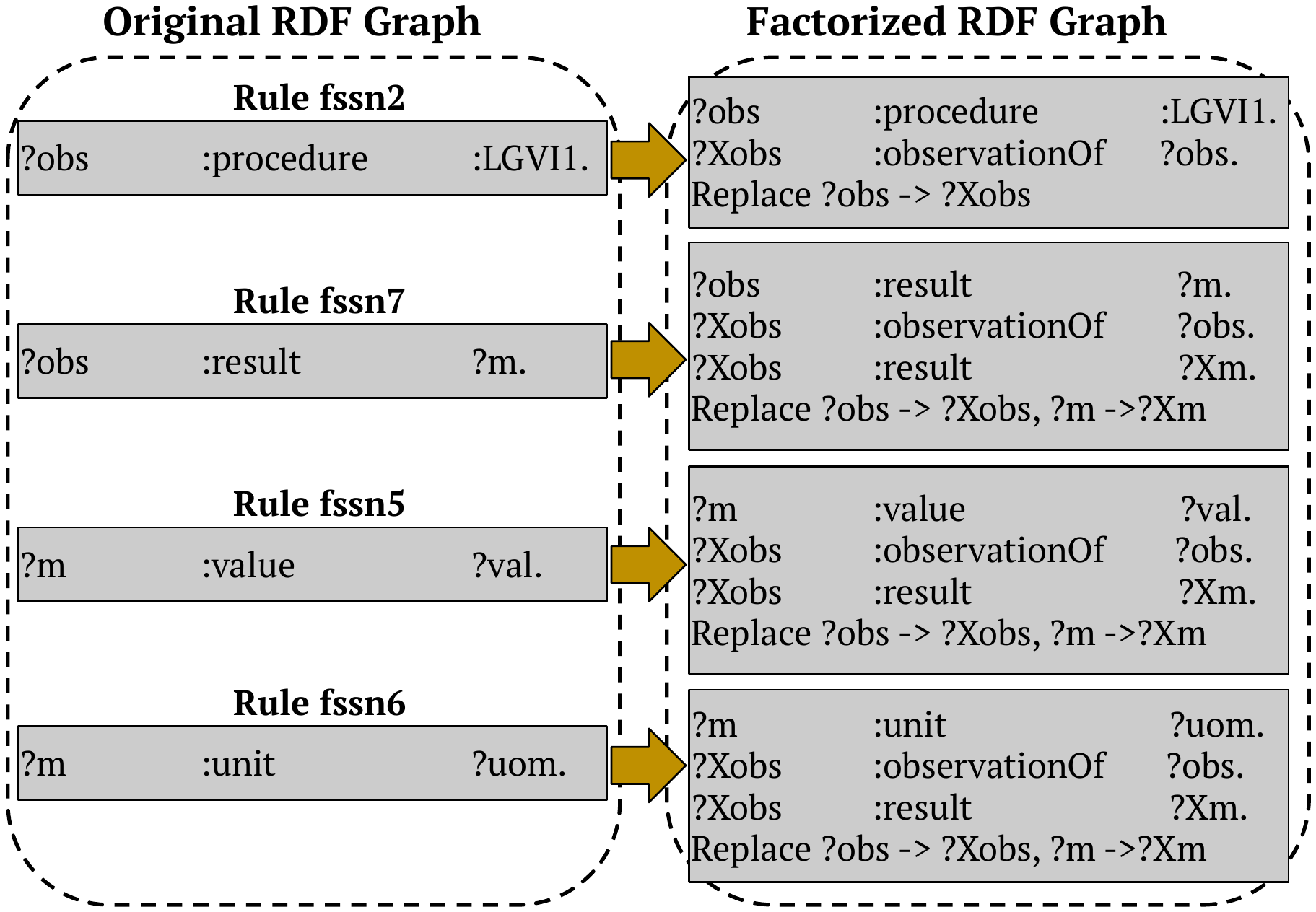}
      \label{fig:transformations}}  
      \\
       \vspace{5pt}\subfloat[Original and Factorized RDF Graphs]{
      \includegraphics[width=.7\linewidth]{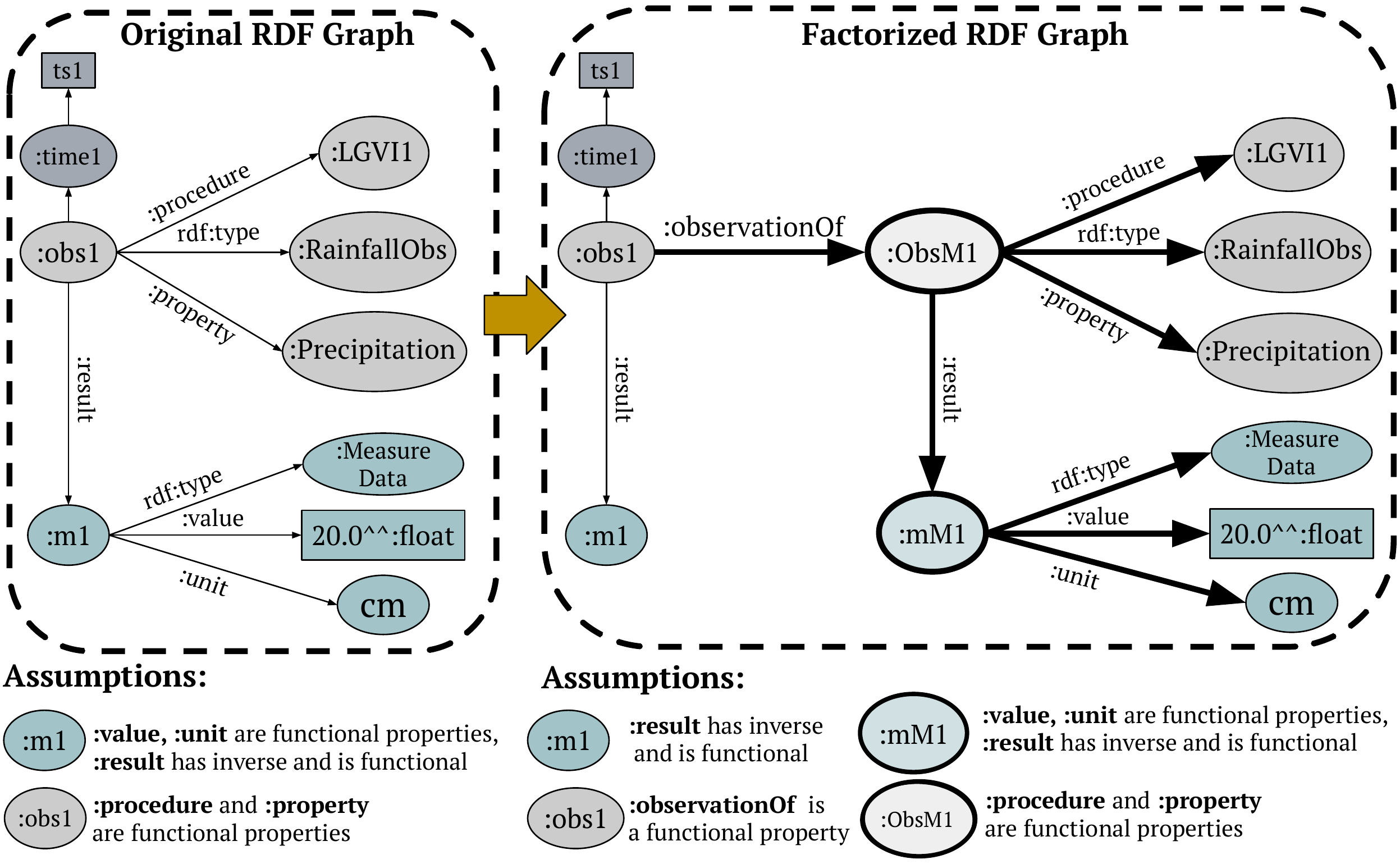}
      \label{fig:transformationInstance}} 
    \caption{{\bf Example of Query Rewriting}. Query rewriting rules are presented. (a) Query rewriting rules from Table~\ref{tab:entailmentRules} are used to rewrite the query in Figure~\ref{fig:queryOriginal} into the query in Figure~\ref{fig:queryFactorized}. (b) Portions of the RDF graphs (original and factorized). Nodes and edges highlighted in bold are added during the RDF graph factorization.
    }
    \label{fig:GraphFactorization}
\end{figure*}

For all observations \texttt{:obs} and measurements \texttt{:m} in $G$, the following hold. 
\begin{itemize}
\item \textbf{Measurement}: the properties \texttt{:value} and \texttt{:unit} of measurement are both functional properties for any measurement \texttt{:m}. Furthermore, the property \texttt{:result} that associates an observation with a measurement has a functional inverse.  
\item \textbf{Observation}: the property \texttt{:procedure} that associates an observation and a procedure is a functional property for any observation \texttt{:obs}.
\end{itemize}

The following hold for a surrogate observation \texttt{:obsM}, a surrogate measurement \texttt{:mM}, an observation \texttt{:obs}, and a measurement \texttt{:m} in $G'$.

\begin{itemize}
\item \textbf{Surrogate Observation}: The properties \texttt{:procedure} and \texttt{:property} are functional properties for any surrogate observation \texttt{:obsM}. 
\item \textbf{Surrogate Measurement}: \texttt{:value} and \texttt{:unit} are both functional properties for any surrogate measurement \texttt{:mM}. Furthermore, \texttt{:result} that associates a surrogate observation with a surrogate measurement has a functional inverse.  
\item \textbf{Observation}: \texttt{:observationOf} property that associates an observation \texttt{:obs} with a surrogate observation is a functional property.
\item \textbf{Measurement}: \texttt{:m} is related to only one observation, i.e., 
\texttt{:result} associates an observation with a measurement, and has a functional inverse.  
\end{itemize}  

We are assuming that SPARQL queries against the original and factorized RDF graphs are evaluated under the set semantics, i.e., no duplicates are in the answers. 
Coming back to the motivating example, \autoref{fig:motivatingExampleFactorization} illustrates the factorized RDF graph of the graph in \autoref{fig:motivatingExample}.
The factorized RDF graph in \autoref{fig:FactorizedRDFGraph} is sparse and the average number of neighbors has been reduced from 6.4 to 2.5.
This indicates that the number of RDF triples describing an observation is reduced after factorization.
\autoref{fig:factorizedNT} shows that for each measurement value the number of associated RDF triples in the factorized RDF graph is reduced by 74\%.

\subsection{Queries over Factorized RDF Graphs}
In this section, we define the algorithm that solves the problem of query evaluation on a factorized RDF graph. 
Table~\ref{tab:entailmentRules} presents the rules to rewrite a SPARQL query against an original SSN RDF graph into a query against the corresponding factorized RDF graph.
The query rewriting rules are defined in terms of SPARQL triple patterns.
For each property of the observation and measurement classes, a rewriting rule is defined. Furthermore, substitutions for the observation and measurement variables in the query clauses, i.e., SELECT, ORDER BY, GROUP BY, and FILTERS etc, are presented.
Given a SPARQL query and a set $R$ of query rewriting rules, the Algorithm~\ref{algo:query} describes the steps performed to each set of triple patterns that composes a Basic Graph Pattern (BGP). If the input query consists of several BGPs, the structure of the original query remains the same, and Algorithm~\ref{algo:query} is applied to each BGP within the query using rules in \autoref{tab:entailmentRules}. 

\begin{algorithm}[!ht]
%\begin{scriptsize}
\caption{The Query Rewriting Algorithm}
\label{algo:query}
\KwIn{Set $ST$ of triple patterns in a BGP of $Q$ and set $SR$ of query rewriting rules} 
\KwOut{$ST_{new}$ the rewriting of $ST$ under $SR$} 
\DontPrintSemicolon
$ST_{new} \longleftarrow \emptyset$\\
\ForEach{$t \in ST$}{
\begin{itemize}
    \item Select $r \in SR$ such that $t$ matches the head\\ of $r$ and instantiate the body of $r$
    \item Let $SQ_t$ be the matched body of $r$ and $variableSubstitutions$ be the set of mapp-\\ings between variables in $t$ into $SQ_t$, add ($t, SQ_t, variableSubstitutions$) to $ST_{new}$
\end{itemize}
}
\Return {\tt $ST_{new}$}
%\end{scriptsize}
\end{algorithm}

\begin{figure*}[ht!]
\centering
\subfloat[Fact. {\bf RDF Graph} ]{\includegraphics[width=.25\linewidth]{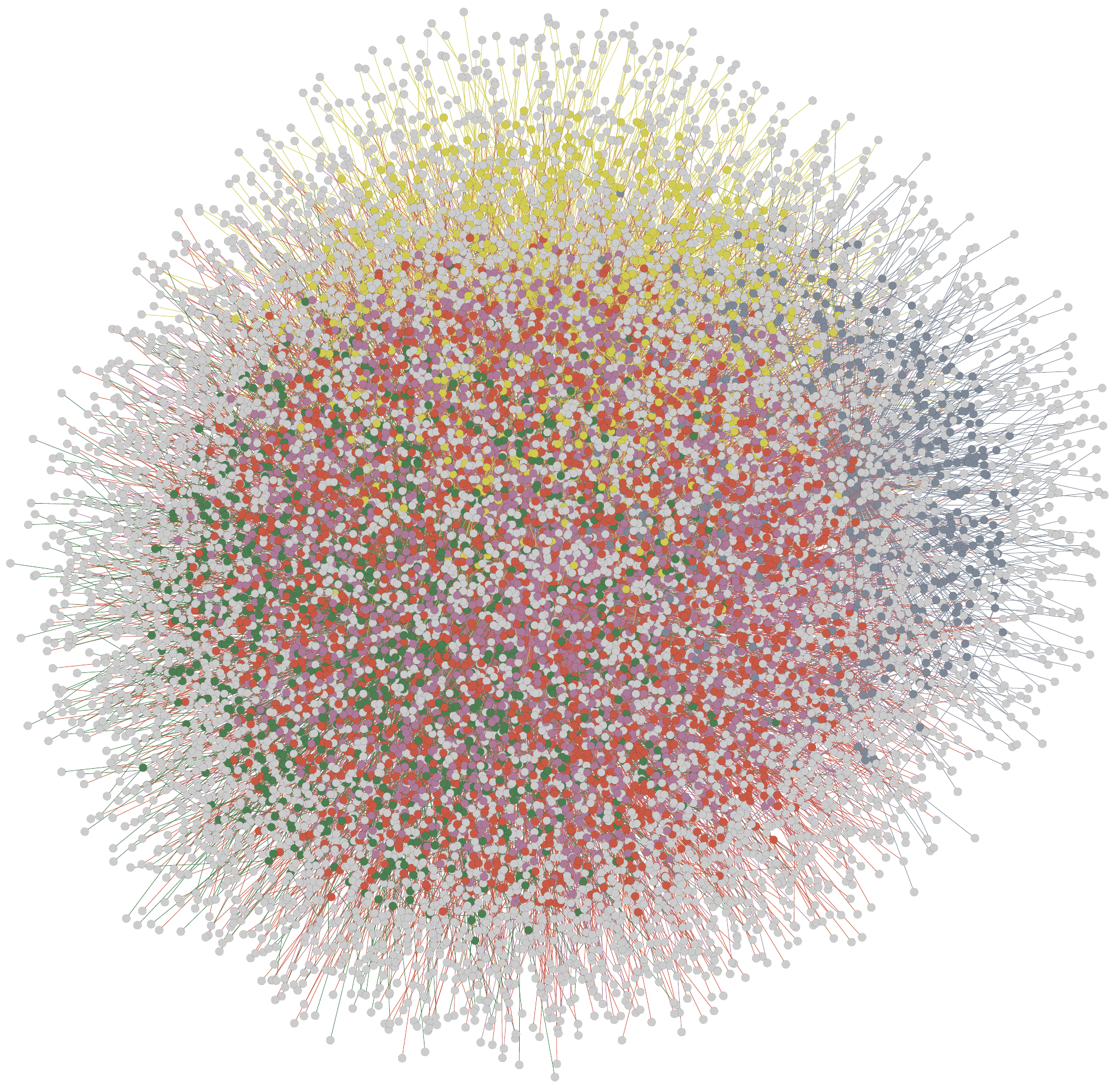}\label{fig:FactorizedRDFGraph}} 
\subfloat[{\bf Statistics} of Factorized RDF Graph]{
	\begin{tabular}[b]{|c|c|c|}
		\hline
       \multirow{1}{*}{\textbf{S\#}} & \multicolumn{1}{c|}{\multirow{1}{*}{\textbf{Parameter}}} & \multirow{1}{*}{\textbf{Value}}   \\ 
        \hline
        \multirow{1}{*}{$1$}&\multirow{1}{*}{Connected Components}&\multirow{1}{*}{$1.0$} \\   \hline
	    \multirow{1}{*}{$2$}&\multirow{1}{*}{Network Centralization}&\multirow{1}{*}{$0.1$} \\   \hline
	    \multirow{1}{*}{\textbf{3}}&\multirow{1}{*}{\textbf{Avg. \# of Neighbors}} &\multirow{1}{*}{\textbf{2.5}} \\   
	   \hline
	    \multirow{1}{*}{$4$}&\multirow{1}{*}{Network Density}&\multirow{1}{*}{$0.0$} \\   
    \hline
	    \multirow{1}{*}{5}&\multirow{1}{*}{Multi-edge Node Pairs}&\multirow{1}{*}{5.0}  \\    \hline
	    \multirow{1}{*}{$6$}&\multirow{1}{*}{Network Heterogeneity}&\multirow{1}{*}{$9.2$}\\      \hline
	\end{tabular}
\label{fig:statsFactorized}} 
\\
\vspace{5pt}
\subfloat[{\bf NT} Fact. vs Original ]{\includegraphics[width=0.6\linewidth]{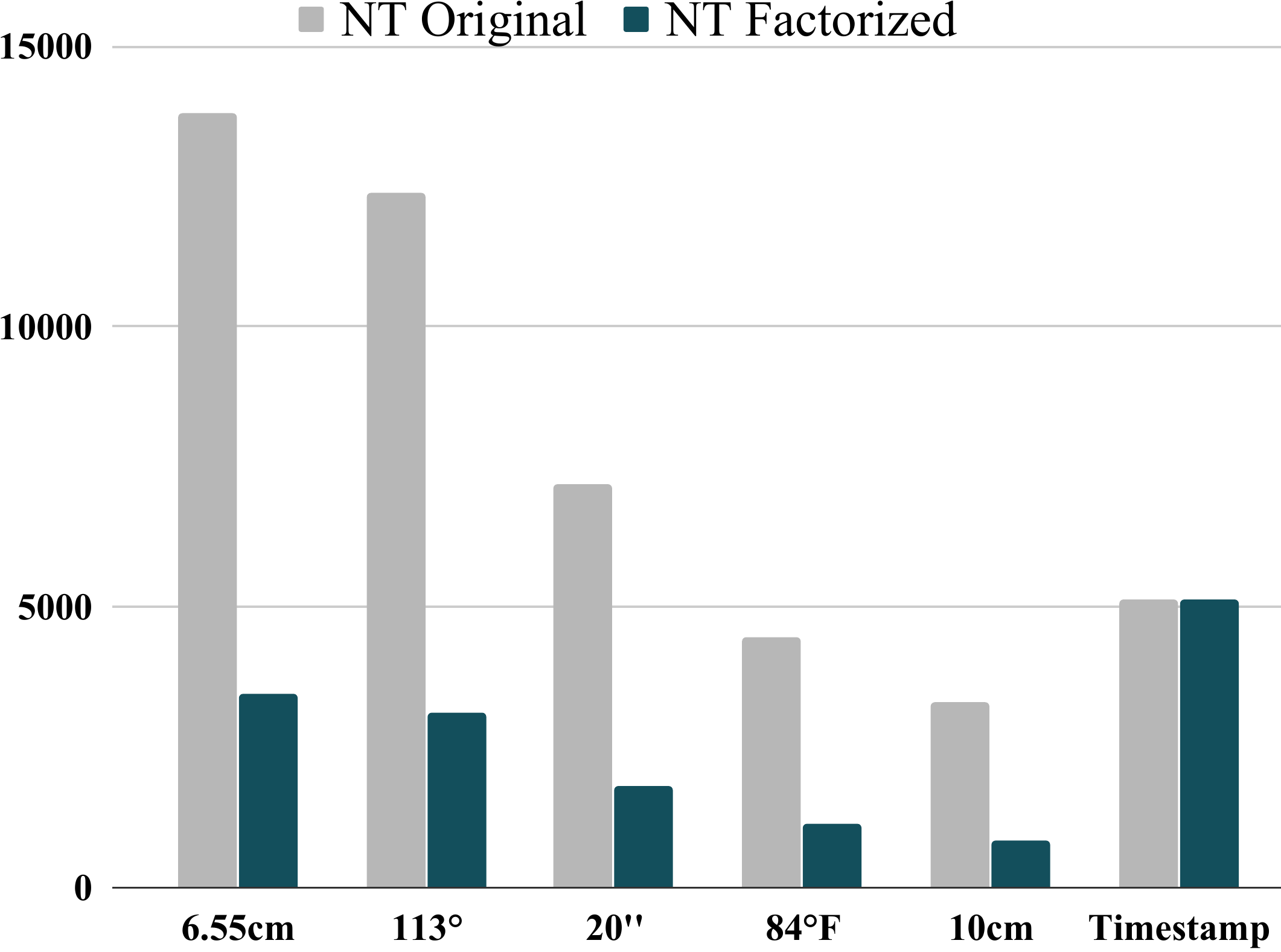}\label{fig:factorizedNT}}
\caption{{\bf Factorization of the Running Example.} Factorization reduces the number of RDF triples related to the same value. (a) Factorized (Fact.) RDF Graph of ~\autoref{fig:OriginalRDFGraph}; (b) Statistics of the fact. RDF graph; (c) The number of factorized triples. The graph and statistics are generated by the {\tt Cytoscape tool} (\url{http://www.cytoscape.org/}).}
\label{fig:motivatingExampleFactorization}
\end{figure*}

\autoref{fig:queryExecution} presents two SPARQL queries: \autoref{fig:queryOriginal} and \autoref{fig:queryFactorized} present an original query $Q$ and rewriting of $Q$ produced by Algorithm~\ref{algo:query}. 
\autoref{fig:transformations} shows the rewriting of SPARQL query in \autoref{fig:queryOriginal}. Rules \textit{fssn2}, \textit{fssn5}, \textit{fssn6}, and \textit{fssn7} from \autoref{tab:entailmentRules} are used to rewrite the query. The algorithm replaces each triple pattern in a BGP that instantiates the head of a rule in $SR$ by the body of the rule, e.g., the triple pattern ($?obs\;\textit{:procedure}\;\textit{:LGVI1}$) instantiates the head of rule \textit{fssn2}, thus, the triple pattern in the BGP is replaced with the body of \textit{fssn2}, as shown in \autoref{fig:transformations}. Moreover, the variables corresponding to the observations and measurements in the original query represent the surrogate observations and measurements in the rewritten query, consequently, these variables are replaced by the new variables in the query clauses. The variable substitution for observation $?obs$ by $?Xobs$ is maintained during the rewriting process using rule \textit{fssn2} in order to retrieve the original observations, if required. Similarly, other triple patterns in the BGP each matching the head of a rule, i.e., \textit{fssn5}, \textit{fssn6}, and \textit{fssn7}, are replaced by the body of the rule, and the variable substitutions of $?obs$ and $?m$ by $?Xobs$ and $?Xm$, respectively, are maintained for the query clauses.
The evaluation of both, original and rewritten, queries produce the same results. 
Another important property is that the time complexity of the original and rewritten queries is also the same.

\begin{thm}
Given  $G$ and $G'$ such that $G'$ is a factorized RDF graph of $G$.   Let $Q$ and $Q'$ be SPARQL queries where $Q'$ is a rewritten query of $Q$ over $G'$ generated by Algorithm~\ref{algo:query}. The problem of evaluating $Q'$ against $G'$ is in: (1) PTIME if query $Q$ has only AND and FILTER operators; (2) NP-complete if query $Q$ has expressions with AND, FILTER, and UNION operators; and (3) PSPACE-complete for OPTIONAL graph pattern expressions.
\label{teo:complexity}
\end{thm}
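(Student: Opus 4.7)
The plan is to reduce the complexity of evaluating $Q'$ over $G'$ to the well-known complexity results for SPARQL evaluation (see Pérez, Arenas and Gutiérrez) by showing that Algorithm~\ref{algo:query} is a structure-preserving rewriting with only a constant blow-up in the number of triple patterns. First, I would observe that Algorithm~\ref{algo:query} operates BGP-by-BGP: the outermost Boolean/relational structure of $Q$ (the nesting of \texttt{AND}, \texttt{FILTER}, \texttt{UNION}, and \texttt{OPTIONAL} operators) is left untouched, and only each triple pattern $t$ inside a BGP is replaced by the body of a matching rule in Table~\ref{tab:entailmentRules}. Since every rule body contains at most three triple patterns, if $|Q|$ denotes the size of $Q$ and $Q$ contains $k$ triple patterns, then $Q'$ contains at most $3k$ triple patterns and the total size satisfies $|Q'| \leq c \cdot |Q|$ for some constant $c$. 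In particular $Q'$ belongs to exactly the same SPARQL fragment (AND/FILTER, AND/FILTER/UNION, or full OPTIONAL) as $Q$.

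Next, I would appeal to the classical complexity results for the \textsc{SPARQL-Evaluation} problem: evaluating a conjunctive (AND/FILTER) SPARQL query over an RDF graph is in \textbf{PTIME}; adding \texttt{UNION} makes the problem \textbf{NP-complete}; and allowing \texttt{OPTIONAL} makes it \textbf{PSPACE-complete}. These bounds are with respect to combined complexity on the size of the query and the graph. Applied to the rewritten pair $(Q',G')$, the upper bounds of (1), (2), (3) follow immediately, because $Q'$ lies in the same fragment as $Q$, and $|Q'|$ and $|G'|$ are at most polynomial (indeed linear) in $|Q|$ and $|G|$ respectively (the factorization algorithm adds at most a linear number of surrogate entities and edges).

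For the hardness parts of (2) and (3), I would show that the PSPACE-hardness and NP-hardness reductions for unrewritten SPARQL can be cast through our rewriting. Concretely, given an arbitrary instance of the hard problem one takes an ordinary RDF graph $G$ together with the SPARQL query $Q$ produced by the reduction, and observes that $G$ can be regarded (with trivial padding) as a factorized graph $G'$ in which every surrogate observation/measurement has multiplicity $1$; the rewriting of $Q$ by Algorithm~\ref{algo:query} only expands triple patterns by a constant factor and preserves operator structure, so it produces a query $Q'$ of size linear in $Q$ whose evaluation over $G'$ coincides with $[[Q]]_G$. Hardness therefore transfers.

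The main obstacle, and the step that deserves the most care, is making precise the claim that Algorithm~\ref{algo:query} preserves the query structure and only inflates the query by a constant factor: one has to check that variable substitutions in the \texttt{SELECT}, \texttt{FILTER}, \texttt{GROUP BY} and \texttt{ORDER BY} clauses produced by the rules do not introduce fresh operators or increase the nesting depth, and that the bodies of the rules are BGPs themselves so that the rewriting stays inside the intended fragment. Once this syntactic invariant is established, the three complexity bounds follow from the corresponding results on standard SPARQL evaluation combined with the equivalence $[[Q]]_G = [[Q']]_{G'}$ guaranteed by the rewriting.
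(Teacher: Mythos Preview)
Your proposal is correct and, in fact, more complete than the paper's own argument. Both proofs rest on the same core observation: Algorithm~\ref{algo:query} only replaces triple patterns inside a BGP by the body of a rule (a BGP of at most three triple patterns), so the operator structure of $Q$ is preserved and no \texttt{UNION} or \texttt{OPTIONAL} is introduced. The paper packages this as a short proof by contradiction---assume $Q'$ is harder than $Q$, note that this would require a new \texttt{UNION} or \texttt{OPTIONAL} in $Q'$, and observe that the algorithm adds only \texttt{AND}s---which establishes the upper bounds but leaves the hardness parts of (2) and (3) implicit.

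Your route is a direct reduction to the P\'erez--Arenas--Guti\'errez classification: you quantify the size blow-up ($|Q'|\le c|Q|$), invoke the known combined-complexity bounds for each fragment to get membership, and then argue separately that the NP- and PSPACE-hardness reductions for ordinary SPARQL can be pushed through the rewriting by viewing an arbitrary RDF graph as a degenerate factorized graph. This buys you an explicit treatment of the lower bounds, which the paper's proof does not spell out, and it makes transparent why the linear blow-up matters. The one place to tighten is the ``trivial padding'' step in the hardness argument: you should say precisely how an arbitrary $G$ is dressed up as a factorized graph $G'$ (e.g., by adding vacuous surrogate molecules so that the structural conditions of Definition~4.9 are met) and check that the corresponding $Q'$ still encodes the original hard instance. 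Once that is made explicit, your argument is strictly more informative than the paper's.
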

\begin{proof}
We proceed with a proof by contradiction. Assume that complexity of $Q'$ is higher than $Q$. Then, UNION  or OPTIONAL operators not included in $Q$ are added to $Q'$. However,  Algorithm~\ref{algo:query} only changes triple patterns over $G$ by triple patterns against $G'$. Additionally, Algorithm~\ref{algo:query} includes new JOINs (AND operator). However, adding AND or FILTER operators does not affect the complexity of the problem of evaluating $Q'$ over $G'$, and contradicting the fact that the complexity of $Q'$ is higher than $Q$.  
\end{proof}

\begin{figure*}[tb]
\centering
   \subfloat[Universal Parquet Table for Observations]{
      \includegraphics[width=0.7\textwidth]{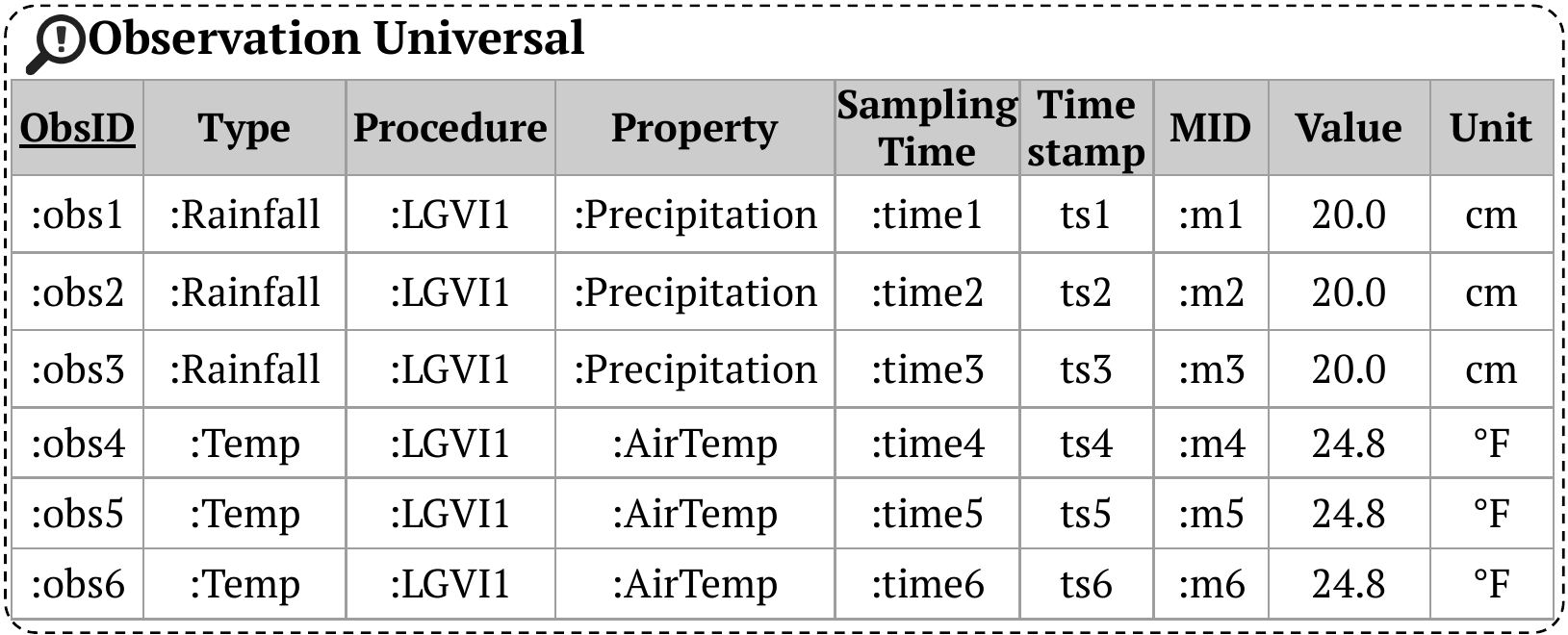}
      \label{fig:Universal}}
    \\
    \vspace{5pt}
    \subfloat[Factorized Data Parquet Tables]{
      \includegraphics[width=0.7\linewidth]{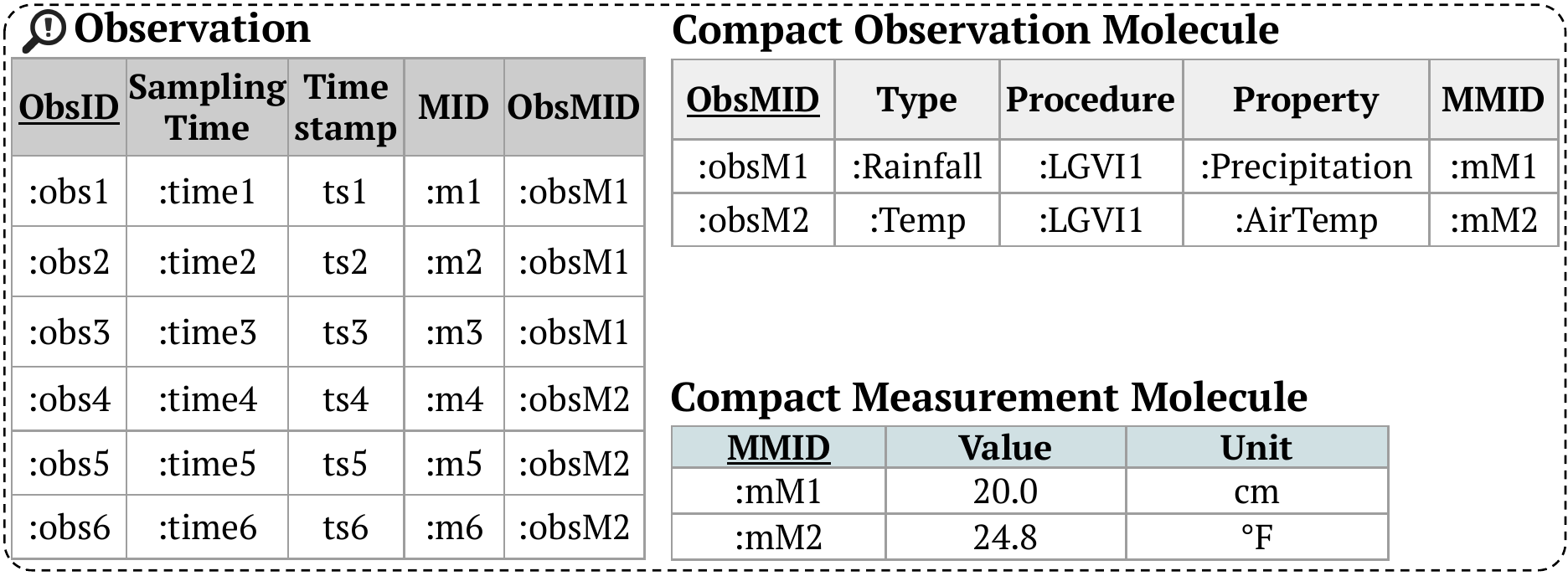}
      \label{fig:FactorizedTables}}
   \caption{{\bf Factorized Tabular Representation of RDF Graphs}. Parquet tables are utilized to represent RDF graphs in Spark.  (a) A universal table stores all the data of the original RDF graph.(b) Factorized data is represented in three parquet tables to store compact observation and measurement molecules.}
            \label{fig:factorizedTabular}
\end{figure*}

\section{Tabular Representation of RDF Graphs}
\label{sec:tabular}
Sensor data tend to stack up quickly, scaling up to large amounts of data. 
In order to capture that growth, we opt for representing factorized data in tabular format, so that Big Data processing technologies can be used.
For that purpose, we choose to store the data in the modern, columnar-oriented \emph{ Parquet}\footnote{\url{https://parquet.apache.org/}} storage format. 
We propose tabular representations of both the original and factorized RDF graphs (in \autoref{fig:multiplicityExp} and \autoref{fig:factorizedGraph}, respectively), shown in \autoref{fig:factorizedTabular} and \autoref{fig:RDFMTTabular}. %Parquet uses Run-Length Encoding (RL), whereby repeated numerical values are encoded into pairs of the value and its occurrences number which allows an efficient representation of large datasets of semantic sensor data. Moreover, 
Columnar nature of Parquet makes it best suited for scenarios where queries access only a few number of columns from a \textit{wide} table of many columns. Parquet pulls only the requested columns, contrary to row-oriented storage. %, where the whole row is read even if only few columns are requested.
%In other terms, whenever the ratio between the number of attributes used in a query and the number of attributes of the accessed table is small. 
We rely on these properties of Parquet tables, and represent RDF graphs using a \emph{universal} table.   
The universal tabular representation, {\tt Observation Universal} in \autoref{fig:Universal}, of original RDF graph in \autoref{fig:multiplicityExp},  contains all the properties of an observation, i.e., {\tt rdf:type}, {\tt:procedure}, {\tt:property}, {\tt :result}, {\tt :samplingTime}, {\tt :value}, {\tt :unit}, and {\tt :timestamp}.
These predicates are modeled with the attributes: {\tt Type}, {\tt Procedure}, {\tt Property}, {\tt MID}, {\tt SamplingTime}, {\tt Value}, {\tt Unit}, and {\tt Timestamp}, respectively.
The tabular representation of the factorized RDF graph in \autoref{fig:factorizedGraph} is shown in \autoref{fig:FactorizedTables}.
The {\tt Compact Observation Molecule} table models the properties {\tt rdf:type}, {\tt:procedure}, {\tt:property}, and {\tt :result}  of a surrogate observation with the attributes {\tt Type}, {\tt Procedure}, {\tt Property}, and {\tt MMID}, respectively.
The {\tt Compact Measurement Molecule} table contains the properties {\tt:value} and {\tt:unit} describing a surrogate measurement. Note that the type {\tt :MeasureData} is implicitly represented in the table name. The {\tt Observation} factorized table contains the observation predicates that are not represented in the {\tt Compact Observation Molecule} and {\tt Compact Measurement Molecule} tables, as well as a reference to the corresponding surrogate observations, as a foreign key.
%For example, the predicates {\tt :result}, {\tt :samplingTime}, and {\tt :timestamp} are not included in the {\tt Compact Observation Molecule} and {\tt Compact Measurem-} {\tt ent Molecule} tables, but they are represented in the {\tt Observation} table by the attributes {\tt MID}, {\tt SamplingTime}, and {\tt Timestamp}, respectively.
%Moreover, an association between an observation and related surrogate observation, described using {\tt :observationOf} property in the factorized RDF graph, is represented by the foreign key {\tt ObsMID}.
Furthermore, SPARQL queries against original and factorized graphs are translated into SQL queries over universal and factorized tables, respectively. 
\autoref{fig:UniversalTabularQuery} shows SQL representations of SPARQL queries in 
 \autoref{fig:queryExecution}. The evaluation of the SQL queries against the universal and factorized tables is the same as the SPARQL queries over the RDF graphs. 
\begin{figure*}[ht!]
\centering
   \vspace{0pt}\subfloat[Query Universal Table]{
      \includegraphics[width=0.6\textwidth]{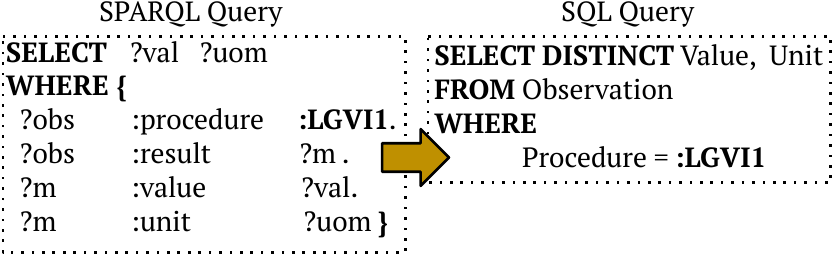}
      \label{fig:queryUniversal}}
      \\
   \vspace{5pt}\subfloat[Query Factorized Data Tables]{
      \includegraphics[width=0.6\linewidth]{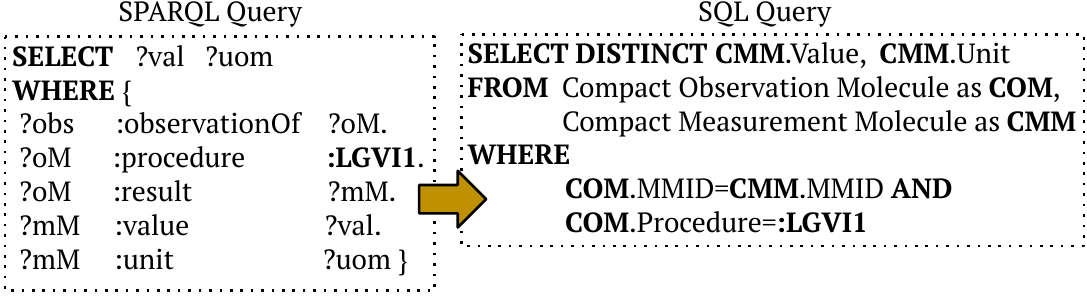}
      \label{fig:queryFactorizedTables}}
  \caption{{\bf Query Evaluation Over Universal and Factorized Tables.} SPARQL queries over original and factorized RDF graphs and their corresponding SQL queries are presented. (a) SQL query over the universal parquet table; (b) SQL query against the parquet tables representing the factorized RDF graph.}
			\label{fig:UniversalTabularQuery}
\end{figure*} 
Instead of using the universal tabular representations,  RDF graphs can be represented using the Class Template (CT) based tabular representations. 
For each CT around a class one table is created containing the properties of the class as attributes. Similarly, for each intra- or inter-link between the classes a binary table is created containing the identifiers from the corresponding CT tables.
Figure~\ref{fig:rdfmt} illustrates the CT-based tabular representations around the \texttt{:RainfallObs}, \texttt{:TempObs}, \texttt{:Instant}, and \texttt{:MeasureData} classes in Figure~\ref{fig:multiplicityExp}. 
\begin{figure*}[t]
\centering
   \subfloat[CT Based Parquet Table for Observations]{
      \includegraphics[width=0.7\textwidth]{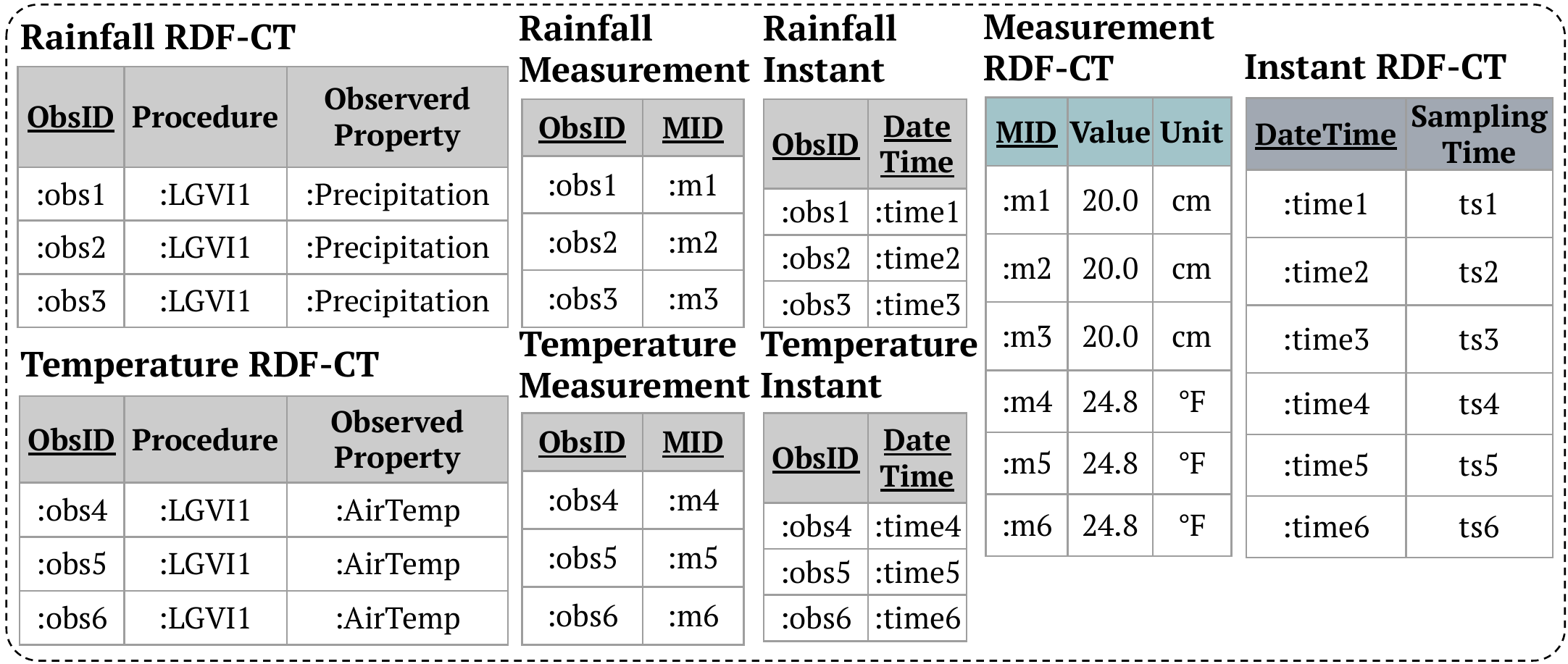}
      \label{fig:rdfmt}}%
      \\
    \vspace{5pt}\subfloat[Factorized CT Based Parquet Tables]{
      \includegraphics[width=0.7\linewidth]{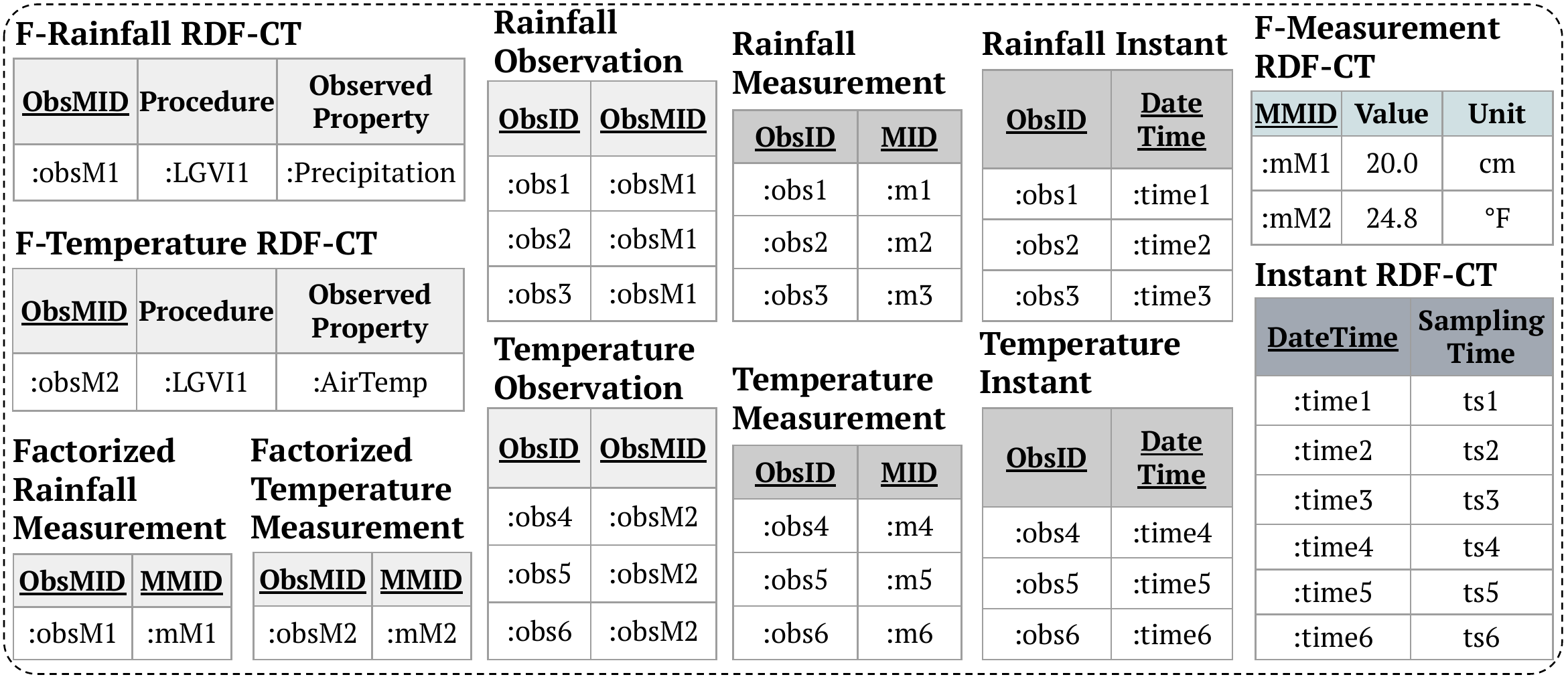}
      \label{fig:Factorizedrdfmt}}
   \caption{{\bf CT based Tabular Representation of RDF Graphs}. Parquet tables are utilized to represent CT-based tabular representations of RDF graphs in Spark. (a) Each CT-based table stores a class template collected from the original RDF graph.(b) Factorized RDF graph is represented in compact CT based tables. }
			\label{fig:RDFMTTabular}
\end{figure*}
The class templates of \texttt{:RainfallObs} and \texttt{:TempObs} are represented in \texttt{Rainfall CT} and \texttt{Temperature CT} with the attributes \texttt{Procedure} and \texttt{Property}. Similarly, \texttt{:MeasureData} and the properties \texttt{:value} and \texttt{:unit} are represented in \texttt{Measurement CT} with the attributes \texttt{Value} and \texttt{Unit}, respectively. \texttt{Instant CT} represents \texttt{:Instant} by modeling \texttt{:timestamp} property as \texttt{Timestamp}.
\texttt{Rainfall Measurement} models the association between the \texttt{:RainfallObs} and \texttt{:MeasureData} using the primary keys, \texttt{ObsID} and \texttt{MID}, from the corresponding CT-based tabular representations. Also, the association between \texttt{:RainfallObs} and \texttt{:Instant} is presented in \texttt{Rainfall Instant}. Similarly, association of \texttt{:TempObs} with \texttt{:MeasureData} and \texttt{:Instant} is presented in \texttt{Temperature Measurement} and \texttt{Temperature Instant}, respectively.
\begin{figure*}[h]
\centering
 
   \vspace{0pt}\subfloat[Query Class Template (CT) based Tables]{
      \includegraphics[width=0.7\textwidth]{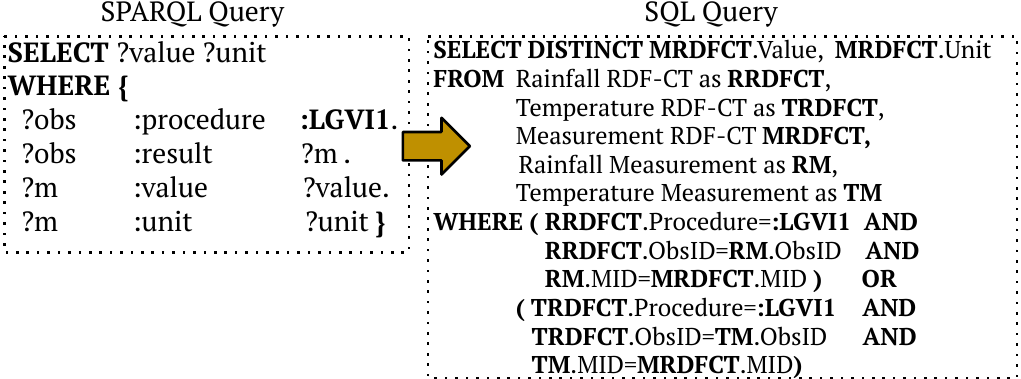}
      \label{fig:queryORDFMT}}
      \\
   \vspace{5pt}\subfloat[Query Factorized CT based Tables]{
      \includegraphics[width=0.7\linewidth]{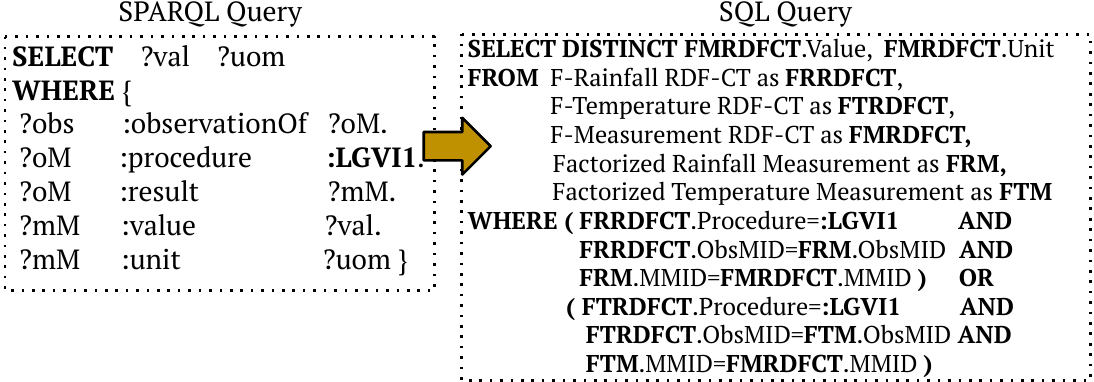}
      \label{fig:queryFRDFMT}}
  \caption{{\bf Query Evaluation Over CT-based Tables}. Original and rewritten SPARQL queries and the corresponding SQL queries against CT-based tables are presented. (a) Query over CT-based tables of the original RDF graph. (b) The SQL query over CT-based tabular representation of the factorized RDF graph.}
			\label{fig:queryRDFMT}
\end{figure*}

The CT-based tabular representations of the factorized RDF graph, in Figure~\ref{fig:factorizedGraph}, are shown in Figure~\ref{fig:Factorizedrdfmt}. \texttt{F-Rainfall CT} models the properties \texttt{:procedure} and \texttt{:property}, describing the surrogate rainfall observations, with the attributes \texttt{Procedure} and \texttt{Property}, respectively. 
Similarly, the CTs of the surrogate temperature observations are modeled in  \texttt{F-Temperature CT} with attributes \texttt{Procedure} and \texttt{Property}.
The surrogate measurements are modeled in the \texttt{F-Measurement CT} using \texttt{Value} and \texttt{Unit}. \texttt{Instant CT} models \texttt{:timestamp} property of \texttt{:Instant} with \texttt{Timestamp}. 
The links between the surrogate observations and measurements are represented in \texttt{Factorized Rainfall Measurement} and \texttt{Factorized Temperature Measurement}. Moreover, the explicit mappings between the original and surrogate rainfall observations are represented in \texttt{Rainfall Observation}. Similarly, \texttt{Temperature Observation} stores the mappings between the original and surrogate temperature observations. In addition, \texttt{Rainfall Measurement} and \texttt{Temperature Measurement} represent association of the original rainfall and temperature observations, respectively, with the corresponding measurements.  Furthermore, the links of the original rainfall and temperature observations with the corresponding timestamps are represented in \texttt{Rainfall Instant} and \texttt{Temperature Instant}, respectively. 
\autoref{fig:queryRDFMT} illustrates the CT based SQL representations of the SPARQL queries in \autoref{fig:queryExecution}. The results of the SQL queries against CT based tabular representations of the original and factorized RDF graphs are the same as the SPARQL queries over the original and factorized RDF graphs.

\begin{thm}
The decomposition of the {\tt Observation} universal table into factorized tables: {\tt Observation}, {\tt Compact Observation Molecule}, and {\tt Compact  Measurement Mole-} {\tt cule}, is \emph{loss-less join}. 
\label{teo:loss-less-universal}
\end{thm}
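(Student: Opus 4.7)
The plan is to apply Heath's theorem twice, using the functional properties of the SSN vocabulary (stated in the bulleted assumptions preceding the theorem) to derive the functional dependencies needed to invoke the theorem. Recall that a binary decomposition of a relation $R(X,Y,Z)$ into $R_1(X,Y)$ and $R_2(X,Z)$ is loss-less whenever the FD $X \to Y$ (or $X \to Z$) holds on $R$. The three-way decomposition will be obtained by chaining two such binary splits.

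First I would enumerate the functional dependencies implied by the factorization on the universal schema (extended with the surrogate keys \texttt{OMID} and \texttt{MMID}). The assumption that \texttt{:observationOf} is a functional property yields $\texttt{ObsID} \to \texttt{OMID}$, that \texttt{:procedure} and \texttt{:property} are functional for the surrogate observation yields $\texttt{OMID} \to \{\texttt{Type}, \texttt{Procedure}, \texttt{Property}, \texttt{MMID}\}$, and that \texttt{:value} and \texttt{:unit} are functional for the surrogate measurement yields $\texttt{MMID} \to \{\texttt{Value}, \texttt{Unit}\}$. The fact that \texttt{:result} has a functional inverse, together with the per-observation functional properties \texttt{:samplingTime} and \texttt{:timestamp}, pin the remaining non-surrogate columns (\texttt{SamplingTime}, \texttt{MID}, \texttt{Timestamp}) to the original observation identifier, giving $\texttt{ObsID} \to \{\texttt{SamplingTime}, \texttt{MID}, \texttt{Timestamp}\}$. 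This identifies \texttt{OMID} as a key of \texttt{Compact Observation Molecule} and \texttt{MMID} as a key of \texttt{Compact Measurement Molecule}.

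Next I would perform the first split: project the extended universal table on the attributes of \texttt{Compact Observation Molecule} to obtain that relation, and on the remaining attributes together with \texttt{OMID} to obtain the \texttt{Observation} relation. Because \texttt{OMID} is a key of the first projection, Heath's theorem guarantees the join on \texttt{OMID} reconstructs the extended universal table exactly; no spurious tuples are produced. I would then apply Heath's theorem a second time on the intermediate result to split off \texttt{Compact Measurement Molecule}: since \texttt{MMID} is a key of that relation and appears as a foreign key in the output of the first decomposition, the join on \texttt{MMID} is again loss-less. Composing the two joins shows that
\[
\texttt{Observation} \bowtie \texttt{Compact Observation Molecule} \bowtie \texttt{Compact Measurement Molecule}
\]
equals the extended universal table, and projecting away the surrogate keys yields the original \texttt{Observation} universal table.

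The main obstacle I anticipate is a bookkeeping one rather than a conceptual one: verifying that the definition of factorized RDF graph (the partial mapping $\mu_N$ and the clauses governing which edges are preserved) indeed enforces all of the functional dependencies above on the tabular encoding, and in particular that no observation in the extended universal table is assigned to more than one surrogate observation. This amounts to checking that the \emph{every} case of the construction in the definition of a factorized RDF graph respects the ``Surrogate Observation'' and ``Observation'' functional-property assumptions; once this is made explicit, the two applications of Heath's theorem are routine.
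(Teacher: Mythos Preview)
Your approach is essentially the paper's: both arguments rest on exactly the same three functional dependencies
\[
\texttt{ObsMID}\to\texttt{Type},\texttt{Procedure},\texttt{Property},\texttt{MMID},\qquad
\texttt{MMID}\to\texttt{Value},\texttt{Unit},\qquad
\texttt{ObsID}\to\texttt{SamplingTime},\texttt{Timestamp},\texttt{MID},\texttt{ObsMID},
\]
and then appeal to a standard loss-less-join criterion. The only methodological difference is that the paper invokes Ullman's tableau/chase test on the three-way decomposition in one shot, whereas you chain two binary applications of Heath's theorem; these are equivalent here and neither buys anything the other does not.

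There is, however, a concrete execution error in your chaining. In your first split you project onto the attributes of \texttt{Compact Observation Molecule} (which carries \texttt{MMID}) and onto ``the remaining attributes together with \texttt{OMID}''. That second piece then contains \texttt{Value} and \texttt{Unit} but \emph{not} \texttt{MMID}, so in your second step there is no relation that simultaneously holds \texttt{MMID}, \texttt{Value}, and \texttt{Unit}; you therefore cannot ``split off \texttt{Compact Measurement Molecule}'' by a Heath step on \texttt{MMID}. The fix is simply to reverse the order: first use $\texttt{MMID}\to\texttt{Value},\texttt{Unit}$ to peel off \texttt{Compact Measurement Molecule} from the extended universal table, and then use $\texttt{OMID}\to\texttt{Type},\texttt{Procedure},\texttt{Property},\texttt{MMID}$ on the residual to separate \texttt{Compact Observation Molecule} from \texttt{Observation}. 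Alternatively, in your first split keep \texttt{MMID}, \texttt{Value}, \texttt{Unit} on the \texttt{OMID}-side (using transitivity $\texttt{OMID}\to\texttt{MMID}\to\texttt{Value},\texttt{Unit}$) and do the second Heath split there. Either way the two Heath applications go through and your conclusion stands.
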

\begin{proof}
Considering the following functional dependencies hold in the universal and factorized tables:
\begin{itemize}
\item {\tt ObsMID} $\rightarrow$ {\tt Type}, {\tt Procedure}, {\tt Property}, {\tt MMID}
\item {\tt MMID} $\rightarrow$ {\tt Value}, {\tt Unit}
\item {\tt ObsID} $\rightarrow$ {\tt SamplingTime}, {\tt Timestamp}, {\tt MID}, {\tt ObsMID}
\end{itemize}
We can prove using the algorithm\cite{ullman20850principles} that the factorized tables are a \emph{loss-less join} decomposition of universal table $T$ that includes all the attributes in the {\tt Observation} universal plus {\tt ObsMID} and {\tt MMID}. The attributes of the {\tt Observation} universal can be projected from $G'$, thus, satisfying the \emph{loss-less join} condition.  
\end{proof}

\begin{thm}
If $G$ is an SSN RDF graph and $G'$ is a factorized RDF graph of $G$, and $T_1$ is the factorized tabular representation of $G'$, then $T_1$ is in third normal form with respect to the universal representation of $G$.
\end{thm}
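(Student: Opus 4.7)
The plan is to exhibit the functional dependencies that hold over the attributes of the universal table for $G$, then verify that each of the three factorized tables satisfies the 3NF condition with respect to this set of dependencies. Recall that a decomposition is in 3NF if, for every non-trivial FD $X\rightarrow A$ that holds in one of the decomposed tables, either $X$ is a superkey of that table, or $A$ belongs to some candidate key. I will also appeal to Theorem~\ref{teo:loss-less-universal} to ensure that the decomposition is loss-less, so that the dependencies in the factorized tables correspond, by projection, to dependencies over the universal schema.

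The first step is to pin down the FDs. From the functional-property assumptions established after Algorithm~\ref{algo:factorization} (the properties \texttt{:procedure}, \texttt{:property}, \texttt{:value}, \texttt{:unit} are functional and \texttt{:result} has a functional inverse, and analogously for surrogate observations and measurements), together with the construction of $G'$ from $G$, I obtain:
\begin{align*}
\texttt{ObsMID} &\rightarrow \texttt{Type},\ \texttt{Procedure},\ \texttt{Property},\ \texttt{MMID}, \\
\texttt{MMID} &\rightarrow \texttt{Value},\ \texttt{Unit}, \\
\texttt{ObsID} &\rightarrow \texttt{SamplingTime},\ \texttt{Timestamp},\ \texttt{MID},\ \texttt{ObsMID}.
\end{align*}
These are precisely the dependencies used in the proof of Theorem~\ref{teo:loss-less-universal}. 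The factorization step of the \emph{CSSD} approach enforces the uniqueness of surrogate entities with respect to their defining attributes (multiplicities reduced to one), which is what justifies the left-to-right functional character of the first two dependencies; the third dependency follows from the fact that each original observation has at most one sampling time, one measurement (by the functional inverse of \texttt{:result}) and is mapped by $\mu_N$ to a single surrogate.

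Next, I would verify 3NF table by table. For \texttt{Compact Measurement Molecule} the only non-trivial FD is $\texttt{MMID}\rightarrow \texttt{Value},\texttt{Unit}$; since \texttt{MMID} is the sole candidate key, the FD has a superkey on the left, hence BCNF (and a fortiori 3NF). For \texttt{Compact Observation Molecule}, the FD $\texttt{ObsMID}\rightarrow \texttt{Type},\texttt{Procedure},\texttt{Property},\texttt{MMID}$ is again driven by the candidate key, and no additional partial or transitive dependency arises, because \texttt{MMID} by itself does not determine any attribute of this table other than itself. For \texttt{Observation}, the FD $\texttt{ObsID}\rightarrow \texttt{SamplingTime},\texttt{Timestamp},\texttt{MID},\texttt{ObsMID}$ similarly has a superkey on the left. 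Therefore, every non-trivial FD projected onto the three tables has a superkey determinant, which is strictly stronger than 3NF.

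The clause ``with respect to the universal representation of $G$'' is the part that requires care. I would interpret it as requiring that (i) the set of FDs of the universal schema is preserved under the decomposition (dependency preservation), and (ii) the decomposition is loss-less. Claim (ii) is exactly Theorem~\ref{teo:loss-less-universal}. For claim (i), I would check that each of the three FDs listed above can be enforced by examining a single factorized table, namely the table that contains all its attributes: this is immediate from the way the factorized schemas were constructed in Section~\ref{sec:tabular}. The main obstacle I anticipate is making precise the connection between the semantic functional-property assumptions on $G$ and the relational FDs over the universal schema; once that bridge is stated explicitly, the 3NF verification itself reduces to a short syntactic check on three small relations.
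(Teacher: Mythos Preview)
Your proposal is correct and follows essentially the same route as the paper: list the three functional dependencies $\texttt{MMID}\rightarrow\texttt{Value},\texttt{Unit}$; $\texttt{ObsMID}\rightarrow\texttt{Type},\texttt{Procedure},\texttt{Property},\texttt{MMID}$; and $\texttt{ObsID}\rightarrow\texttt{SamplingTime},\texttt{Timestamp},\texttt{MID},\texttt{ObsMID}$, then observe that each factorized table has its key on the left of the only applicable FD, so the 3NF condition is satisfied. Your write-up is in fact more detailed than the paper's own proof, which simply records the FDs and asserts that 3NF follows; your additional discussion of dependency preservation and the appeal to Theorem~\ref{teo:loss-less-universal} is not something the paper includes here, but it does no harm and clarifies the phrase ``with respect to the universal representation.''
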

\begin{proof}
Recall \cite{codd1972further}, a table is in third normal form if for every $X \rightarrow Y$
\begin{multicols}{2}
\begin{itemize}
    \item $X$ is a super key, or
    \item $Y-X$ is a prime attribute
\end{itemize}
\end{multicols}
Considering that the following functional dependencies hold in both the universal, and factorized tables:
\begin{itemize}
\item {\tt MMID} $\rightarrow$ {\tt Value}, {\tt Unit}
\item {\tt ObsMID} $\rightarrow$ {\tt Type}, {\tt Procedure}, {\tt Property}, {\tt MMID}
\item {\tt ObsID} $\rightarrow$ {\tt SamplingTime}, {\tt Timestamp}, {\tt MID}, {\tt ObsMID}
\end{itemize}
It can be demonstrated that all the tables created after factorization are in 3NF. 
\end{proof}

\begin{thm}
The decomposition of the {\tt Class Template (CT)} based tables representing sensor data into the {\tt factorized CT} based tables is \emph{loss-less join}. 
\label{teo:loss-less-temp}
\end{thm}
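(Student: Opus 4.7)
The plan is to mirror the proof of Theorem~\ref{teo:loss-less-universal}, but applied to the CT-based decomposition instead of the universal-table decomposition. The core idea is that every surrogate identifier introduced during factorization (e.g.\ \texttt{ObsMID}, \texttt{MMID}) acts as a key that functionally determines the attributes of its compact molecule, while the mapping tables (\texttt{Rainfall Observation}, \texttt{Temperature Observation}, \texttt{Factorized Rainfall Measurement}, etc.) retain the original observation/measurement identifiers as foreign keys. Thus no tuple of the original CT-based relations is destroyed; it is only redistributed.

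First, I would list the functional dependencies carried by the factorized CT tables. For each class template $C \in \{\texttt{:RainfallObs}, \texttt{:TempObs}, \texttt{:MeasureData}, \texttt{:Instant}\}$, the surrogate key of its compact CT table functionally determines its attributes, i.e.\
\begin{align*}
\texttt{ObsMID} &\rightarrow \texttt{Procedure}, \texttt{Property}, \\
\texttt{MMID} &\rightarrow \texttt{Value}, \texttt{Unit}, \\
\texttt{ObsID} &\rightarrow \texttt{ObsMID}, \texttt{MID}, \texttt{InstantID}, \\
\texttt{InstantID} &\rightarrow \texttt{Timestamp}.
\end{align*}
In addition, the mapping tables (\texttt{Rainfall Observation}, \texttt{Temperature Observation}) give the total functions $\texttt{ObsID} \to \texttt{ObsMID}$, which are exactly the restrictions of the node mapping $\mu_N$ from the definition of a factorized RDF graph. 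The intra-link \texttt{Rainfall Measurement} and \texttt{Temperature Measurement} tables retain $\texttt{ObsID} \to \texttt{MID}$, and \texttt{Rainfall Instant} / \texttt{Temperature Instant} retain $\texttt{ObsID} \to \texttt{InstantID}$.

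Next, I would apply the standard chase-based test for lossless join decomposition (e.g.\ Ullman's algorithm, as used in Theorem~\ref{teo:loss-less-universal}). Fix a class, say \texttt{:RainfallObs}, and consider its original CT-based representation as the join of: \texttt{Rainfall CT}, \texttt{Rainfall Measurement}, \texttt{Measurement CT}, \texttt{Rainfall Instant}, and \texttt{Instant CT}. After factorization, the corresponding join is computed by: (i) joining \texttt{Rainfall Observation} with \texttt{F-Rainfall CT} on \texttt{ObsMID} to recover \texttt{Procedure} and \texttt{Property} for each original \texttt{ObsID}; (ii) joining with \texttt{Rainfall Measurement} on \texttt{ObsID} to recover the original \texttt{MID}; (iii) joining \texttt{MID} with \texttt{Measurement CT}, which under factorization is reconstructed through \texttt{Factorized Rainfall Measurement} $\bowtie$ \texttt{F-Measurement CT} via \texttt{MMID} -- here one uses that $\mu_N$ sends each original $m$ to a unique $mM$ with identical $(\texttt{Value},\texttt{Unit})$; (iv) finally joining with \texttt{Rainfall Instant} and \texttt{Instant CT}. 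Because each step composes functional dependencies that are keys in the smaller table, no spurious tuples are introduced, and every tuple of the original CT representation appears in the join. The symmetric argument applies to \texttt{:TempObs}.

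The main obstacle is the third step above: reconstructing the original \texttt{Measurement CT} tuples indexed by the original \texttt{MID}. In the factorized representation, an original measurement $m$ is no longer directly decorated with $(\texttt{Value},\texttt{Unit})$; instead it is reached through the composition $\texttt{MID} \to \texttt{ObsID} \to \texttt{ObsMID} \to \texttt{MMID} \to (\texttt{Value},\texttt{Unit})$. I would therefore argue carefully that the definition of a factorized RDF graph enforces $\texttt{MID} \to \texttt{MMID}$ (because each measurement is related to exactly one observation via the functional-inverse \texttt{:result}, and that observation is mapped to a unique surrogate whose \texttt{:result} is a unique \texttt{MMID}), so the chase closes the relation \texttt{MID} $\to$ \texttt{Value}, \texttt{Unit}. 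Once this is established, applying the Ullman lossless-join criterion yields the result directly, completing the proof.
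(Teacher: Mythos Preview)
Your proposal is correct and follows essentially the same approach as the paper: list the functional dependencies induced by the surrogate keys \texttt{ObsMID} and \texttt{MMID} together with those already present in the CT tables, then invoke Ullman's chase-based lossless-join test. The paper's own proof is considerably terser---it simply enumerates a set of FDs (several of them trivial reflexive ones such as $\texttt{ObsID},\texttt{MID}\rightarrow\texttt{ObsID},\texttt{MID}$, and it uses \texttt{SamplingTime} where you write \texttt{InstantID}) and appeals to the algorithm without spelling out the reconstruction; your explicit treatment of the $\texttt{MID}\rightarrow\texttt{MMID}$ step via the functional inverse of \texttt{:result} is more careful than what the paper provides, but not a different route.
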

\begin{proof}
Consider the following functional dependencies hold in {\tt CT} and {\tt factorized CT} tables:

\begin{itemize}
\item {\tt ObsMID} $\rightarrow$ {\tt Procedure}, {\tt Property}
\item {\tt MMID} $\rightarrow$ {\tt Value}, {\tt Unit}
\item {\tt ObsMID}, {\tt MMID} $\rightarrow$ {\tt ObsMID}, {\tt MMID}
\item {\tt ObsID}, {\tt ObsMID} $\rightarrow$ {\tt ObsID}, {\tt ObsMID}
\item {\tt ObsID}, {\tt MID} $\rightarrow$ {\tt ObsID}, {\tt MID}
\item {\tt ObsID}, {\tt SamplingTime} $\rightarrow$ {\tt ObsID}, {\tt SamplingTime}
\item {\tt SamplingTime} $\rightarrow$ {\tt Timestamp}
\end{itemize}

We can prove using the algorithm\cite{ullman20850principles} that the {\tt factorized CT} based tables are a \emph{loss-less join} decomposition of the {\tt CT} based tables that includes all the attributes in the {\tt CT} tables plus {\tt ObsMID} and {\tt MMID}. The attributes of the {\tt CT} tables can be projected from $G'$, thus, satisfying the \emph{loss-less join} condition.  
\end{proof}

\begin{thm}
If $G$ is an SSN RDF graph and $G'$ is a factorized RDF graph of $G$, and $T_2$ is the {\tt Class Template (CT)} based tabular representation of $G'$, then $T_2$ is in third normal form with respect to the {\tt CT} based tabular  representation of $G$.
\end{thm}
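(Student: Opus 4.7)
The plan is to mirror the structure of the earlier 3NF theorem (the one about the universal tabular representation) and reuse the functional dependencies already collected in Theorem \ref{teo:loss-less-temp}. Recall that a table is in third normal form if for every nontrivial functional dependency $X \rightarrow Y$ either $X$ is a superkey of the table in which the dependency is asserted, or each attribute of $Y - X$ is prime (i.e.\ belongs to some candidate key). I would therefore start the proof by restating this condition verbatim, as was done for the universal case, to fix notation.

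Next, I would list the functional dependencies that hold in the CT-based tabular representation of $G$ together with those that remain in the factorized CT-based representation $T_2$ (these are exactly the ones enumerated in the proof of Theorem \ref{teo:loss-less-temp}). I would then group the dependencies by the factorized table they constrain: the dependency $\mathtt{ObsMID} \rightarrow \mathtt{Procedure}, \mathtt{Property}$ lives in \texttt{F-Rainfall CT} and \texttt{F-Temperature CT}, with \texttt{ObsMID} as primary key; the dependency $\mathtt{MMID} \rightarrow \mathtt{Value}, \mathtt{Unit}$ lives in \texttt{F-Measurement CT}, with \texttt{MMID} as primary key; the dependency $\mathtt{SamplingTime} \rightarrow \mathtt{Timestamp}$ lives in \texttt{Instant CT}, with \texttt{SamplingTime} as primary key; and the remaining dependencies of the form $X \rightarrow X$ live in the binary link tables (such as \texttt{Factorized Rainfall Measurement}, \texttt{Rainfall Observation}, \texttt{Rainfall Measurement}, \texttt{Rainfall Instant}, and their temperature counterparts) and are trivial. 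In every one of these cases the left-hand side is either the primary key of the table or the concatenation of the two foreign keys that jointly form the compound key, so the \emph{superkey} clause of the 3NF condition is satisfied.

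From there, the argument concludes by observing that, because Theorem \ref{teo:loss-less-temp} already guarantees the decomposition is loss-less join, every functional dependency of the original CT-based tabular representation of $G$ is either preserved as above or is derivable by a join; thus $T_2$ captures all constraints of the CT representation of $G$ without introducing any dependency that would violate 3NF. Hence $T_2$ is in third normal form with respect to the CT-based tabular representation of $G$.

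The main obstacle I expect is not algebraic but bookkeeping: one has to be careful that the link tables, which on the surface assert only trivial dependencies of the form $X \rightarrow X$, are not inadvertently introducing a transitive dependency through a foreign key (e.g.\ that \texttt{Rainfall Instant} does not secretly impose $\mathtt{ObsID} \rightarrow \mathtt{Timestamp}$ once joined with \texttt{Instant CT}). I would address this by noting that 3NF is a per-table property and the dependency $\mathtt{ObsID} \rightarrow \mathtt{Timestamp}$ is only recoverable through a join and not declared on any single factorized table, so no nonprime transitive dependency arises within any individual relation of $T_2$.
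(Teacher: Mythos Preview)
Your proposal is correct and follows essentially the same approach as the paper: restate the 3NF condition from Codd, list the same set of functional dependencies already used in Theorem~\ref{teo:loss-less-temp}, and then verify the superkey clause table by table. In fact your argument is more detailed than the paper's, which simply lists the dependencies and asserts ``it can be demonstrated that all the factorized tables are in 3NF'' without working through the per-table check or the foreign-key bookkeeping you spell out.
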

\begin{proof}
Recall \cite{codd1972further}, a table is in third normal form if for every $X \rightarrow Y$ 
\begin{itemize}
    \item $X$ is a super key, or
    \item $Y-X$ is a prime attribute
\end{itemize}

Considering the following functional dependencies hold in {\tt CT} based tables:

\begin{itemize}
\item {\tt ObsMID} $\rightarrow$ {\tt Procedure}, {\tt Property}
\item {\tt MMID} $\rightarrow$ {\tt Value}, {\tt Unit}
\item {\tt ObsMID}, {\tt MMID} $\rightarrow$ {\tt ObsMID}, {\tt MMID}
\item {\tt ObsID}, {\tt ObsMID} $\rightarrow$ {\tt ObsID}, {\tt ObsMID}
\item {\tt ObsID}, {\tt MID} $\rightarrow$ {\tt ObsID}, {\tt MID}
\item {\tt ObsID}, {\tt SamplingTime} $\rightarrow$ {\tt ObsID}, {\tt SamplingTime}
\item {\tt SamplingTime} $\rightarrow$ {\tt Timestamp}
\end{itemize}

It can be demonstrated that all the factorized tables are in 3NF. 
\end{proof}

\begin{table*}[t] 
			\centering 
			%\scriptsize
			\caption{{\bf Datasets}: Description of the semantic sensor datasets; weather and smart city datasets; collected from the United States and Aarhus, Denmark, respectively.}
			%\resizebox{\columnwidth}{!}{%
			\begin{tabular}{|c | c | c | c |c | c | c |}
				\hline
				\multicolumn{4}{|c}{\textbf{Weather Dataset}} & \multicolumn{3}{|c|}{\textbf{Smart City Dataset}} \\ \hline
				\textbf{ID} &\textbf{Climate Event}  & \textbf{\#Triples} &\textbf{\# Obs} & \textbf{ID}   & \textbf{\#Triples} &\textbf{\# Obs}   \\ \hline
				D1 & Blizzard  & ~38,054,493 & ~~4,092,492 & C1 & 47,487,800 & 4,748,884 \\
				D2 & Hurricane Charley  & 108,644,568 &  11,648,607 & C2 & 47,051,850 & 4,705,267\\
				D3 & Hurricane Katrina  & 179,128,407 & 19,233,458 & C3 & 56,816,196 & 5,681,712\\ \hline
			\end{tabular}
			%}
			\label{tab:table1}    
		\end{table*}

\section{Experimental Study}
\label{sec:eval}
We empirically study the effect of the proposed factorization techniques over RDF implementations accessible through RDF and Big Data engines. We evaluate the impact on the size of the factorized RDF graphs as well as on query execution time in different query engines.
RDF-3X \cite{NeumannW10}  is utilized to evaluate the influence of the proposed techniques over the RDF stores. Spark~\cite{DBLP:journals/cacm/ZahariaXWDADMRV16} is used to study the tabular representation of RDF graphs.
%Federated query engines, like MULDER\cite{endris2017mulder} and ANAPSID\cite{AcostaVLCR11}, allow for accessing RDF data available through Web interfaces like SPARQL endpoints. They are equipped with physical implementations for the operators of the relational algebra and are able to produce results incrementally. MULDER and ANAPSID are used to assess the impact of the factorization approach over the engines utilizing RDF implementations available through the endpoints.
In this work, we investigated the following research questions:
%We empirically assessed the following research questions:
\begin{inparaenum}[\bf {\bf RQ}1\upshape)]
     \item Are the proposed factorization techniques able to reduce the size of the semantically represented sensor data?
    \item How is the factorization time affected by the size of the RDF graphs? %same as conference paper
    \item What is the impact of the queries against factorized RDF graphs over the query execution time?
    \item Is the performance of queries against factorized RDF graphs affected by the size of the factorized RDF graphs or RDF implementation?%same as conference paper
\end{inparaenum}
\begin{table*}[t]
\centering
\caption{{\bf Effectiveness of the Semantic Sensor Data Factorization}. Number of triples ({\bf NT}) before and after factorization along with \%age NT savings. %Savings in the number of triples ({\bf \%age NT Savings}) increases as the size of the dataset, while average number of triples per observation {\bf Avg. NT per Obs.} decreases.
}
	\begin{tabular}{| l | c | c | c | c | c |}
		\hline
	    \multicolumn{1}{|c|}{\textbf{Dataset}}	&  \multicolumn{2}{c|}{\textbf{Number of Triples(NT)}}  & \textbf{\%age NT } & \multicolumn{2}{c|}{\textbf{Avg. NT per Obs. }} \\ \cline{2-3} \cline{5-6} 
	    \multicolumn{1}{|c|}{\textbf{ID}}	& \textbf{Original} & \textbf{Factorized}  & \textbf{Savings} & \textbf{Original}& \textbf{Factorized} \\
        \hline
		\textbf{D1}	& ~38,054,493 & ~17,800,156 &  53.22 & 9.29 & 4.34  \\   \hline	
		\textbf{D1D2} & 146,699,061 & ~63,993,774 &  56.38 &9.32& 4.06  \\  \hline
		\textbf{D1D2D3} & 325,827,468 &  136,979,696 & {\bf 57.96}& 9.31& {\bf 3.92} \\  \thickhline
			\textbf{C1}	& ~47,487,800 & ~23,937,396 &  49.59 & 9.99 & 5.04  \\   \hline	
		\textbf{C1C2} & ~94,539,650 & ~47,621,691 &  49.63 &9.99& 5.04  \\  \hline
		\textbf{C1C2C3} & 151,355,846 &  ~76,223,192 & {\bf 49.64}& 9.99& {\bf 5.04} \\  \hline
	\end{tabular}
\label{tbl:NT}
%\label{fig:datatsets}
\end{table*}
\begin{table*}[t]
\centering
\caption{{\bf Efficiency of the Semantic Sensor Data Factorization}. Time that elapses during factorization ({\bf FT}) as well as the RDF3X Loading Time ({\bf LT}). % for the factorized datasets is less than the Loading Time ({\bf LT}) for original datasets.
}
	\begin{tabular}{| l | c | c | c |}
		\hline
	    \multicolumn{1}{|c|}{\textbf{Dataset}} & \textbf{Factorization}  & \multicolumn{2}{c|}{\textbf{RDF3X LT(s)}} \\ 
	    \cline{3-4} 
	    \multicolumn{1}{|c|}{\textbf{ID}} & \textbf{Time FT(s)}  & \textbf{Original} & \textbf{Factorized} \\
        \hline
		\textbf{D1} & ~~417.229 & ~460.511 & ~252.976 \\   \hline	
		\textbf{D1D2} & 1,260.495 & 1,887.626 & ~970.150 \\  \hline
		\textbf{D1D2D3} & 2,147.239 & 3,822.723 & 1,982.697 \\  \hline
	\end{tabular}
\label{tbl:LT}
%\label{fig:datatsets}
\end{table*}
The experimental configuration to evaluate the research questions mentioned above is as follows:
~\\
\noindent
{\bf Datasets:}  Evaluation is conducted over two sensor datasets~\cite{ali2015citybench,patni2010linked} described using the Semantic Sensor Network (SSN) Ontology. 
The RDF datasets describing weather observations are collected from around 20,000 weather stations in the United States\footnote{Available at: \url{http://wiki.knoesis.org/index.php/LinkedSensorData}}.  Realtime smart city datasets are collected from the city of Aarhus, Denmark. The smart city datasets encompasses the traffic, pollution, and parking observations \footnote{Available at: \url{http://iot.ee.surrey.ac.uk:8080/datasets.html}}.
\autoref{tab:table1} describes the main characteristics of these RDF datasets. 
~\\
\noindent
{\bf Queries:}  The SRBench-Version 0.9 queries\footnote{\url{https://www.w3.org/wiki/SRBench}}
are used as baseline in our experimental testbed. 
Because RDF-3X does not evaluate queries with the OPTIONAL operator, query 2 is modified to include only one BGP. 
Also, the STREAM clause, ASK queries, aggregate modifiers like AVG, GROUP BY, and HAVING  are not supported. 
So, only SELECT queries without aggregate modifiers are part of our testbed.
Queries range from simple queries with one triple pattern to complex queries having up to 14 triple patterns with UNION and FILTER clauses 
\footnote{Details can be found at \url{https://sites.google.com/site/fssdexperimets/}}.
~\\
\noindent
{\bf Metrics:} We report on the following metrics: 
\begin{inparaenum}[\bf a\upshape)]
\item {\bf Number of Triples (NT)} in the semantic sensor data collection.
\item {\bf Percentage Savings (\%age NT Savings)} in the number of RDF triples after factorization; higher the better.
\item {\bf Average Number of Triples per Observation (avg. NT per Obs.)} represents the average number of RDF triples describing an observation; lower the better.
\item {\bf Factorization Time (FT)} is the elapsed time between the request of factorization and the generation of the factorized RDF graph.
\item {\bf RDF3x Loading Time (LT)} is the time required to load RDF data to RDF3x store.
{\bf FT} and {\bf LT} are computed as the {\it real time} of the {\it time} command of the Linux operating system.
\item {\bf Query Execution Time (ET)} is the elapsed time between the submission of the query to the engine and the complete output of the answer, and is measured as the {\it real time} produced by the {\it time} command of the Linux operation system. %, whereas, in RDF implementations accessible through endpoints, {\bf ET}, for SPARQL endpoints, is measured as the absolute wall-clock system time produced by the Python {\it time.time()} function. \item {\bf dief@t} measures the continuous behavior of a query engine that produces results incrementally. It computes the diefficiency of an engine in the first $t$ time units of the query execution~\cite{acosta2017diefficiency}. \item {\bf Time For the First Tuple (TFFT)} is the elapsed time spent by the approach to produce the first query answer. {\bf TFFT} is measured as the absolute wall-clock system time as reported by the Python {\it time.time()} function. \item {\bf Completeness (Comp)} is the percentage of the number of answers produced by the approach after executing a query. \item {\bf Throughput (T)} is the number of answers per second and is computed by dividing the total number of answers produced by the total execution time.
\end{inparaenum}
%For the RDF implementations available through endpoints, inverse of {\bf TFFT} and {\bf ET} are reported to have the same metric interpretation, i.e., higher is better.
~\\
\noindent
{\bf Implementation:}
Three series of experiments were conducted over the gradually integrating sensor datasets in Table~\ref{tab:table1}, i.e., D1, D1D2, and D1D2D3. \begin{inparaenum}[\bf i\upshape)] \item Algorithm~\ref{algo:factorization} is executed over the original RDF datasets to generate the factorized RDF representations. Moreover, original and factorized RDF datasets are loaded in RDF3X store. \item SPARQL queries are executed using RDF3X engine over original and factorized RDF datasets. %RDF3X engine executes queries over RDF data stored locally. 
The experiments are executed on a Linux Debian 8 machine with a CPU Intel I7 980X 3.3GHz and 32GB RAM 1333MHz DDR3.
Queries are run on both cold and warm cache.\footnote{To run cold cache, we clear the cache before running each query by performing the command  {\tt \scriptsize sh -c "sync ; echo 3 $>$ /proc/sys/vm/drop\_caches"}}  to assess the query performance when data is cached.
To run on warm cache, we executed the same query five times by dropping the cache just before running the first iteration of the query; thus, data temporally stored in cache during the execution of iteration $i$ can be used in iteration $i+1$.
\item In the third series of experiments, SQL queries were run on cold and warm cache using \emph{Apache Spark}\footnote{\url{http://spark.apache.org/}} over the universal, factorized, original and factorized CT-based tabular representations. These tabular representations are stored using Parquet format in \emph{HDFS}\footnote{https://hadoop.apache.org/}. 
%For all relational representations in the queries were run with cold and warm cache.
The experiments were conducted on a spark cluster of one master and three worker nodes.% created using \emph{Docker}\footnote{https://www.docker.com/} containers, and the datasets are stored on a hadoop cluster containing one namenode and three datanodes created using docker containers.
The experiments are performed on a machine with Intel(R) Xeon(R) Platinum 8160 CPU 2.10GHz and 23 RAM slots, where each RAM slot is DDR4 type, 32GB RAM size, and 2666MHz speed. The source code of the factorization approach is available on github\footnote{\url{https://github.com/SDM-TIB/SemanticSensorDataFactorization}}.
%Further, queries are run on cold and warm cache.
%\item In the third series of experiments MULDER and ANAPSID engines are used to access the RDF datasets available as a SPARQL endpoint using {\it Virtuoso 7.2.2}, where each original and factorized dataset resides in a dedicated Virtuoso docker container. All queries are executed using MULDER and ANAPSID over the original and the factorized datasets. The experiments are conducted on a machine with Intel(R) Xeon(R) Platinum 8160 CPU 2.10GHz and 23 RAM slots, where each RAM slot is DDR4 type, 32GB RAM size, and 2666MHz speed.
\end{inparaenum}
\begin{figure*}[tb]
\centering
\includegraphics[width=1.0\textwidth]{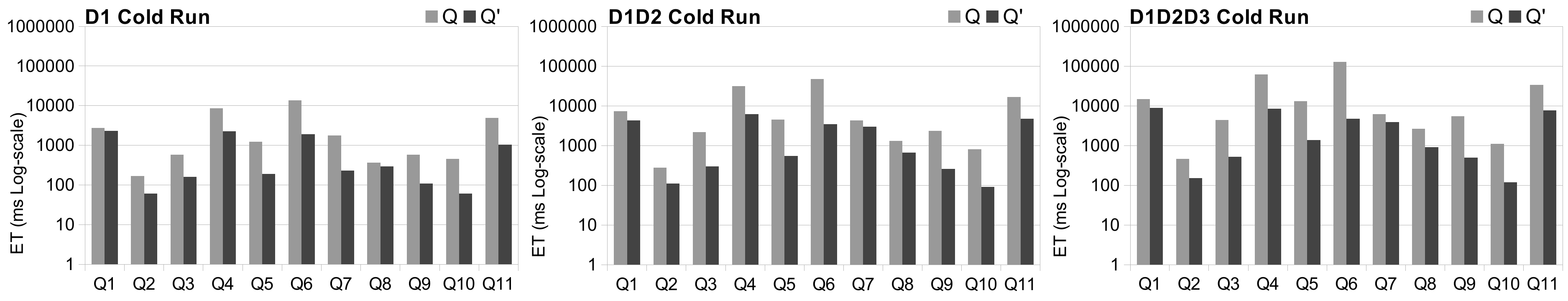}
\caption{{\bf Query Execution Time ET (ms Log-scale) over RDF3x}. Original SPARQL queries $Q$  and rewritten SPARQL queries  $Q'$  are evaluated on \textbf{cold} cache against original and factorized RDF graphs, respectively. Rewritten queries reduce execution time on factorized RDF graphs by up one order of magnitude. }
\label{fig:rdf3xCold}
\end{figure*}

\begin{figure*}[tb]
\centering
\includegraphics[width=1.0\textwidth]{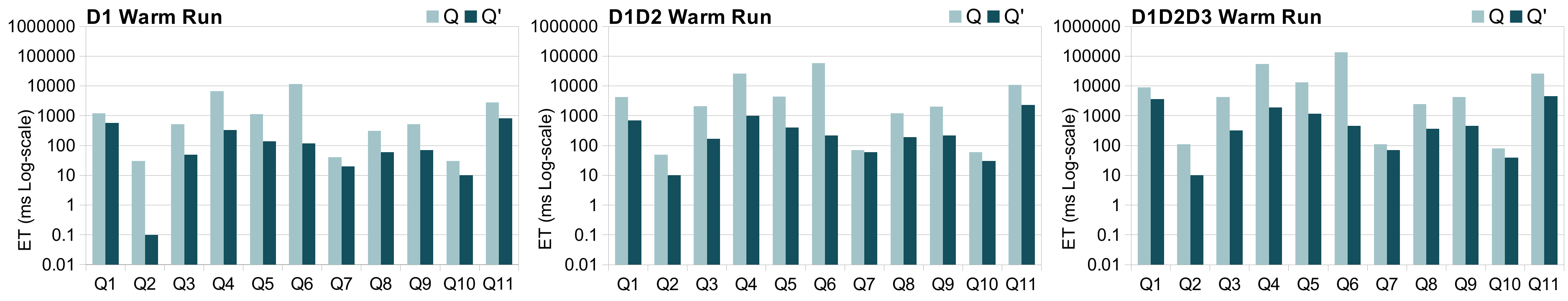}
\caption{{\bf Query Execution Time ET (ms Log-scale) over RDF3x}. SPARQL queries $Q$  and rewritten queries $Q'$  are evaluated on \textbf{warm} cache against original and factorized RDF graphs, respectively. To warm cache up, memory is flushed. %, and each query is executed five times; the lowest value of execution time is reported. 
The rewritten queries reduce query execution time by up two order of magnitude.}
\label{fig:rdf3xWarm}
\end{figure*}
~\\
\noindent
{\bf Efficiency and Effectiveness of Factorized RDF.}
For evaluating the efficiency and effectiveness of the proposed factorization techniques and to answer the research questions {\bf RQ1} and {\bf RQ2}, we execute \autoref{algo:factorization} by gradually integrating the datasets in Table~\ref{tab:table1}, i.e., {\bf D1}, {\bf D1D2}, and {\bf D1D2D3}. 
Effectiveness is reported based on the reduction of RDF triples ({\bf NT}), while
efficiency is measured in terms of factorization time ({\bf FT}) and RDF3X loading time ({\bf LT}). 
\autoref{tbl:NT} reports on the number of RDF triples ({\bf NT}) in datasets {\bf D1}, {\bf D1D2}, and {\bf D1D2D3} before and after the factorization, as well as in datasets {\bf C1}, {\bf C1C2}, and {\bf C1C2C3}. 
The results demonstrate that the proposed factorization techniques are capable of reducing the RDF triples by at least {\bf 53.22\%} in the datasets of weather observations, and  {\bf 49.59\%} in smart city dataset. 
Moreover, the results report that the factorized representation of sensor observations requires in average a small number of RDF triples, e.g., five RDF triples instead of ten in the weather dataset, while preserving all the information within the original RDF graph. 
These results allows us to positively answer research question {\bf RQ1}, i.e., factorized RDF graphs effectively reduce the size of RDF graphs. 
We also measure factorization time and factorized RDF loading time to RDF3X, and compare to the time required by RDF3X to upload the original RDF graphs, in \autoref{tbl:LT}. Algorithm~\ref{algo:factorization} as well as factorized RDF loading to RDF3X requires less than {\bf 50\%} of the time consumed by RDF3X during original RDF data loading. Thus, with these results research question {\bf RQ2} can be also positively answered. 
\begin{figure*}[tb]
\centering
      \includegraphics[width=1.0\textwidth]{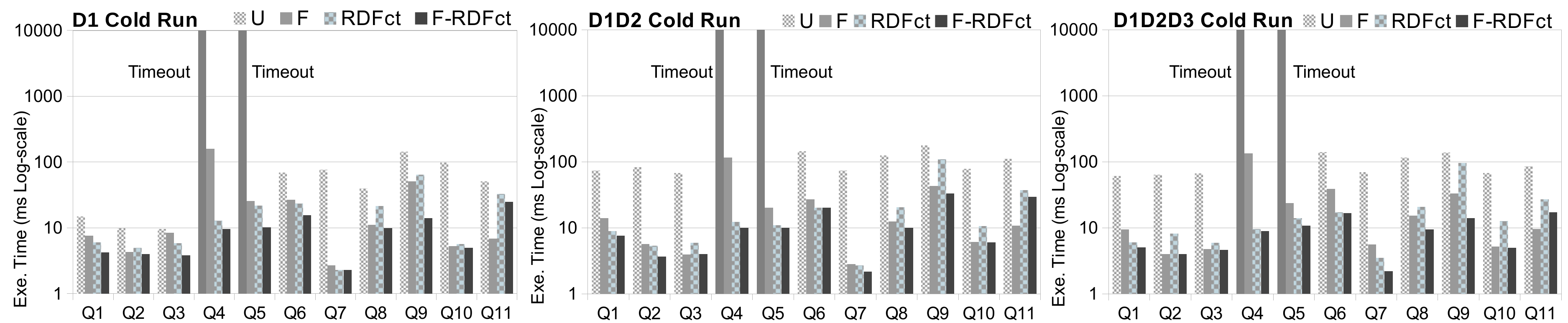}
   \caption{{\bf Query Execution Time ET (ms Log-scale) over Relations}. Query evaluation over tabular based representations in \textbf{cold} cache. Executions are timed out after 100 minutes. SQL version of the rewritten SPARQL queries over the factorized (\it{F}) and factorized CT tables (\it{F-RDFct}) reduce execution time.}
\label{fig:coldSpark}
\end{figure*}      
\begin{figure*}[ht!]
\centering
      \includegraphics[width=1.0\linewidth]{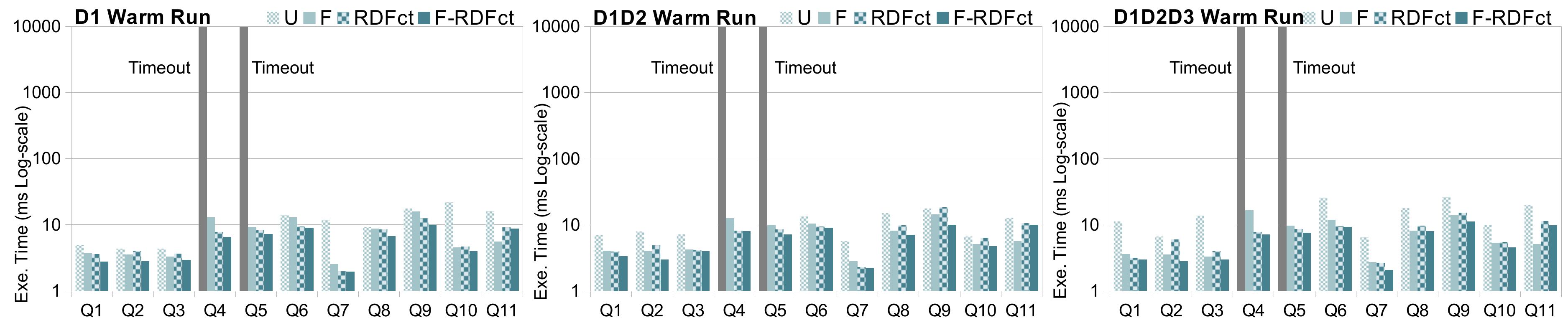}
   \caption{{\bf Query Execution Time ET (ms Log-scale) over Relations}. Query evaluation over tabular representations in \textbf{warm} cache. Execution timeout is 100 minutes. SQL queries execution time over the factorized (\it{F}) and factorized CT tables (\it{F-RDFct}) is less than the universal (\it{U}) and original CT tables (\it{RDFct}).}
   \label{fig:warmSpark}
\end{figure*}    
~\\
\noindent
{\bf Impact of Factorized RDF over Query Processing.}
We  analyze the efficiency of the proposed representations by running the queries generated using Algorithm~\ref{algo:query}. First, the impact of our approach on query execution is studied over centralized RDF engines; to evaluate the benefits of caching previous results, queries are executed on cold and warm caches.
The advantage of running these queries on cold and warm caches on RDF3X are analyzed over the gradually increasing original and factorized RDF datasets. 
The original queries {\tt Q} are compared to the reformulated queries {\tt Q'}. Original queries ({\tt Q}) are executed against the original datasets, while plans for reformulated queries ({\tt Q'}) are run against gradually increasing factorized datasets.    
Figure~\ref{fig:rdf3xCold} reports on the query execution time (milliseconds. log-scale) with cold cache, while
Figure~\ref{fig:rdf3xWarm} depicts the observed execution time when queries are run on warm cache; the minimum value is reported in all the queries. In all cases, reformulated queries over factorized RDF graphs exhibit better performance whenever they are run on cold and warm caches.
This observation supports the statement that because observation and measurement multiplicity is reduced to one in factorized RDF graphs, factorized queries produce small intermediate results which can be maintained in resident memory and re-used in further executions. 
Thus, the performance of factorized queries is considerable better with warm cache, overcoming other executions by up to three orders of magnitude, e.g., Q2 and Q6. 
Results also suggest that performance of reformulated queries is not affected by the RDF graph size, e.g., large RDF graphs like D1D2D3 with 325,827,468 RDF triples. 

We further analyse the effect of factorization when query processing is conducted over the relational representations of sensor data, i.e., universal and factorized tables, and the CT based tabular implementation of original and factorized RDF data. The performance of queries over Parquet tables depends on the number of attributes included in the query, as well as on the ratio between the attributes in the query and the attributes in the tables
\footnote{\url{https://parquet.apache.org/}}.
In queries against the universal table, the ratio between the number of attributes varies from {\bf 0.09} to {\bf 0.45}. While the ratio in factorized queries is in the range from {\bf 0.46} to {\bf 0.75}, and in original and factorized CTs is {\bf 0.25} and {\bf 0.78}. So, based on this statement, queries over the universal table should be faster than queries over the factorized tables and CT based tables. However, as observed in 
Figures~\ref{fig:coldSpark} and \ref{fig:warmSpark}, reformulated queries over factorized CT tables speed up execution time to almost two orders of magnitude, except Q11 where factorized tables are performing better. Factorized CTs reduce the size of tables by creating them around each molecule template and factorization further removes data redundancies. Actually, in queries Q4 and Q5, execution over the universal table times out after 100 minutes. These results indicate that the rewritten queries speed up query processing over big data engines. 
~\\
\noindent
{\bf Discussion.}
The presented experimental results confirm that the factorization techniques are able to reduce duplicated measurements in observational data without any information lost. Furthermore, since graphs can be factorized incrementally, savings are observed whenever new incoming tuples are related to measures previously collected. The benefits of the approach are reported in the reduction of the number of RDF triples of the factorized graphs, as well as in the execution time of queries rewritten over these factorized graphs. These savings are even more significant when the query engine provides efficient caching techniques to maintain in cache intermediate results of previously evaluated queries. Lastly, in the case of relational representation of factorized data in big data infrastructures, space savings are significant, enabling an efficient query execution over factorized tables.

\section{Conclusions and Future Work}
\label{sec:conclusion}
This article presents compact RDF representations for semantic sensor data to reduce data redundancy, while information is preserved and query execution performance is enhanced. Moreover, the effectiveness of the proposed approach was studied over several query engines.  Furthermore, tabular representations for a loss-less large-scale storage of factorized semantic sensor data are presented.
A factorization algorithm transforms original observations and measurements to a compact representation where data redundancy is reduced. 
Additionally, query rewriting rules and a query re-writing algorithm are presented. The query rewriting algorithm exploits the rewriting rules to rewrite SPARQL queries against factorized RDF graphs, and speeds up query execution time.
The factorized observations and measurements are also exploited to produce tabular representations for factorized RDF graphs utilizing Parquet tables.
We empirically evaluate the effectiveness of the proposed factorization techniques and results confirm that exploiting semantics encoded in semantic sensor data allow for reducing redundancy by up to 57.96\%, while the time taken by the process of factorizing RDF data is less than 50\% of loading time for the original RDF data in state-of-the-art RDF stores. Further, the loading time for factorized RDF data is reduced by more than 45\% of the loading time of original RDF data in native RDF stores. 
Also, we evaluated the impact of proposed compact representations over the diverse implementations available for RDF data, i.e., native RDF implementations and non-native large-scale tabular based implementations. 
Thus, \textit{CSSD} broadens the portfolio of tools that enable to semantically enrich sensor data. As the main limitation, \textit{CSSD} can only be applied to data coming from one single device. In the future, we plan to devise data integration techniques able to merge RDF molecules generated from the factorization of heterogeneous data collected either from sensors or static data sources of observational data.  
We will apply these techniques to the energy domain to facilitate the integration and analysis of data collected from diverse energy providers. 

\section*{Acknowledgments}
Farah Karim is supported by the German Academic Exchange Service (DAAD).

%\bibliographystyle{ACM-Reference-Format}

%\bibliographystyle{spmpsci.bst}
%\bibliography{sigproc} 

\vfill
\bibliographystyle{apalike}
{\small
\bibliography{sigproc}}

\vfill
\end{document}